\documentclass[aps,onecolumn,a4paper,superscriptaddress,longbibliography,nofootinbib,notitlepage,groupedaddress]{revtex4-1}

\usepackage{lipsum}
\usepackage{setspace}

\usepackage{graphicx}
\usepackage{amsmath}
\usepackage{amssymb}
\usepackage{amsthm}
\usepackage[utf8]{inputenc}
\usepackage{comment}

\usepackage{color}
\usepackage[dvipsnames]{xcolor}

\newcommand{\norm}[1]{\left\lVert#1\right\rVert} 

\usepackage[a4paper,total={6.5in,9in}]{geometry}
\usepackage[english]{babel}
\usepackage[T1]{fontenc}
\usepackage{mathrsfs}
\usepackage{braket}
\usepackage{physics}
\usepackage{enumerate}
\usepackage{graphicx}
\usepackage{mathtools}
\usepackage[export]{adjustbox}
\usepackage{microtype}

\theoremstyle{plain}

\usepackage[colorlinks = true,
            linkcolor = red,
            urlcolor  = blue,
            citecolor = blue,
            anchorcolor = red]{hyperref}

\newtheorem{theorem}{Theorem}
\newtheorem{corollary}[theorem]{Corollary}
\newtheorem{proposition}[theorem]{Proposition}
\newtheorem{lemma}[theorem]{Lemma}
 
\newtheorem{definition}[theorem]{Definition}  
\newtheorem{remark}[theorem]{Remark} 
\newtheorem{problem}[theorem]{problem} 
\newtheorem*{remark*}{Remark}   
\renewcommand\qedsymbol{$\blacksquare$}
\newenvironment{proof-of}[1][{\hspace{-\blank}}]{{\medskip\noindent\textit{Proof~{#1}.\ }}}{\hfill\qedsymbol}

\renewcommand{\Tr}{{\operatorname{Tr}\,}}

\renewcommand{\dim}{{\operatorname{dim}}}
\newcommand{\id}{{\operatorname{id}}}
\newcommand{\proj}[1]{|#1\rangle\!\langle #1|}

\newcommand{\nc}{\newcommand}
\nc{\rnc}{\renewcommand}

\nc{\avg}[1]{\langle#1\rangle}
\nc{\Rank}{\operatorname{Rank}}
\nc{\smfrac}[2]{\mbox{$\frac{#1}{#2}$}}

\nc{\ox}{\otimes}
\nc{\dg}{\dagger}
\nc{\da}{\downarrow}
\nc{\ua}{\uparrow}
\nc{\cA}{{\cal A}}
\nc{\cB}{{\cal B}}
\nc{\cC}{{\cal C}}
\nc{\cF}{{\cal F}}
\nc{\cG}{{\cal G}}
\nc{\cH}{{\cal H}}
\nc{\cI}{{\cal I}}
\nc{\cJ}{{\cal J}}
\nc{\cK}{{\cal K}}
\nc{\cL}{{\cal L}}
\nc{\cM}{{\cal M}}
\nc{\cN}{{\cal N}}
\nc{\cO}{{\cal O}}
\nc{\cP}{{\cal P}}
\nc{\cQ}{{\cal Q}}
\nc{\cR}{{\cal R}}
\nc{\cS}{{\cal S}}
\nc{\cX}{{\cal X}}
\nc{\cY}{{\cal Y}}
\nc{\cW}{{\cal W}}
\nc{\cZ}{{\cal Z}}
\nc{\SWAP}{{\mathrm{SWAP}}}
\nc{\csupp}{{\operatorname{csupp}}}
\nc{\qsupp}{{\operatorname{qsupp}}}

\begin{document}

\title{Additivity Results for the R\'enyi-2 Entanglement of Purification}

\author{Shokoufe Faraji$^{1,2}$}
\email{s3faraji@uwaterloo.ca}
\author{Zahra Baghali Khanian$^{2,3}$}
 \email{zbkhanian@gmail.com}
\affiliation{$^{1}$Department of Physics and Astronomy, University of Waterloo, ON, Canada}
\affiliation{$^{2}$Perimeter Institute for Theoretical Physics, ON, Canada, N2L 2Y5}

\affiliation{$^{3}$Institute for Quantum Computing, University of Waterloo, ON, Canada, N2L 3G1}

\begin{abstract}

We reformulate the R\'enyi entanglement of purification as a constrained
minimum output R\'enyi entropy problem. Equivalently, for $p>1$, this
formulation can be expressed in terms of a constrained maximal output Schatten
$p$-norm. More precisely, for a completely positive map
$\Omega:\cL(B')\to\cL(A)$, we consider the quantity $\upsilon_p(\Omega)$
defined by optimizing $||(\Omega\otimes \id_E)(\sigma^{B'E})||_p$ over all
bipartite states $\sigma^{B'E}$ whose $B'$-marginal is maximally mixed.
We focus on the case $p=2$. First, we compute $\upsilon_2$ for the transpose depolarizing channel and prove that multiplicative under tensor powers. 
We then establish a general multiplicativity criterion: whenever a
completely positive map $\cN:\cL(B')\to\cL(A)$ satisfies
$\cN^\dagger\circ\cN=a\,\id_A+b\,\Tr[\cdot]\,I_d$ for some constants
$a,b\ge 0$, where $\cN^\dagger$ denotes the Hilbert-Schmidt adjoint of
$\cN$, the quantity $\upsilon_2(\cN)$ is multiplicative under tensor powers. 
Examples of channels satisfying this criterion include the transpose-depolarizing
channel, the depolarizing channel, and their respective complementary channels.
Further, we show that for every completely positive map $\Omega$, multiplicativity of $\upsilon_p(\Omega)$ implies multiplicativity for its  complementary map.
This yields the corresponding additivity statements for the
associated R\'enyi-2 entanglement of purification.

%

 
\end{abstract}

\maketitle


\section{Introduction and Problem Statement}


The entanglement of purification, introduced by Terhal, Horodecki, Leung, and DiVincenzo, provides a measure of the total correlations contained in a bipartite mixed state, including both classical and quantum contributions \cite{Terhal_2002}. Let $\rho^{AC}$ be a bipartite state and let
$\ket{\psi}^{ACB}$ be a purification of $\rho^{AC}$. 
Since the purifying system $B$ is not fixed uniquely, one may optimize over all possible ways of decomposing it into two subsystems via an isometry $U: B\to  E \otimes E'$. The entanglement of purification is then defined as the minimum entropy across the bipartition $AE:CE'$, namely
\begin{equation}
 \mathrm{EoP}(A:C)_\rho
:=
\inf_{U \, \textrm{isometry}} S(AE)_{\varphi}.   
\end{equation}
where the entropy is with respect to the state  $\ket{\varphi}^{ACEE'}=(I_{AC} \otimes U)\ket{\psi}^{ACB}$.

Equivalently, this optimization can be formulated in terms of quantum channels. 
Every decomposition of the purifying system, together with an isometry $U: B\to E \otimes E'$, induces a completely positive trace-preserving (CPTP) map $\Lambda:\cL(B)\to \cL(E)$ after tracing out $E'$. Conversely, by Stinespring dilation, every such channel arises in this way. Hence
\begin{equation}
\mathrm{EoP}(A:C)_\rho
=
\inf_{\Lambda  \, \mathrm{CPTP}}
S\,\left((\id_A\otimes \Lambda)(\rho^{AB})\right),    
\end{equation}
where $\rho^{AB}:=\Tr_C\ketbra{\psi}^{ACB}$. This channel formulation is particularly convenient for introducing R\'{e}nyi generalizations: for $p>0$, $p\neq 1$, one replaces the von Neumann entropy by the R\'{e}nyi entropy $S_p$, leading to
\begin{equation}\label{eq: EoP_p}
\mathrm{EoP}_{p}(A:C)_\rho
:=
\inf_{\Lambda  \, \mathrm{CPTP}}
S_p\,\left((\id_A\otimes \Lambda)(\rho^{AB})\right).
\end{equation}
We reformulate the R\'enyi generalization of the entanglement of purification as follows.
For any finite dimension bipartite quantum state $\rho^{AB}$, there exists a
completely positive (CP) map $\Omega:\mathcal L(B')\to \mathcal L(A)$ such that
\begin{equation}
\rho^{AB}=(\Omega\otimes \id_{B})(\Phi^{B'B}),
\end{equation}
where
\begin{equation}
\ket{\Phi}^{B'B}=\frac{1}{\sqrt{d_B}}
\sum_{i=1}^{d_B}\ket{i}^{B'}\ket{i}^{B},
\qquad
\Phi^{B'B}=\proj{\Phi}^{B'B},
\end{equation}
and $d_B=\dim B$. The map $\Omega$ is not necessarily trace-preserving.
A proof of this statement is provided in Lemma~\ref{lemma: state-CP map} in the appendix.
We use this representation of a quantum state to obtain
 \begin{align}
\inf_{\Lambda:\mathcal L(B)\to \mathcal L(E)\ \mathrm{CPTP}}
S_p\!\left((\id_A\otimes\Lambda)(\rho^{AB})\right)
&=\inf_{\Lambda:\mathcal L(B)\to \mathcal L(E)\ \mathrm{CPTP}}
S_p\!\left((\Omega\otimes\id_E)(\id_{B'}\otimes\Lambda)(\Phi^{B'B})\right).
\end{align}   
where the equality follows from the fact that the above superoperators commute.
This is reminiscent of the minimum output R\'enyi entropy of the completely
positive map $(\Omega \otimes \id_E)$ \cite{WernerHolevo2002,AmosovHolevoWerner2000,Shor2004,
HaydenWinter2008,Aubrun_Werner_Renyi2010}. However, in the present setting the
optimization is not over all input states. Rather, the input
states are restricted to those of the form
\begin{equation}
\sigma^{B'E}=(\id_{B'}\otimes \Lambda)(\Phi^{B'B}),
\end{equation}
for a CPTP map $\Lambda$. Thus, the quantity can be viewed as a
constrained version of the usual minimum output R\'enyi entropy.
Equivalently, for $p>1$, this optimization can be reexpressed as a constrained maximal output $p$-purity of Schatten $p$-norm for $\Omega\otimes \id_B$ $(\Omega \otimes \id_E)$:
\begin{align}
\inf_{\Lambda \, \mathrm{CPTP}}
S_{p}\,\left((\id_A \otimes \Lambda)(\rho^{AB})\right)
&=
\frac{p}{1-p}
\log \;
\sup_{\Lambda \, \mathrm{CPTP}} \;
\left\|(\Omega \otimes \id_E)
(\id_{B'} \otimes \Lambda)
(\Phi^{B'B})
\right\|_{p}, 
\end{align}
where, for $p>1$, $\norm{X}_p := \left(\Tr |X|^p\right)^{1/p} = \left(\Tr \left(X^\dagger X\right)^{p/2}\right)^{1/p}$ denotes the Schatten $p$-norm. In fact, as one of the central questions in quantum information theory is how strongly  a CP map degrades the purity of input states. A standard measure of output purity is the maximal output Schatten $p$-norm
\begin{equation}
\nu_p(\Phi):=\sup_{\rho}\|\Phi(\rho)\|_p,\qquad 1\le p\le \infty,
\end{equation}
where, by convexity, the supremum may be restricted to pure input states. The corresponding multiplicativity problem asks whether
\begin{equation}
\nu_p(\Omega\otimes\Omega')=\nu_p(\Omega)\nu_p(\Omega')
\end{equation}
holds for pairs of CP maps $\Omega$ and $\Omega'$. This problem was formulated by Amosov, Holevo, and Werner for completely positive trace-preserving maps (CPTP), and it is closely connected to the additivity problem for minimum output entropy and to other foundational additivity questions in quantum information theory \cite{AmosovHolevoWerner2000,Holevo2006AdditivityProblem,King2002UnitalQubitChannels,KingRuskai2004MaximalPNormsP2,Shor2004,MatsumotoShimonoWinter2004Additivity,FannesEtAl2004,AlickiFannes2004WernerHolevoRenyi,Watrous2005SuperOperatorNorms,DattaRuskai2005AsymmetricUnitalQudit,HaydenLeungWinter_GenericEntanglement2006,Datta2006IsotropicSpinChannels,Hayden2007MaximalPNormFalse,Ruskai2007OpenProblems,Hastings2009Superadditivity,BrandaoHorodecki2010HastingsCounterexamples,FukudaKingMoser2010HastingsComments,AubrunWerner_Hastings2011}.

Although multiplicativity has been established for several important classes of channels, it is now known to fail in full generality. Werner and Holevo exhibited a counterexample for sufficiently large values of $p$, and Hayden and Winter later proved that counterexamples exist for every $p>1$ \cite{WernerHolevo2002,HaydenWinter2008,Aubrun_Werner_Renyi2010}. These negative results make it especially natural to identify structured families of channels for which multiplicativity does survive and to understand the mechanisms behind such behavior.

In this paper, we study multiplicativity properties of the following quantity for a CP map $\Omega:\mathcal L(B')\to \mathcal L(A)$:
\begin{align}
\upsilon_p(\Omega):=\sup_{\Lambda \, \mathrm{CPTP}} \;
\left\|
(\Omega \otimes \id_E)
(\id_{B'} \otimes \Lambda)
(\Phi^{B'B})
\right\|_{p}.
\end{align}
Equivalently, using the Choi-Jamio{\l}kowski isomorphism, this quantity can be expressed as
\begin{align}\label{eq: upsilon_p Choi}
\upsilon_p(\Omega):=\sup_{\sigma^{B'E}: \,
\Tr_E(\sigma^{B'E})=I_{B'}/d_{B'}} \;
\left\|
(\Omega \otimes \id_E)\sigma^{B'E}
\right\|_{p}.
\end{align}
The multiplicativity property translates directly into additivity of the R\'enyi entropy in Eq.~\eqref{eq: EoP_p}.
A key distinction from the standard maximal $p$-norm problem is that the input to $\Omega\otimes \id_B$ cannot, in general, be restricted to pure states; rather, it is constrained by the structure of the underlying optimization problem. Nevertheless, the optimization can be restricted to the extreme points of the set of CPTP maps. Indeed, the objective function is convex in $\Lambda$, hence the optimization is taken as a supremum over all finite dimensional output systems $E$ and all CPTP maps $\Lambda:\mathcal L(B)\to\mathcal L(E)$.



An important example studied in the context of the maximal $p$-norm is the transpose-depolarizing channel
\begin{equation}
\Gamma_t(X)=tX^T+(1-t)\operatorname{Tr}(X)\frac{I_d}{d}. \qquad
-\frac{1}{d-1}\le t\le \frac{1}{d+1}.
\end{equation}
This family contains the Werner-Holevo channel as the endpoint $t=-1/(d-1)$ and has already played a significant role in the study of multiplicativities.
For this Werner-Holevo endpoint, Datta proved multiplicativity for all $1 < p\leq 2$ \cite{Datta2004WernerHolevo}.
For the additivity of the minimum output entropy, Fannes \textit{et al.} established the result for a substantial part of the parameter range, and Datta, Holevo, and Suhov later completed the proof for the full completely positive range \cite{FannesEtAl2004,DattaHolevoSuhov2006}.

\smallskip


Our main results are as follows. First, for the transpose-depolarizing channel $\Gamma_t$, we compute $\upsilon_2(\Gamma_t)$ exactly and prove that it is multiplicative under tensor powers , namely
\begin{equation}
\upsilon_2(\Gamma_t^{\otimes n})=\bigl(\upsilon_2(\Gamma_t)\bigr)^n,
\qquad \forall n\in\mathbb N.
\end{equation}
It means joint processing across several copies does not improve the optimal quantity beyond product strategies.

Second, for any completely positive map $\mathcal N$ and its Hilbert-Schmidt adjoint $\mathcal{N}^\dagger$ satisfying
\begin{equation}\label{eq:eq:general-N-assumption}
\mathcal N^\dagger\circ\mathcal N=a\,\id_A+b\,\Tr[\cdot]\,I_d,
\qquad a,b\ge 0,
\end{equation}
we compute $\upsilon_2(\mathcal N)$ and prove that
\begin{equation}
\upsilon_2(\mathcal N^{\otimes n})=\bigl(\upsilon_2(\mathcal N)\bigr)^n,
\qquad \forall n\in\mathbb N.
\end{equation}
As further examples, we show that the complementary channel
\(\Gamma_t^c(\cdot)\) of the transpose-depolarizing channel, the depolarizing
channel \(\Delta_p(\cdot)\), and its complementary channel
\(\Delta_p^c(\cdot)\) all satisfy this general criterion. Consequently,
\(\upsilon_2\) is multiplicative for each of these channels.


Third, we prove that for every completely positive map $\Omega$, multiplicativity of $\upsilon_p(\Omega)$ implies multiplicativity for the corresponding complementary map.

\smallskip

The paper is organized as follows. Section \ref{sec:norm} derives the exact formula for $\upsilon_2(\Gamma_t)$ by establishing matching lower and upper bounds. Section \ref{sec:multiplicity} proves that $\upsilon_2(\Gamma_t)$ is multiplicative under tensor powers. Section \ref{sec:general-criterion} establishes a general $p=2$ multiplicativity criterion for completely positive maps $N$ satisfying Eq.~\eqref{eq:eq:general-N-assumption}, proves that $\upsilon_p$ is invariant under passage to complementary completely positive maps, and applies these results to the transpose-depolarizing channel, the depolarizing channel, and their complementary channels. Section \ref{sec:summary} summarizes the main results and discusses consequences and open problems. 

\smallskip

In each section we explicitly specify the notation for states and maps.

\section{Norm of $\upsilon_2(\Gamma_t)$} \label{sec:norm}
We derive the explicit lower and upper bounds, their equality yields the exact norm. We start off with stating the question precisely as follows.


For $d\ge 2$, let $A\simeq \mathbb{C}^d$, $A'\simeq \mathbb{C}^d$, and let $B$ be an arbitrary finite dimensional output system. Fix an orthonormal basis $\{\lvert i\rangle\}_{i=1}^{d}$ of $A$ and the corresponding bases of $A'$. Define the normalized maximally entangled vector and state
\begin{equation}
\lvert\Phi_d\rangle=\frac{1}{\sqrt{d}}\sum_{i=1}^{d}\lvert i\rangle_A\otimes \lvert i\rangle_{A'},
\qquad
\Phi_d=\lvert\Phi_d\rangle\langle \Phi_d\rvert\in \mathcal{L}(A\otimes A').
\end{equation}
Let $T$ denote transpose with respect to this fixed basis. For a parameter $t$ in the CP-range
\begin{equation}
-\frac{1}{d-1}\le t\le \frac{1}{d+1},
\end{equation}
define the CPTP map as $\Gamma_t:\mathcal{L}(A)\to \mathcal{L}(A)$ by
\begin{equation}
\Gamma_t(X)=t\,X^{T}+(1-t)\,\mathrm{Tr}(X)\,\frac{I_d}{d}.
\end{equation}
(Note that the Werner-Holevo channel is the endpoint $t=-\frac{1}{d-1}$.)
Let $B$ be arbitrary finite dimension, and let $\Lambda:\mathcal{L}(A')\to \mathcal{L}(B)$ range over CPTP maps. Define
\begin{equation}
\upsilon_p(\Gamma_t):=
\sup_{\Lambda:\mathcal{L}(A')\to \mathcal{L}(B)\ \mathrm{CPTP}}
\left\|
(\Gamma_t\otimes \id_B)\bigl((\id_A\otimes \Lambda)(\Phi_d)\bigr)
\right\|_p .
\end{equation}
Equivalent Choi-Jamio\l kowski reformulation: Since $(\id_A\otimes \Lambda)(\Phi_d)$ is the normalized Choi state of $\Lambda$, the optimization is equivalently
\begin{equation}
\upsilon_p(\Gamma_t)=
\sup_{\sigma_{AB}\ge 0}\ 
\bigl\|(\Gamma_t\otimes \id_B)(\sigma_{AB})\bigr\|_p
\quad\text{s.t.}\quad
\mathrm{Tr}(\sigma_{AB})=1,\ \ \sigma_A=\frac{I_d}{d}.
\end{equation}
Conversely, for any finite dimension output system $B$ and any $\sigma_{AB}\ge 0$ with $\mathrm{Tr}\,\sigma_{AB}=1$ and $\sigma_A=I_d/d$, define
\begin{equation}
\Lambda_\sigma(X):=d\,\mathrm{Tr}_A\,\bigl[(X^{T}\otimes I_B)\,\sigma_{AB}\bigr].
\end{equation}
Then $\Lambda_\sigma$ is CPTP and satisfies $(\id\otimes \Lambda_\sigma)(\Phi_d)=\sigma_{AB}$. This is the Choi isomorphism in the normalization convention.


Before start with derivation of $\upsilon_2(\Gamma_t^{\otimes n})$, we first show that taking $\sup$ is equivalent to taking $\max$.
\begin{proposition}[Finite output reduction]\label{prop:finite-output-reduction}
Let
\begin{equation}
d_n:=d^n,
\qquad
D_n:=d_n^2=d^{2n}.
\end{equation}
Then
\begin{equation}\label{eq:finite-output-reduction}
\upsilon_2(\Gamma_t^{\otimes n})
=\max_{\Lambda^{(n)}:\mathcal{L}(A'^{(n)})\to \mathcal{L}(\mathbb{C}^{D_n})\ \mathrm{CPTP}} \left\|(\Gamma_t^{\otimes n}\otimes \id_{D_n})
\Bigl((\id_{A^{(n)}}\otimes \Lambda^{(n)})(\Phi_d^{\otimes n})\Bigr)
\right\|_2.
\end{equation}
In particular, the free output supremum is actually a maximum.
\end{proposition}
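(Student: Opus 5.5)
The proof works directly in the Choi picture. By the Choi--Jamio\l kowski reformulation, $\upsilon_2(\Gamma_t^{\otimes n})$ is the supremum of $\|(\Gamma_t^{\otimes n}\otimes\id_B)(\sigma_{A^{(n)}B})\|_2$ over all finite-dimensional $B$ and all states $\sigma_{A^{(n)}B}$ with $\sigma_{A^{(n)}}=I_{d_n}/d_n$. The plan has two parts: first cut the output dimension down to $D_n=d_n^2$ without changing the value, then show the maximum is attained by compactness.

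\textbf{Step 1 (dimension reduction).} Fix any feasible $\sigma_{A^{(n)}B}$ and write it as a convex combination of pure states,
\[
\sigma=\sum_k p_k\,\proj{\psi_k}, \qquad \ket{\psi_k}\in A^{(n)}\otimes B .
\]
The objective $\sigma\mapsto\|(\Gamma_t^{\otimes n}\otimes\id_B)(\sigma)\|_2$ is convex, because it is a norm composed with a linear map. The feasible set $\{\sigma\ge0:\Tr\sigma=1,\ \sigma_{A^{(n)}}=I/d_n\}$ is compact and convex, so the supremum is achieved on its extreme points. Here I would not use extreme points at all, since they need not have small rank in an obvious way.

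\textbf{Step 1, main argument (isometric invariance).} Let $P$ be the projector onto the support of $\sigma_B$. The state $\sigma$ is supported on $A^{(n)}\otimes \mathrm{supp}(\sigma_B)$, but $\mathrm{supp}(\sigma_B)$ can still have large dimension. Instead, purify $\sigma_{A^{(n)}B}$ to $\ket{\Psi}_{A^{(n)}BR}$. Its marginal on $A^{(n)}$ is $I/d_n$, which has rank $d_n$, so the Schmidt decomposition across $A^{(n)}:BR$ shows that $\ket\Psi$ lives in $A^{(n)}\otimes W$ with $\dim W=d_n$. Equivalently, the channel $\Lambda$ has a Stinespring isometry $V:A'^{(n)}\to B\otimes R$. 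By the Choi rank bound, $\Lambda$ admits a Kraus representation with at most $d_n\cdot d_B$ operators. That does not bound $d_B$ itself, so this is not enough by itself.

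\textbf{Step 1, the bound that actually reduces $B$.} The usable observation is that the support of $\sigma_B$ lies inside the range of $\Lambda$ restricted to an input of dimension $d_n$. The range of the Choi operator $\sigma$ has dimension at most $\operatorname{rank}\sigma\le d_n\,d_B$, which again does not help. So the correct route is this. The range of $\sigma_B$ is spanned by the vectors $(\bra{i}\otimes I)\ket{\psi_k}$, but there are arbitrarily many $k$. Carath\'eodory must then be applied not to $\sigma$ but to the quantity the norm actually depends on. The objective is $\|X\|_2^2=\Tr X^2$ with $X=(\Gamma_t^{\otimes n}\otimes\id)(\sigma)$, and
\[
\Tr X^2=\Tr\bigl[(\Gamma^\dagger\Gamma\otimes\id)(\sigma)\,\sigma\bigr].
\]
This is a function of the Gram data of $\sigma$ as an operator on $A^{(n)}\otimes B$. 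The point is that for $\sigma=\sum_{ij}\ket i\bra j\otimes Y_{ij}$, the value $\Tr X^2$ depends only on the traces $\Tr(Y_{ij}Y_{kl})$. These are the entries of the Gram matrix of the $d_n^2$ operators $Y_{ij}$ in Hilbert--Schmidt inner product, and the constraint $\Tr Y_{ij}=\delta_{ij}/d_n$ is also Gram-type data. By a standard Gram-realisation argument, positivity of $\sigma$ and all of this data can be reproduced with $B$ replaced by a space of dimension $d_n^2=D_n$. I expect this step, making the realisation preserve positivity of $\sigma$, to be the main obstacle.

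The fallback is to embed the relevant operators via $\Tr_B$ and a canonical purification. Write $\sigma=(I\otimes\sqrt{\sigma})$-type factorisations: $\sigma=M^\dagger M$ where $M$ maps $A^{(n)}\otimes B$ into a space of dimension $\operatorname{rank}\sigma$. Then use that $\Tr X^2$ is unchanged under $B\to B'$ isometries. The goal is to produce an isometry-equivalent feasible state on $A^{(n)}\otimes\mathbb C^{D_n}$ with the same or a larger objective, using $\Lambda\mapsto U\Lambda U^\dagger$ (isometric embedding, which preserves the $2$-norm) to pass between small and large $B$. This shows the supremum over all $B$ equals the supremum over $B=\mathbb C^{D_n}$, since smaller $B$ embed isometrically into larger ones.

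\textbf{Step 2 (attainment).} With $B=\mathbb C^{D_n}$ fixed, the set of CPTP maps $\Lambda^{(n)}:\mathcal L(A'^{(n)})\to\mathcal L(\mathbb C^{D_n})$ is compact, being closed and bounded via the Choi representation. The objective is continuous, so the supremum is a maximum. Combined with Step 1, this gives equation~\eqref{eq:finite-output-reduction}.
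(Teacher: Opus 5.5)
Your Step~2 (attainment by compactness for fixed $B=\mathbb{C}^{D_n}$) is fine. Step~1, however, has a genuine gap: you never show that a feasible $\sigma_{A^{(n)}B}$ on a large $B$ can be replaced by one on $\mathbb{C}^{D_n}$ with at least the same objective value. This is the only nontrivial direction. Isometric compression of $B$ alone cannot achieve it: for instance $\frac{I_{d_n}}{d_n}\otimes\frac{I_B}{d_B}$ is feasible and has $\sigma_B$ of full rank $d_B$.

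The Gram-realisation step is not a standard argument. The objective does depend only on the numbers $\Tr(Y_{ij}Y_{kl})$. But positivity of $\sum_{ij}\ket{i}\bra{j}\otimes Y_{ij}$ is not determined by these numbers, and neither is the requirement that the $Y_{ij}$ be operators on a common Hilbert space rather than abstract vectors. Realising the Gram matrix by vectors in a $d_n^2$-dimensional space therefore does not produce a positive operator on $A^{(n)}\otimes\mathbb{C}^{D_n}$; you flag this yourself and leave it open. Your fallback paragraph only proves the easy direction, that a small $B$ embeds isometrically into a larger one, so it does not give the equality.

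The missing idea is the one you discarded: extreme points of the fixed-marginal set do have small rank. Suppose $\sigma$ has rank $r$ with range $R$.
\begin{itemize}
\item The real space of Hermitian operators supported on $R$ has dimension $r^2$, while $H\mapsto\Tr_B H$ lands in a real space of dimension $d_n^2$.
\item If $r>d_n$, there is therefore some $H\neq 0$ supported on $R$ with $\Tr_B H=0$.
\item Then $\sigma\pm\epsilon H$ are both feasible for small $\epsilon$, so $\sigma$ is not extreme.
\end{itemize}
This is Choi's extremality bound: an extreme channel has Kraus rank at most $d_n$.

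Here is how to complete Step~1 with it.
\begin{itemize}
\item By convexity of the objective on the compact convex feasible set, you may replace $\sigma$ by an extreme feasible state $\sigma=\sum_{k=1}^r\proj{\psi_k}$ of rank $r\le d_n$.
\item Each $\Tr_{A^{(n)}}\proj{\psi_k}$ has rank at most $d_n$, so $\operatorname{rank}\sigma_B\le r\,d_n\le D_n$.
\item $\sigma$ is supported on $A^{(n)}\otimes\operatorname{supp}\sigma_B$. Compress by an isometry onto $\operatorname{supp}\sigma_B$; this preserves feasibility and the $2$-norm.
\item Re-embed into $\mathbb{C}^{D_n}$.
\end{itemize}
This is exactly the paper's argument, phrased there in Kraus language: an extreme channel has at most $d_n$ Kraus operators, whose ranges span a subspace of dimension at most $d_n^2$.
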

\begin{proof}
For each output dimension $m\in\mathbb{N}$, let $ \mathrm{CPTP}(d_n,m)$ denote the compact convex set of CPTP maps
\begin{equation}
\Lambda:\mathcal{L}(A'^{(n)})\to\mathcal{L}(\mathbb{C}^m).
\end{equation}
Define
\begin{equation}
f_m(\Lambda):= \left\|(\Gamma_t^{\otimes n}\otimes \id_m)
\Bigl((\id_{A^{(n)}}\otimes \Lambda)(\Phi_d^{\otimes n})\Bigr)
\right\|_2 .
\end{equation}
Since $\Lambda\mapsto (\id\otimes\Lambda)(\Phi_d^{\otimes n})$ is affine and the
Hilbert-Schmidt norm is continuous and convex, $f_m$ is continuous and convex on
$\mathrm{CPTP}(d_n,m)$.

Fix $m$ and $\Lambda\in \mathrm{CPTP}(d_n,m)$. Since $\mathrm{CPTP}(d_n,m)$ is a compact convex subset of a finite dimension vector space and $f_m$ is convex, there exists an extreme channel $\Lambda_{\mathrm{ext}}\in \mathrm{CPTP}(d_n,m)$ such that
\begin{equation}
f_m(\Lambda_{\mathrm{ext}})\ge f_m(\Lambda),
\end{equation}
and choose a Kraus representation
\begin{equation}
\Lambda_{\mathrm{ext}}(X)=\sum_{i=1}^r K_i X K_i^\dagger .
\end{equation}
Because $\Lambda_{\mathrm{ext}}$ is an extreme CPTP map, Choi's extremality criterion implies
\begin{equation}
r\le d_n.
\end{equation}
Let
\begin{equation}
S:=\sum_{i=1}^r \operatorname{ran}(K_i)\subseteq \mathbb{C}^m.
\end{equation}
Then
\begin{equation}
\dim \,S\le r\,d_n\le d_n^2=D_n.
\end{equation}
Let $W:\mathbb{C}^s\to\mathbb{C}^m$ be an isometry onto $S$, where $s:=\dim S$, and write
\begin{equation}
K_i = W L_i
\end{equation}
for suitable operators $L_i:\mathbb{C}^{d_n}\to\mathbb{C}^s$. Define
\begin{equation}
\widetilde{\Lambda}(X):=\sum_{i=1}^r L_i X L_i^\dagger .
\end{equation}
Then $\widetilde{\Lambda}$ is CPTP and
\begin{equation}
\Lambda_{\mathrm{ext}}(X)=W\,\widetilde{\Lambda}(X)\,W^\dagger
\qquad
\text{for all }X.
\end{equation}
Therefore
\begin{equation}
(\id\otimes \Lambda_{\mathrm{ext}})(\Phi_d^{\otimes n})
=(I\otimes W)\,
(\id\otimes \widetilde{\Lambda})(\Phi_d^{\otimes n})
\,(I\otimes W^\dagger),
\end{equation}
and hence
\begin{equation}
(\Gamma_t^{\otimes n}\otimes \id_m)
\bigl((\id\otimes \Lambda_{\mathrm{ext}})(\Phi_d^{\otimes n})\bigr)
=(I\otimes W)\, (\Gamma_t^{\otimes n}\otimes \id_s)
\bigl((\id\otimes \widetilde{\Lambda})(\Phi_d^{\otimes n})\bigr)
\,(I\otimes W^\dagger).
\end{equation}
Since Schatten norms are invariant under isometries,
\begin{equation}
f_m(\Lambda_{\mathrm{ext}})=f_s(\widetilde{\Lambda}),
\qquad s\le D_n.
\end{equation}
Finally, embed $\mathbb{C}^s$ isometrically into the fixed space $\mathbb{C}^{D_n}$.
This does not change the value of the objective. Thus every admissible value of the free-output problem is attained already by a CPTP map with output space $\mathbb{C}^{D_n}$. Consequently,

\begin{align}
\upsilon_2(\Gamma_t^{\otimes n})
&=\sup_{\substack{B^{(n)}\ \mathrm{finite\mbox{-}dimensional}\\
\Lambda^{(n)}:\mathcal{L}(A'^{(n)})\to \mathcal{L}(B^{(n)})\ \mathrm{CPTP}}}
\left\|
(\Gamma_t^{\otimes n}\otimes \id_{B^{(n)}})
\Bigl((\id_{A^{(n)}}\otimes \Lambda^{(n)})(\Phi_d^{\otimes n})\Bigr)
\right\|_2\nonumber\\
&=\max_{\Lambda^{(n)}:\mathcal{L}(A'^{(n)})\to \mathcal{L}(\mathbb{C}^{D_n})\ \mathrm{CPTP}}
\left\|
(\Gamma_t^{\otimes n}\otimes \id_{D_n})
\Bigl((\id_{A^{(n)}}\otimes \Lambda^{(n)})(\Phi_d^{\otimes n})\Bigr)
\right\|_2.
\end{align}
which is Eq.~\eqref{eq:finite-output-reduction}, and since $\mathrm{CPTP}(d_n,D_n)$ is compact and $f_{D_n}$ is continuous, the supremum over maps into $\mathbb{C}^{D_n}$ is attained.
\end{proof}

\subsection{Stating the norm in terms of Choi states}

Let $B$ be a finite dimension output system, and let $\Lambda:\mathcal{L}(A')\to \mathcal{L}(B)$ be a CPTP map. Define its normalized Choi state (relative to $\Phi_d$) by
\begin{equation}\label{eq:def-choi-sigma}
\sigma_{AB} := (\id_A\otimes \Lambda)(\Phi_d)\ \in\ \mathcal{L}(A\otimes B).
\end{equation}
We also use the standard marginal notation $\sigma_A:=\mathrm{Tr}_B(\sigma_{AB})$ and $\sigma_B:=\mathrm{Tr}_A(\sigma_{AB})$.
\begin{proposition}\label{prop:choi-feasible-set}
Let
\begin{equation}\label{eq:def-feasible-F}
\mathcal{F}_d(B) := \Bigl\{\sigma_{AB}\in \mathcal{L}(A\otimes B)\ :\ \sigma_{AB}\ge 0,\ \mathrm{Tr}(\sigma_{AB})=1,\ \sigma_A=\tfrac{I_d}{d}\Bigr\}.
\end{equation}
Then the map $\Lambda\mapsto\sigma_{AB}=(\id_A\otimes\Lambda)(\Phi_d)$ is an affine bijection between the set of CPTP maps $\Lambda:\mathcal{L}(A')\to\mathcal{L}(B)$ and $\mathcal{F}_d$. Consequently,
\begin{equation}\label{eq:nu-choi-form}
\upsilon_p(\Gamma_t)
=\max_{\sigma_{AB}\in \mathcal{F}_d}\ \bigl\|(\Gamma_t\otimes \id_B)(\sigma_{AB})\bigr\|_p.
\end{equation}
\end{proposition}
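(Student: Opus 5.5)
The plan is to prove the bijection in three parts: the map lands in $\mathcal{F}_d(B)$, it is injective, and it is surjective via the explicit inverse $\Lambda_\sigma$ already written down before Proposition~\ref{prop:finite-output-reduction}. The maximum formula will then follow from compactness.

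\emph{Well-definedness.} For a CPTP map $\Lambda$, the operator $\sigma_{AB}=(\id_A\otimes\Lambda)(\Phi_d)$ is positive because $\Lambda$ is completely positive and $\Phi_d\ge0$. Since $\Lambda$ is trace preserving, $\id_A\otimes\Lambda$ is as well, so $\Tr_B\sigma_{AB}=\Tr_{A'}\Phi_d=I_d/d$, and hence $\Tr\sigma_{AB}=1$. Affinity is immediate: $\Lambda\mapsto(\id_A\otimes\Lambda)(\Phi_d)$ is linear in $\Lambda$, so it preserves convex combinations.

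\emph{Injectivity.} I would expand
\[
\sigma_{AB}=\tfrac1d\sum_{i,j}\ket{i}\!\bra{j}\otimes\Lambda(\ket{i}\!\bra{j}).
\]
From this, $\Lambda(\ket{i}\!\bra{j})=d\,(\bra{i}\otimes I_B)\sigma_{AB}(\ket{j}\otimes I_B)$. The matrix units span $\mathcal{L}(A')$, so $\sigma_{AB}$ determines $\Lambda$ uniquely.

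\emph{Surjectivity.} Given $\sigma_{AB}\in\mathcal{F}_d(B)$, take $\Lambda_\sigma(X)=d\,\Tr_A[(X^T\otimes I_B)\sigma_{AB}]$.
\begin{itemize}
\item[(i)] For trace preservation, compute $\Tr\Lambda_\sigma(X)=d\Tr[X^T\sigma_A]=\Tr X^T=\Tr X$, using $\sigma_A=I_d/d$.
\item[(ii)] For complete positivity, observe that the Choi matrix of $\Lambda_\sigma$ is $\sum_{ij}\ket{i}\!\bra{j}\otimes\Lambda_\sigma(\ket{i}\!\bra{j})$. Since $(\ket{i}\!\bra{j})^T=\ket{j}\!\bra{i}$, we get $\Lambda_\sigma(\ket{i}\!\bra{j})=d\,\bra{i}\sigma_{AB}\ket{j}$. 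The Choi matrix is therefore exactly $d\,\sigma_{AB}\ge0$, and Choi's theorem gives complete positivity.
\item[(iii)] The same computation gives $(\id\otimes\Lambda_\sigma)(\Phi_d)=\sigma_{AB}$.
\end{itemize}
The only delicate point is the index bookkeeping with the transpose. I expect this to be the main (mild) obstacle, and I would check it carefully on matrix units.

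\emph{Maximum formula.} By the bijection, the supremum over CPTP maps in the definition of $\upsilon_p(\Gamma_t)$ equals the supremum of $\|(\Gamma_t\otimes\id_B)(\sigma_{AB})\|_p$ over $\mathcal{F}_d(B)$, together with the supremum over the output system $B$. For fixed $B$, the set $\mathcal{F}_d(B)$ is closed and bounded (the trace is fixed and $\sigma\ge0$), hence compact, and the objective is continuous. So the supremum over $\mathcal{F}_d(B)$ is attained.

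To remove the dependence on $B$, I would repeat the argument of Proposition~\ref{prop:finite-output-reduction} with $n=1$. By convexity of the objective the value is achieved at an extreme channel; its Kraus rank is at most $d$, so the output can be taken inside $\mathbb{C}^{d^2}$. Isometric invariance of the Schatten norm lets us embed all output spaces into this fixed one. Note that the reduction argument used only convexity, continuity and isometric invariance, not $p=2$ specifically, so it holds for every $p$. This turns the supremum into a maximum over $\mathcal{F}_d(\mathbb{C}^{d^2})$, which is \eqref{eq:nu-choi-form}.
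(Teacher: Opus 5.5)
Your proposal is correct and follows essentially the same route as the paper: the forward map lands in $\mathcal{F}_d(B)$, the explicit inverse $\Lambda_\sigma$ has Choi matrix $d\,\sigma_{AB}\ge 0$ and is trace preserving because $\sigma_A=I_d/d$, and uniqueness follows by reading $\Lambda$ off $\sigma_{AB}$ on matrix units. Your extra step is also valid and slightly goes beyond the paper: you rerun the argument of Proposition~\ref{prop:finite-output-reduction} for general $p$, which needs only convexity, continuity and isometric invariance of Schatten norms together with Choi's bound of Kraus rank at most $d$ for extreme channels, whereas the paper's proof only secures attainment of the maximum for each fixed $B$ (cf.\ Remark~\ref{rem:nu2-attainment}) and leaves the supremum over output systems implicit.
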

\begin{proof}
(i) Every CPTP map produces a point of $\mathcal{F}_d$:
Let $\Lambda$ be CPTP and define $\sigma_{AB}$ as in relation Eq.~\eqref{eq:def-choi-sigma}.
Since $\Phi_d\ge 0$ and $\id_A\otimes \Lambda$ is completely positive, therefore $\sigma_{AB}\ge 0$.
Moreover, because $\id_A\otimes\Lambda$ is trace-preserving we have $\mathrm{Tr}(\sigma_{AB})=\mathrm{Tr}(\Phi_d)=1$. Now, by using trace-preservation of $\Lambda$, we obtain
\begin{equation}
\sigma_A =\mathrm{Tr}_B\bigl((\id_A\otimes \Lambda)(\Phi_d)\bigr) =(\id_A\otimes \mathrm{Tr})(\Phi_d)=\frac{I_d}{d},
\end{equation}
where the last identity is a direct computation from the definition of $\Phi_d$. Thus, $\sigma_{AB}\in \mathcal{F}_d(B)$.

\smallskip
(ii) Every point of $\mathcal{F}_d$ is the Choi state of a unique CPTP map:
Fix a finite dimension output system $B$ and $\sigma_{AB}\in \mathcal{F}_d$ and define a linear map $\Lambda_\sigma:\mathcal{L}(A')\to\mathcal{L}(B)$ by
\begin{equation}\label{eq:def-Lambda-sigma}
\Lambda_\sigma(X):= d\,\mathrm{Tr}_A\Bigl[(X^{T}\otimes I_B)\,\sigma_{AB}\Bigr],\qquad X\in \mathcal{L}(A').
\end{equation}
By construction, $\Lambda_\sigma$ is completely positive (this is the Choi theorem: $\sigma_{AB}\ge 0$ is exactly the Choi matrix of $\Lambda_\sigma$ under the normalization fixed by $\Phi_d$). Also, $\Lambda_\sigma$ is trace-preserving, since
\begin{equation}
\mathrm{Tr}\bigl(\Lambda_\sigma(X)\bigr)
=d\,\mathrm{Tr}\Bigl[(X^{T}\otimes I_B)\,\sigma_{AB}\Bigr]
=d\,\mathrm{Tr}\Bigl[X^{T}\,\sigma_A\Bigr]
=d\,\mathrm{Tr}\Bigl[X^{T}\,\tfrac{I_d}{d}\Bigr]
=\mathrm{Tr}(X).
\end{equation}
Thus, $\Lambda_\sigma$ is CPTP.

Now, we need to show that $\sigma_{AB}=(\id_A\otimes \Lambda_\sigma)(\Phi_d)$. By using $\Phi_d=\frac{1}{d}\sum_{i,j=1}^d |i\rangle\langle j|\otimes |i\rangle\langle j|$ and relation Eq.~\eqref{eq:def-Lambda-sigma}, we have
\begin{equation}
(\id_A\otimes \Lambda_\sigma)(\Phi_d)
=\frac{1}{d}\sum_{i,j=1}^d |i\rangle\langle j|\otimes \Lambda_\sigma(|i\rangle\langle j|)
=\sum_{i,j=1}^d |i\rangle\langle j|\otimes \mathrm{Tr}_A\Bigl[(|j\rangle\langle i|\otimes I_B)\sigma_{AB}\Bigr].
\end{equation}
Writing $\sigma_{AB}=\sum_{k,\ell=1}^d |k\rangle\langle \ell|\otimes \sigma_{k\ell}$ in $A$-blocks, the partial trace identity
$\mathrm{Tr}_A[(|j\rangle\langle i|\otimes I_B)\sigma_{AB}]=\sigma_{ij}$
yields
\begin{equation}
(\id_A\otimes \Lambda_\sigma)(\Phi_d)
=\sum_{i,j=1}^d |i\rangle\langle j|\otimes \sigma_{ij}=\sigma_{AB}.
\end{equation}
Finally, uniqueness follows because the Choi correspondence is injective under the fixed normalization: if $(\id\otimes\Lambda_1)(\Phi_d)=(\id\otimes\Lambda_2)(\Phi_d)$ then $\Lambda_1=\Lambda_2$ by the reconstruction formula Eq.~\eqref{eq:def-Lambda-sigma}.
This proves the affine bijection This proves the affine bijection for each fixed finite dimension output system $B$, and hence the reformulation \eqref{eq:nu-choi-form}.
\end{proof}

\begin{remark}\label{rem:nu2-attainment}
For each fixed finite dimension output system $B$, the set $\mathcal{F}_d(B)$ is compact and the map
\begin{equation}
\sigma\mapsto \|(\Gamma_t\otimes \id_B)(\sigma)\|_2
\end{equation}
is continuous. Hence, for each fixed $B$, the inner maximum in \eqref{eq:nu-choi-form} is attained.
\end{remark}
Now for simplicity, we express the output $(\Gamma_t\otimes \id_B)(\sigma_{AB})$ directly in terms of $\sigma_{AB}$ and its marginals as \begin{equation}\label{eq:affine-output}
(\Gamma_t\otimes \id_B)(\sigma_{AB})
=t\,\sigma_{AB}^{T_A}+(1-t)\,\frac{I_d}{d}\otimes \sigma_B,
\end{equation}
where $T_A$ denotes transpose on subsystem $A$ with respect to the fixed basis $\{|i\rangle\}_{i=1}^d$.


\subsection{Commuting the local maps and introducing a fixed seed operator}

The definition of $\omega_{AB}(\Lambda)$ involves first applying $\id_A\otimes \Lambda$ to the maximally entangled state and then applying $\Gamma_t\otimes \id_B$. Since $\Gamma_t$ acts only on the $A$-register while $\Lambda$ acts only on the $A'$-register, these operations commute in the natural sense. This allows us to rewrite the optimization as a post-processing problem: apply an arbitrary channel to one subsystem of a fixed bipartite operator determined solely by $(\Gamma_t)$. Let's define the seed operator as
\begin{equation}\label{eq:def-seed-tau}
\tau_{AA'} := (\Gamma_t\otimes \id_{A'})(\Phi_d)\ \in\ \mathcal{L}(A\otimes A').
\end{equation}
Because of the commuting operators we can rewrite $\upsilon_p(\Gamma_t)$ as
\begin{equation}\label{eq:omega-as-postprocessing}
\omega_{AB}(\Lambda) =(\id_A\otimes \Lambda)(\tau_{AA'}).
\end{equation}
Consequently,
\begin{equation}\label{eq:nu-as-postprocessing}
\upsilon_p(\Gamma_t)=\max_{\Lambda\ \mathrm{CPTP}}
\ \bigl\|(\id_A\otimes \Lambda)(\tau_{AA'})\bigr\|_p.
\end{equation}
Now problem reduces to maximizing the output Schatten norm of $(\id_A\otimes \Lambda)(\tau_{AA'})$. To proceed, we require an explicit expression for $\tau_{AA'}$ in the fixed basis. 
This will later allow us to analyze its spectrum and identify which structural features of  $\tau_{AA'}$ can be exploited by the optimizing channel $\Lambda$. Further, the operator $\tau_{AA'}$ defined in Eq.~\eqref{eq:def-seed-tau} admits the expansion
\begin{equation}\label{eq:tau-explicit}
\tau_{AA'}=\frac{t}{d}\sum_{i,j=1}^d
\bigl(\,|j\rangle\langle i|\,\bigr)_A\ \otimes\ \bigl(\,|i\rangle\langle j|\,\bigr)_{A'}
\;+\; \frac{1-t}{d^2}\, I_d\otimes I_d.
\end{equation}


\subsubsection{Spectral decomposition of the seed operator $\tau_{AA'}$}

The reformulation Eq.~\eqref{eq:nu-as-postprocessing} shows that the optimization concerns the action of $\id_A\otimes\Lambda$ on the fixed operator $\tau_{AA'}$. In order to obtain sharp bounds (and later to identify candidate optimizers), it is essential to know the spectrum of $\tau_{AA'}$ explicitly. Since $\tau_{AA'}$ is a linear combination of the identity and the flip operator, its spectrum can be computed by decomposing $A\otimes A'$ into two orthogonal invariant subspaces.

Now define the flip operator $\Pi_{AA'}\in\mathcal{L}(A\otimes A')$ by
\begin{equation}\label{eq:def-Pi}
\Pi_{AA'} \;:=\; \sum_{i,j=1}^d
\bigl(\,|j\rangle\langle i|\,\bigr)_A\ \otimes\ \bigl(\,|i\rangle\langle j|\,\bigr)_{A'},
\end{equation}
in which for every $k,\ell\in\{1,\dots,d\}$,
\begin{equation}\label{eq:Pi-action}
\Pi_{AA'}\bigl(|k\rangle_A\otimes|\ell\rangle_{A'}\bigr)
=|\ell\rangle_A\otimes|k\rangle_{A'}.
\end{equation}
In particular, $\Pi_{AA'}$ is Hermitian and unitary, and satisfies $\Pi_{AA'}^2=I_d\otimes I_d$. Then Eq.~\eqref{eq:tau-explicit} may be written equivalently as
\begin{equation}\label{eq:tau-as-linear-combination}
\tau_{AA'} \;=\; \frac{t}{d}\,\Pi_{AA'} \;+\; \frac{1-t}{d^2}\,I_d\otimes I_d.
\end{equation}

\begin{definition}[Two invariant subspaces of $A\otimes A'$]\label{def:sym-antisym}
Define subspaces $\mathsf{S},\mathsf{A}\subseteq A\otimes A'$ by
\begin{align}
\mathsf{S}
&:= \mathrm{span}\Bigl(\ \{|i\rangle\otimes|i\rangle:\ 1\le i\le d\}\ \cup\
\Bigl\{\frac{|i\rangle\otimes|j\rangle+|j\rangle\otimes|i\rangle}{\sqrt{2}}:\ 1\le i<j\le d\Bigr\}\ \Bigr),
\label{eq:def-S}\\[2mm]
\mathsf{A}
&:= \mathrm{span}\Bigl(\ \Bigl\{\frac{|i\rangle\otimes|j\rangle-|j\rangle\otimes|i\rangle}{\sqrt{2}}:\ 1\le i<j\le d\Bigr\}\ \Bigr).
\label{eq:def-A}
\end{align}
\end{definition}

\begin{lemma}[Orthogonal decomposition and dimensions]\label{lem:orth-decomp-dim}
The subspaces $\mathsf{S}$ and $\mathsf{A}$ are orthogonal, invariant under $\Pi_{AA'}$, and
\begin{equation}
A\otimes A'=\mathsf{S}\oplus \mathsf{A}.
\end{equation}
Moreover,
\begin{equation}
\dim(\mathsf{S})=\frac{d(d+1)}{2},
\qquad
\dim(\mathsf{A})=\frac{d(d-1)}{2}.
\end{equation}
\end{lemma}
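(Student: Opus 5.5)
The plan is to work directly with the explicit spanning vectors in Definition~\ref{def:sym-antisym} and the action of the flip in Eq.~\eqref{eq:Pi-action}. Write $s_{ii}:=|i\rangle\otimes|i\rangle$, and for $i<j$ set $s_{ij}:=(|i\rangle\otimes|j\rangle+|j\rangle\otimes|i\rangle)/\sqrt2$ and $a_{ij}:=(|i\rangle\otimes|j\rangle-|j\rangle\otimes|i\rangle)/\sqrt2$. The first step is invariance. By Eq.~\eqref{eq:Pi-action}, $\Pi_{AA'}s_{ii}=s_{ii}$, $\Pi_{AA'}s_{ij}=s_{ij}$ and $\Pi_{AA'}a_{ij}=-a_{ij}$. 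Thus every spanning vector of $\mathsf{S}$ is a $+1$ eigenvector and every spanning vector of $\mathsf{A}$ is a $-1$ eigenvector of $\Pi_{AA'}$. Invariance of both subspaces follows by linearity.

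The second step is orthogonality, which I would get in two ways. Abstractly, $\Pi_{AA'}$ is Hermitian (noted after Eq.~\eqref{eq:def-Pi}), so eigenvectors for the distinct eigenvalues $\pm1$ are orthogonal. Concretely, one can also compute inner products in the product basis $\{|k\rangle\otimes|\ell\rangle\}$. Vectors indexed by different unordered pairs $\{i,j\}$ have disjoint supports and are therefore orthogonal. For the same pair,
\[
\langle s_{ij},a_{ij}\rangle=\tfrac12(1-1)=0 .
\]
The same support argument shows that the listed vectors are orthonormal within each family.

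The third step is the direct sum and the dimensions. The listed vectors $\{s_{ii}\}\cup\{s_{ij}\}_{i<j}\cup\{a_{ij}\}_{i<j}$ form an orthonormal family, so they are linearly independent. Counting gives
\[
\dim\mathsf{S}=d+\binom d2=\frac{d(d+1)}2,\qquad \dim\mathsf{A}=\binom d2=\frac{d(d-1)}2 .
\]
These sum to $d^2=\dim(A\otimes A')$, so $\mathsf{S}\oplus\mathsf{A}$ is an orthogonal direct sum filling the whole space. Alternatively, one can check spanning directly: $|i\rangle\otimes|j\rangle=(s_{ij}+a_{ij})/\sqrt2$ and $|j\rangle\otimes|i\rangle=(s_{ij}-a_{ij})/\sqrt2$ for $i<j$.

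There is no real obstacle here. The only point needing care is the bookkeeping that the listed vectors are orthonormal, and not merely spanning, so that the dimension count is exact. The disjoint-support observation settles this.
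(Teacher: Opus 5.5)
Your proposal is correct and follows essentially the same route as the paper: verify orthonormality of the listed spanning vectors by direct inner products, count $d+\binom d2$ and $\binom d2$ to reach $d^2$ and hence the direct sum, and apply the flip to the spanning vectors for invariance. Your extra remarks, orthogonality via Hermiticity of $\Pi_{AA'}$ and the explicit spanning identities $|i\rangle\otimes|j\rangle=(s_{ij}+a_{ij})/\sqrt2$, are valid and only add redundancy.
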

\begin{proof}
The spanning sets in \eqref{eq:def-S} and \eqref{eq:def-A} are orthonormal by direct inner product computation, and every vector from the spanning set of $\mathsf{S}$ is orthogonal to every vector from the spanning set of
$\mathsf{A}$; hence $\mathsf{S}\perp \mathsf{A}$. Counting the displayed orthonormal spanning vectors gives the stated dimensions: there are $d$ diagonal vectors $|i\rangle\otimes|i\rangle$ and $\binom{d}{2}$ symmetrized off-diagonal vectors in \eqref{eq:def-S}, gives $\dim(\mathsf{S})=\frac{d(d+1)}{2}$, and for the antisymmetrized off-diagonal $\dim(\mathsf{A})=\frac{d(d-1)}{2}.$ Therefore $\dim(\mathsf{S})+\dim(\mathsf{A})=d^2=\dim(A\otimes A'),$ and since $\mathsf{S}\perp \mathsf{A}$, we obtain
\begin{equation}
A\otimes A'=\mathsf{S}\oplus \mathsf{A}.
\end{equation}
To see invariance under $\Pi_{AA'}$, apply \eqref{eq:Pi-action} to the spanning vectors: for every $i$, $\Pi_{AA'}(|i\rangle\otimes|i\rangle)=|i\rangle\otimes|i\rangle,$ and for $1\le i<j\le d$,
\begin{equation}
\Pi_{AA'}\frac{|i\rangle\otimes|j\rangle\pm|j\rangle\otimes|i\rangle}{\sqrt2}
=\frac{|j\rangle\otimes|i\rangle\pm|i\rangle\otimes|j\rangle}{\sqrt2},
\end{equation}
Hence $\Pi_{AA'}\mathsf{S}\subseteq\mathsf{S}$ and $\Pi_{AA'}\mathsf{A}\subseteq\mathsf{A}$.
\end{proof}
\begin{lemma}[Spectrum of $\Pi_{AA'}$]\label{lem:Pi-spectrum}
The operator $\Pi_{AA'}$ has eigenvalue $+1$ on $\mathsf{S}$ and eigenvalue $-1$ on $\mathsf{A}$. Equivalently,
\begin{equation}
\Pi_{AA'}|_{\mathsf{S}} = +I_{\mathsf{S}},
\qquad \Pi_{AA'}|_{\mathsf{A}} = -I_{\mathsf{A}}.
\end{equation}
\end{lemma}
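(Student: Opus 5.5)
The plan is to verify the eigenvalue claims directly on the orthonormal spanning vectors from Definition~\ref{def:sym-antisym}, then extend by linearity. By Eq.~\eqref{eq:Pi-action}, $\Pi_{AA'}(|k\rangle\otimes|\ell\rangle)=|\ell\rangle\otimes|k\rangle$ for all basis vectors.

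The computations on the spanning vectors are immediate. For the diagonal vectors, $\Pi_{AA'}(|i\rangle\otimes|i\rangle)=|i\rangle\otimes|i\rangle$. For $i<j$, swapping the two tensor factors gives
\begin{equation}
\Pi_{AA'}\frac{|i\rangle\otimes|j\rangle\pm|j\rangle\otimes|i\rangle}{\sqrt2}
=\frac{|j\rangle\otimes|i\rangle\pm|i\rangle\otimes|j\rangle}{\sqrt2}
=\pm\,\frac{|i\rangle\otimes|j\rangle\pm|j\rangle\otimes|i\rangle}{\sqrt2}.
\end{equation}
So every spanning vector of $\mathsf{S}$ is an eigenvector with eigenvalue $+1$, and every spanning vector of $\mathsf{A}$ is an eigenvector with eigenvalue $-1$.

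Since $\Pi_{AA'}$ is linear, these identities extend to the spans. This gives $\Pi_{AA'}|_{\mathsf{S}}=+I_{\mathsf{S}}$ and $\Pi_{AA'}|_{\mathsf{A}}=-I_{\mathsf{A}}$.

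Lemma~\ref{lem:orth-decomp-dim} gives $A\otimes A'=\mathsf{S}\oplus\mathsf{A}$, so the restrictions above determine $\Pi_{AA'}$ completely. Hence $\Pi_{AA'}=P_{\mathsf{S}}-P_{\mathsf{A}}$, where $P_{\mathsf{S}}$ and $P_{\mathsf{A}}$ are the orthogonal projectors onto the two subspaces. In particular, the spectrum of $\Pi_{AA'}$ is exactly $\{+1,-1\}$, with multiplicities $d(d+1)/2$ and $d(d-1)/2$. This matches the stated "equivalently" form.

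There is no real obstacle here. The only point needing care is the sign bookkeeping in the antisymmetric case: swapping the factors turns $|i\rangle|j\rangle-|j\rangle|i\rangle$ into $|j\rangle|i\rangle-|i\rangle|j\rangle$, which is the negative of the original vector.
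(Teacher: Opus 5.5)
Your proposal is correct and follows essentially the same route as the paper. Both check the action of $\Pi_{AA'}$ on the orthonormal spanning vectors of $\mathsf{S}$ and $\mathsf{A}$ via Eq.~\eqref{eq:Pi-action} and then extend by linearity. Your additional remark that $\Pi_{AA'}=P_{\mathsf{S}}-P_{\mathsf{A}}$ (using $A\otimes A'=\mathsf{S}\oplus\mathsf{A}$) is a harmless strengthening, which the paper uses implicitly in Proposition~\ref{prop:tau-spectrum}.
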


\begin{proof}
By the invariance statements in Lemma~\ref{lem:orth-decomp-dim}, it suffices to evaluate $\Pi_{AA'}$ on the spanning vectors in Eq.~\eqref{eq:def-S} and Eq.~\eqref{eq:def-A}$.$
By the last two displays in the proof of Lemma~\ref{lem:orth-decomp-dim}, $\Pi_{AA'}$ fixes each spanning vector of $\mathsf{S}$ and negates each spanning vector of $\mathsf{A}$. Since these spanning sets are bases of $\mathsf{S}$ and $\mathsf{A}$, the claimed eigenvalue relations follow.
\end{proof}
\begin{proposition}[Spectral decomposition of $\tau_{AA'}$]\label{prop:tau-spectrum}
Let $\tau_{AA'}$ be given by Eq.~\eqref{eq:tau-as-linear-combination}. Then $\tau_{AA'}$ has exactly two eigenvalues:
\begin{align}\label{eq:tau-eigs}
\lambda_{+}(\Gamma_t)&=\frac{1+t(d-1)}{d^2}
\quad \text{with multiplicity}\quad \frac{d(d+1)}{2},\nonumber\\
\lambda_{-}(\Gamma_t)&=\frac{1-t(d+1)}{d^2}
\quad\text{with multiplicity}\quad \frac{d(d-1)}{2}.
\end{align}
Moreover, $\tau_{AA'}\ge 0$ if and only if $-\frac{1}{d-1}\le t\le \frac{1}{d+1}$.
\end{proposition}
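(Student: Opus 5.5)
The plan is to read the spectrum of $\tau_{AA'}$ directly off Eq.~\eqref{eq:tau-as-linear-combination} together with Lemmas~\ref{lem:orth-decomp-dim} and~\ref{lem:Pi-spectrum}. Since $\tau_{AA'}$ is a polynomial (indeed an affine function) of the Hermitian unitary $\Pi_{AA'}$, the decomposition $A\otimes A'=\mathsf{S}\oplus\mathsf{A}$ into the $\pm1$ eigenspaces of $\Pi_{AA'}$ also diagonalizes $\tau_{AA'}$. Concretely, I write
\begin{equation*}
\Pi_{AA'}=P_{\mathsf{S}}-P_{\mathsf{A}},\qquad I_d\otimes I_d=P_{\mathsf{S}}+P_{\mathsf{A}},
\end{equation*}
where $P_{\mathsf{S}},P_{\mathsf{A}}$ are the orthogonal projectors onto $\mathsf{S}$ and $\mathsf{A}$. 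Substituting gives
\begin{equation*}
\tau_{AA'}=\Bigl(\frac{t}{d}+\frac{1-t}{d^2}\Bigr)P_{\mathsf{S}}+\Bigl(-\frac{t}{d}+\frac{1-t}{d^2}\Bigr)P_{\mathsf{A}}.
\end{equation*}
Putting each coefficient over the common denominator $d^2$ yields $\frac{1+t(d-1)}{d^2}$ and $\frac{1-t(d+1)}{d^2}$. The multiplicities are the ranks of the projectors, $\dim\mathsf{S}=d(d+1)/2$ and $\dim\mathsf{A}=d(d-1)/2$, both given by Lemma~\ref{lem:orth-decomp-dim}.

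One subtlety concerns the phrase \emph{exactly two eigenvalues}. The two values coincide iff $2t/d=0$, that is, iff $t=0$. At $t=0$ the channel is fully depolarizing and $\tau_{AA'}=I/d^2$, so I will state that for $t\neq0$ the eigenvalues are distinct. At $t=0$ the formula still holds with the two values merging. For $d\ge2$ the space $\mathsf{A}$ is nonzero, so both eigenspaces are genuinely present.

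For positivity, note that $\tau_{AA'}\ge0$ iff both eigenvalues are nonnegative:
\begin{itemize}
\item $1+t(d-1)\ge0$, which is equivalent to $t\ge-\frac{1}{d-1}$;
\item $1-t(d+1)\ge0$, which is equivalent to $t\le\frac{1}{d+1}$.
\end{itemize}
This gives the stated CP-range. I might also remark that, by Choi's theorem, this is exactly complete positivity of $\Gamma_t$, since $\tau_{AA'}$ is its normalized Choi matrix.

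There is no real obstacle. The only care needed is the identification $I=P_{\mathsf{S}}+P_{\mathsf{A}}$, which follows from the orthogonal direct sum in Lemma~\ref{lem:orth-decomp-dim}, and the degenerate case $t=0$.
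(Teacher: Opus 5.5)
Your proof is correct and follows essentially the same route as the paper: both use the $\pm1$ eigenspaces $\mathsf{S},\mathsf{A}$ of $\Pi_{AA'}$ from Lemmas~\ref{lem:orth-decomp-dim} and~\ref{lem:Pi-spectrum} to read off the two scalar values of $\tau_{AA'}$, with multiplicities $\dim\mathsf{S}$ and $\dim\mathsf{A}$, and then impose nonnegativity of both eigenvalues. Your remark that the two eigenvalues coincide at $t=0$ (so ``exactly two'' should be read as ``distinct for $t\neq0$'') is a valid refinement that the paper's proof passes over.
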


\begin{proof}
By Eq.~\eqref{eq:tau-as-linear-combination} and Lemma~\ref{lem:Pi-spectrum}, $\tau_{AA'}$ acts as a scalar on each of the invariant subspaces $\mathsf{S}$ and $\mathsf{A}$:
on $\mathsf{S}$,
\begin{equation}
\tau_{AA'}|_{\mathsf{S}}
=
\Bigl(\frac{t}{d}\cdot (+1) + \frac{1-t}{d^2}\Bigr) I_{\mathsf{S}}
=
\frac{td + (1-t)}{d^2}\,I_{\mathsf{S}}
=
\frac{1+t(d-1)}{d^2}\,I_{\mathsf{S}},
\end{equation}
and on $\mathsf{A}$,
\begin{equation}
\tau_{AA'}|_{\mathsf{A}}
=
\Bigl(\frac{t}{d}\cdot (-1) + \frac{1-t}{d^2}\Bigr) I_{\mathsf{A}}
=
\frac{-td + (1-t)}{d^2}\,I_{\mathsf{A}}
=
\frac{1-t(d+1)}{d^2}\,I_{\mathsf{A}}.
\end{equation}
Thus the eigenvalues are as in Eq.~\eqref{eq:tau-eigs}, and their multiplicities are the dimensions from Lemma~\ref{lem:orth-decomp-dim}.

Finally, $\tau_{AA'}\ge 0$ holds if and only if both eigenvalues in Eq.~\eqref{eq:tau-eigs} are nonnegative, i.e.,
$1+t(d-1)\ge 0$ and $1-t(d+1)\ge 0$, which is equivalent to
$-\frac{1}{d-1}\le t\le \frac{1}{d+1}$.
\end{proof}


\subsection{Explicit lower bounds}
For the lower bounds, we do not restrict to the case $p=2$; the same argument works for every $p\in[1,\infty]$. By Eq.~\eqref{eq:nu-as-postprocessing} we have
\begin{equation}
\upsilon_p(\Gamma_t)=\max_{\Lambda\ \mathrm{CPTP}}\ \bigl\|(\id_A\otimes\Lambda)(\tau_{AA'})\bigr\|_p,
\end{equation}
so any explicit choice of CPTP map $\Lambda$ produces an explicit lower bound on $\upsilon_p(\Gamma_t)$. We explore two benchmark choices: the identity channel and pure state constant output channels, and evaluate the resulting Schatten norms in closed form.

\begin{lemma}[Closed form for $\|\tau_{AA'}\|_p$]\label{cor:tau-p-norm}
For $1\le p<\infty$,
\begin{equation}\label{eq:tau-p-norm-finite}
\|\tau_{AA'}\|_p=\Biggl(\frac{d(d+1)}{2}\,\lambda_+(d,t)^p+\frac{d(d-1)}{2}\,\lambda_-(d,t)^p\Biggr)^{\,1/p},
\end{equation}
and for $p=\infty$,
\begin{equation}\label{eq:tau-p-norm-infty}
\|\tau_{AA'}\|_\infty=\max\{\lambda_+(d,t),\lambda_-(d,t)\},
\end{equation}
where $\lambda_\pm(d,t)$ are the eigenvalues from Proposition~\ref{prop:tau-spectrum}.
\end{lemma}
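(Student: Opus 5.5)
The claim follows directly from Proposition~\ref{prop:tau-spectrum} and the definition of the Schatten norm for positive semidefinite operators.

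First, for $t$ in the CP-range $-\frac{1}{d-1}\le t\le \frac{1}{d+1}$, Proposition~\ref{prop:tau-spectrum} gives $\tau_{AA'}\ge 0$. In particular $\tau_{AA'}$ is Hermitian, so $|\tau_{AA'}|=(\tau_{AA'}^\dagger\tau_{AA'})^{1/2}=\tau_{AA'}$. Hence its singular values coincide with its eigenvalues $\lambda_\pm(d,t)$, which are both nonnegative.

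Second, use the spectral decomposition from Proposition~\ref{prop:tau-spectrum}. With $P_{\mathsf S}$ and $P_{\mathsf A}$ the orthogonal projections onto the subspaces of Lemma~\ref{lem:orth-decomp-dim}, we have $\tau_{AA'}=\lambda_+P_{\mathsf S}+\lambda_-P_{\mathsf A}$. Since $P_{\mathsf S}$ and $P_{\mathsf A}$ are orthogonal projections summing to the identity, functional calculus gives $|\tau_{AA'}|^p=\lambda_+^pP_{\mathsf S}+\lambda_-^pP_{\mathsf A}$. Taking the trace and using $\Tr P_{\mathsf S}=\dim\mathsf S=\frac{d(d+1)}{2}$ and $\Tr P_{\mathsf A}=\dim\mathsf A=\frac{d(d-1)}{2}$ yields $\Tr|\tau_{AA'}|^p=\frac{d(d+1)}{2}\lambda_+^p+\frac{d(d-1)}{2}\lambda_-^p$. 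Taking the $1/p$-th power gives Eq.~\eqref{eq:tau-p-norm-finite}.

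Third, for $p=\infty$ the operator norm equals the largest singular value, which here is $\max\{\lambda_+,\lambda_-\}$. This gives Eq.~\eqref{eq:tau-p-norm-infty}.

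The only subtle point is the nonnegativity of the eigenvalues. Without it we would need $|\lambda_\pm|$ in place of $\lambda_\pm$. It is guaranteed precisely by the CP-range condition in Proposition~\ref{prop:tau-spectrum}.
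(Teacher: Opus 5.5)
Your proposal is correct and is essentially the paper's own argument. Positivity of $\tau_{AA'}$ on the CP-range (Proposition~\ref{prop:tau-spectrum}) makes the singular values equal the eigenvalues $\lambda_\pm$, with multiplicities $\dim\mathsf S$ and $\dim\mathsf A$, and both formulas then follow from the definitions of the Schatten and operator norms. Your explicit computation with the projections $P_{\mathsf S}$, $P_{\mathsf A}$ spells out what the paper simply states as following from the definitions.
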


\begin{proof}
By Proposition~\ref{prop:tau-spectrum}, $\tau_{AA'}$ has eigenvalue $\lambda_+(d,t)$ with multiplicity $\frac{d(d+1)}{2}$ and eigenvalue $\lambda_-(d,t)$ with multiplicity $\frac{d(d-1)}{2}$. Since $t$ is in the CP-range, $\tau_{AA'}\ge0$ and its singular values coincide with its eigenvalues. The formula Eq.~\eqref{eq:tau-p-norm-finite} follows from the definition of the Schatten $p$-norm, and Eq.~\eqref{eq:tau-p-norm-infty} follows from the definition of the operator norm as the largest eigenvalue for positive semidefinite operators.
\end{proof}

\begin{lemma}[Pure state constant output channel]\label{lem:constant-lower-bound}
Fix a unit vector $|\psi\rangle\in B$ and define the CPTP map $\Lambda_\psi:\mathcal{L}(A')\to\mathcal{L}(B)$ by
\begin{equation}
\Lambda_\psi(X):=\mathrm{Tr}(X)\,|\psi\rangle\langle\psi|.
\end{equation}
Then
\begin{equation}\label{eq:constant-output-tau}
(\id_A\otimes \Lambda_\psi)(\tau_{AA'})
=
\frac{I_d}{d}\otimes |\psi\rangle\langle\psi|,
\end{equation}
\end{lemma}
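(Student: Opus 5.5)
The identity follows by applying $\id_A\otimes\Lambda_\psi$ to the explicit form of the seed operator in Eq.~\eqref{eq:tau-as-linear-combination}, using linearity and the fact that $\Lambda_\psi$ acts as $\mathrm{Tr}(\cdot)\,|\psi\rangle\langle\psi|$ on the $A'$ factor. A cleaner route uses the structural observation that, for any product operator $X_A\otimes Y_{A'}$,
\begin{equation*}
(\id_A\otimes\Lambda_\psi)(X_A\otimes Y_{A'})=\mathrm{Tr}(Y)\,X_A\otimes|\psi\rangle\langle\psi|.
\end{equation*}
By linearity this extends to every operator on $A\otimes A'$, giving
\begin{equation*}
(\id_A\otimes\Lambda_\psi)(Z_{AA'})=\mathrm{Tr}_{A'}(Z_{AA'})\otimes|\psi\rangle\langle\psi|.
\end{equation*}
The whole computation therefore reduces to evaluating the partial trace $\mathrm{Tr}_{A'}\tau_{AA'}$.

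This partial trace can be computed in two independent ways, which serve as a consistency check. First, by definition $\tau_{AA'}=(\Gamma_t\otimes\id_{A'})(\Phi_d)$, so
\begin{equation*}
\mathrm{Tr}_{A'}\tau_{AA'}=\Gamma_t(\mathrm{Tr}_{A'}\Phi_d)=\Gamma_t(I_d/d).
\end{equation*}
Since $(I_d/d)^T=I_d/d$ and $\mathrm{Tr}(I_d/d)=1$, this equals $t\,I_d/d+(1-t)\,I_d/d=I_d/d$. Second, directly from Eq.~\eqref{eq:tau-as-linear-combination}: the partial trace over $A'$ of the flip $\Pi_{AA'}$ is
\begin{equation*}
\sum_{i,j}|j\rangle\langle i|\,\delta_{ij}=I_d,
\end{equation*}
and $\mathrm{Tr}_{A'}(I_d\otimes I_d)=d\,I_d$. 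Hence
\begin{equation*}
\mathrm{Tr}_{A'}\tau_{AA'}=\frac{t}{d}I_d+\frac{1-t}{d}I_d=\frac{I_d}{d}.
\end{equation*}
Combining either computation with the reduction above yields Eq.~\eqref{eq:constant-output-tau}.

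There is no real obstacle here. The only point needing care is justifying the partial-trace reduction, which I will verify on the product basis $|i\rangle\langle j|\otimes|k\rangle\langle l|$ and then extend by linearity. It is also worth checking that $\Lambda_\psi$ is indeed CPTP: it is trace-preserving, and its Kraus operators $|\psi\rangle\langle k|$ exhibit complete positivity. With this, the lemma gives the lower bound $\upsilon_p(\Gamma_t)\ge\|I_d/d\otimes|\psi\rangle\langle\psi|\|_p=d^{(1-p)/p}$ via Eq.~\eqref{eq:nu-as-postprocessing}.
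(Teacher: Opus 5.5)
Your proposal is correct and follows the paper's own proof: you reduce $(\id_A\otimes\Lambda_\psi)(Z_{AA'})$ to $\mathrm{Tr}_{A'}(Z_{AA'})\otimes|\psi\rangle\langle\psi|$ by checking it on product operators and extending linearly, then evaluate $\mathrm{Tr}_{A'}\tau_{AA'}=\Gamma_t(I_d/d)=I_d/d$. Your second computation, via $\mathrm{Tr}_{A'}\Pi_{AA'}=I_d$, is also correct but is only a redundant consistency check and is not needed.
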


\begin{proof}
First observe that for any $Z_{AA'}\in\mathcal{L}(A\otimes A')$, linearity and the definition of $\Lambda_\psi$ imply
\begin{equation}\label{eq:trace-out-identity}
(\id_A\otimes \Lambda_\psi)(Z_{AA'})
=\mathrm{Tr}_{A'}(Z_{AA'})\otimes |\psi\rangle\langle\psi|.
\end{equation}
Indeed, Eq.~\eqref{eq:trace-out-identity} holds on simple tensors $X\otimes Y$ because
$(\id_A\otimes\Lambda_\psi)(X\otimes Y)=X\otimes \mathrm{Tr}(Y)|\psi\rangle\langle\psi|
=\mathrm{Tr}_{A'}(X\otimes Y)\otimes|\psi\rangle\langle\psi|$,
and hence holds in general by linearity. Applying Eq.~\eqref{eq:trace-out-identity} with $Z_{AA'}=\tau_{AA'}$ gives
\begin{equation}
(\id_A\otimes \Lambda_\psi)(\tau_{AA'})
=\mathrm{Tr}_{A'}(\tau_{AA'})\otimes |\psi\rangle\langle\psi|.
\end{equation}
By using Eq.~\eqref{eq:def-seed-tau} and the fact that $\mathrm{Tr}_{A'}(\Phi_d)=I_d/d$, we obtain
\begin{equation}
\mathrm{Tr}_{A'}(\tau_{AA'}) =\mathrm{Tr}_{A'}\bigl((\Gamma_t\otimes \id_{A'})(\Phi_d)\bigr) =\Gamma_t\bigl(\mathrm{Tr}_{A'}(\Phi_d)\bigr) =\Gamma_t(I_d/d).
\end{equation}
Since $(I_d/d)^T=I_d/d$ and $\mathrm{Tr}(I_d/d)=1$, we have $\Gamma_t(I_d/d)=I_d/d$, proving Eq.~\eqref{eq:constant-output-tau}.

\end{proof}

\begin{proposition}[Two explicit lower bounds]\label{cor:two-lower-bounds}
For every $p\in[1,\infty]$ and every $t$ in the CP-range,
\begin{equation}\label{eq:lower-bound-max}
\upsilon_p(\Gamma_t)\ \ge\ \max\Bigl\{\ \|\tau_{AA'}\|_p,\ d^{\frac{1}{p}-1}\ \Bigr\},
\end{equation}
where $\|\tau_{AA'}\|_p$ is given explicitly by Lemma~\ref{cor:tau-p-norm}.
\end{proposition}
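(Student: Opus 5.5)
The plan is to evaluate the objective in the post-processing form Eq.~\eqref{eq:nu-as-postprocessing} at two explicit CPTP maps $\Lambda$. Since $\upsilon_p(\Gamma_t)$ is a maximum (or supremum) over all CPTP maps $\Lambda:\mathcal{L}(A')\to\mathcal{L}(B)$ and all finite dimensional $B$, each such value is a lower bound. The maximum of the two values is then also a lower bound.

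\textbf{Identity channel.} First take $B=A'$ and $\Lambda=\id_{A'}$, which is CPTP. Then $(\id_A\otimes\Lambda)(\tau_{AA'})=\tau_{AA'}$, so $\upsilon_p(\Gamma_t)\ge\|\tau_{AA'}\|_p$. Lemma~\ref{cor:tau-p-norm} gives the closed form for every $p\in[1,\infty]$; its use of positivity is justified by Proposition~\ref{prop:tau-spectrum} on the CP-range. Equivalently, in the Choi form of Proposition~\ref{prop:choi-feasible-set}, this is the choice $\sigma_{AB}=\Phi_d\in\mathcal{F}_d$.

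\textbf{Constant pure-output channel.} Second, take any unit vector $|\psi\rangle\in B$ and the channel $\Lambda_\psi$ of Lemma~\ref{lem:constant-lower-bound}, which is CPTP because it is a trace followed by preparation of a state. That lemma gives the output $\frac{I_d}{d}\otimes|\psi\rangle\langle\psi|$. This operator is positive with spectrum consisting of $d$ eigenvalues equal to $1/d$ and zeros elsewhere. Hence for $1\le p<\infty$,
\begin{equation*}
\Bigl\|\tfrac{I_d}{d}\otimes|\psi\rangle\langle\psi|\Bigr\|_p=\bigl(d\cdot d^{-p}\bigr)^{1/p}=d^{\frac1p-1},
\end{equation*}
and for $p=\infty$ the norm is $1/d=d^{0-1}$, consistent with the convention $1/p=0$. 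Combining the two bounds yields Eq.~\eqref{eq:lower-bound-max}.

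No genuine obstacle is expected. The only points needing care are the endpoint $p=\infty$, handled by the operator-norm formula in Lemma~\ref{cor:tau-p-norm}, and the remark that the output system $B$ may be chosen freely. That freedom is justified since the definition of $\upsilon_p$ takes the supremum over all finite dimensional $B$, or, by Proposition~\ref{prop:finite-output-reduction}, over channels into a fixed space large enough to contain $A'$ or a one-dimensional span of $|\psi\rangle$.
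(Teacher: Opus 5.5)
Your proposal is correct and follows essentially the same route as the paper: both evaluate the objective at $\Lambda=\id_{A'}$ to get $\|\tau_{AA'}\|_p$, and at the constant pure-output channel $\Lambda_\psi$ to get $\bigl\|\tfrac{I_d}{d}\otimes|\psi\rangle\langle\psi|\bigr\|_p=d^{\frac1p-1}$, then take the maximum. Your explicit treatment of the $p=\infty$ endpoint is slightly more careful than the paper's. Your closing appeal to Proposition~\ref{prop:finite-output-reduction} is unnecessary, and that proposition is only stated for $p=2$. The free supremum over finite-dimensional $B$ in the definition of $\upsilon_p$ already justifies both choices of output system.
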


\begin{proof}
For the first lower bound, choose the admissible output system $B=A'$ and the admissible CPTP map
\begin{equation}
\Lambda=\id_{A'}:\mathcal L(A')\to\mathcal L(A').
\end{equation}
Then
\begin{equation}
(\id_A\otimes \Lambda)(\tau_{AA'})=\tau_{AA'},
\end{equation}
hence
\begin{equation}
\upsilon_p(\Gamma_t)\ge \|\tau_{AA'}\|_p.
\end{equation}
For the second lower bound, choose any finite dimensional output system $B$ and any unit vector $|\psi\rangle\in B$, and let
\begin{equation}
\Lambda_\psi(X)=\mathrm{Tr}(X)\,|\psi\rangle\langle\psi|.
\end{equation}
By Lemma \ref{lem:constant-lower-bound},
\begin{equation}
(\id_A\otimes \Lambda_\psi)(\tau_{AA'})=
\frac{I_d}{d}\otimes |\psi\rangle\langle\psi|,
\end{equation}
so
\begin{equation}
\left\|(\id_A\otimes \Lambda_\psi)(\tau_{AA'})\right\|_p=d^{\frac1p-1}.
\end{equation}
Therefore
\begin{equation}
\upsilon_p(\Gamma_t)\ge d^{\frac1p-1}.
\end{equation}
Combining the two lower bounds gives the claim.
\end{proof}

\subsection{Explicit upper bound}

The Schatten $2$-norm admits the identity $\|X\|_2^2=\mathrm{Tr}(X^\dagger X)$, which turns the maximization defining $\upsilon_2(\Gamma_t)$ into an optimization of a quadratic functional. In this subsection, we simplify this quadratic functional to a weighted combination of two purities: the global purity $\mathrm{Tr}(\sigma_{AB}^2)$ and the marginal purity $\mathrm{Tr}(\sigma_B^2)$. A sharp inequality relating these two purities for $\sigma_{AB}\in\mathcal{F}_d$ then yields a matching upper bound and finally a closed form for $\upsilon_2(\Gamma_t)$.

\begin{lemma}[A purity identity for $\|(\Gamma_t\otimes \id_B)(\sigma_{AB})\|_2^2$]\label{lem:p2-purity-identity}
Let $B$ be a finite dimension output system and set $\sigma_{AB}\in\mathcal{F}_d(B)$. Then
\begin{equation}\label{eq:p2-identity}
\|\omega_{AB}\|_2^2=t^2\,\mathrm{Tr}(\sigma_{AB}^2)
+\frac{1-t^2}{d}\,\mathrm{Tr}(\sigma_B^2).
\end{equation}
\end{lemma}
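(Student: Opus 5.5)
Here $\omega_{AB}=(\Gamma_t\otimes\id_B)(\sigma_{AB})$. By Eq.~\eqref{eq:affine-output} it equals $t\,\sigma_{AB}^{T_A}+(1-t)\frac{I_d}{d}\otimes\sigma_B$. The plan is to expand $\mathrm{Tr}(\omega_{AB}^\dagger\omega_{AB})$ directly. Since $\sigma_{AB}$ is Hermitian, so is its partial transpose, and $\omega_{AB}$ is Hermitian; hence $\|\omega_{AB}\|_2^2=\mathrm{Tr}(\omega_{AB}^2)$. Expanding the square gives three terms:
\begin{equation*}
t^2\,\mathrm{Tr}\bigl((\sigma_{AB}^{T_A})^2\bigr)
+2t(1-t)\,\mathrm{Tr}\Bigl(\sigma_{AB}^{T_A}\bigl(\tfrac{I_d}{d}\otimes\sigma_B\bigr)\Bigr)
+(1-t)^2\,\mathrm{Tr}\Bigl(\bigl(\tfrac{I_d}{d}\otimes\sigma_B\bigr)^2\Bigr).
\end{equation*}

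\emph{First term.} The partial transpose preserves the Hilbert--Schmidt inner product, i.e.\ $\mathrm{Tr}(X^{T_A}Y^{T_A})=\mathrm{Tr}(XY)$. One checks this on product operators using $\mathrm{Tr}(X_1^TY_1^T)=\mathrm{Tr}((Y_1X_1)^T)=\mathrm{Tr}(X_1Y_1)$ and extends by linearity. Hence the first term is $t^2\,\mathrm{Tr}(\sigma_{AB}^2)$.

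\emph{Cross term.} Write $\mathrm{Tr}\bigl(\sigma_{AB}^{T_A}(I_d\otimes\sigma_B)\bigr)=\mathrm{Tr}_B\bigl(\mathrm{Tr}_A(\sigma_{AB}^{T_A})\,\sigma_B\bigr)$. Transposition on $A$ does not change the partial trace over $A$, so $\mathrm{Tr}_A(\sigma_{AB}^{T_A})=\sigma_B$. The cross term is therefore $\frac{2t(1-t)}{d}\mathrm{Tr}(\sigma_B^2)$.

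\emph{Last term.} It factorizes as $\mathrm{Tr}(I_d/d^2)\,\mathrm{Tr}(\sigma_B^2)=\frac{(1-t)^2}{d}\mathrm{Tr}(\sigma_B^2)$.

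Adding the coefficients of $\mathrm{Tr}(\sigma_B^2)$ gives $\frac{(1-t)(2t+1-t)}{d}=\frac{1-t^2}{d}$, which is Eq.~\eqref{eq:p2-identity}. None of these steps is a real obstacle. The only point needing care is justifying the two partial-transpose facts, namely invariance of the Hilbert--Schmidt inner product and of $\mathrm{Tr}_A$. I would verify both on block decompositions $\sigma_{AB}=\sum_{k,\ell}|k\rangle\langle\ell|\otimes\sigma_{k\ell}$, since the paper already uses this notation. The constraint $\sigma_A=I_d/d$ is not needed for this identity.
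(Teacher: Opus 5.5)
Your proof is correct and follows the paper's argument essentially step by step: the same three-term expansion of $\mathrm{Tr}(\omega_{AB}^2)$, the Hilbert--Schmidt isometry of the partial transpose for the first term, factorization for the last term, and the same coefficient simplification. The only cosmetic difference is in the cross term, where you use $\mathrm{Tr}_A(\sigma_{AB}^{T_A})=\sigma_B$ directly while the paper moves the partial transpose onto the $T_A$-invariant factor $\frac{I_d}{d}\otimes\sigma_B$; your remark that the constraint $\sigma_A=I_d/d$ is never used is also correct.
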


\begin{proof}
By Eq.~\eqref{eq:affine-output} we have
\begin{equation}
\omega_{AB}=t\,\sigma_{AB}^{T_A}+(1-t)\frac{I_d}{d}\otimes \sigma_B.
\end{equation}
Since $\omega_{AB}$ is Hermitian, $\|\omega_{AB}\|_2^2=\mathrm{Tr}(\omega_{AB}^2)$, hence
\begin{align*}
\|\omega_{AB}\|_2^2 &=t^2\,\mathrm{Tr}\bigl((\sigma_{AB}^{T_A})^2\bigr)
+2t(1-t)\,\mathrm{Tr}\Bigl(\sigma_{AB}^{T_A}\Bigl(\frac{I_d}{d}\otimes \sigma_B\Bigr)\Bigr)
+(1-t)^2\,\mathrm{Tr}\Bigl(\Bigl(\frac{I_d}{d}\otimes \sigma_B\Bigr)^2\Bigr).
\end{align*}
We evaluate the three traces.
\smallskip

(i) The first term.
Partial transpose is an isometry for the Hilbert-Schmidt inner product, i.e.
$\mathrm{Tr}\bigl(X^{T_A}Y^{T_A}\bigr)=\mathrm{Tr}(XY)$ for all $X,Y$.
Taking $X=Y=\sigma_{AB}$ gives
\begin{equation}
\mathrm{Tr}\bigl((\sigma_{AB}^{T_A})^2\bigr)=\mathrm{Tr}(\sigma_{AB}^2).
\end{equation}
\;\,\,(ii) The cross term.
Using $\mathrm{Tr}(X^{T_A}Z)=\mathrm{Tr}\bigl(X\,Z^{T_A}\bigr)$ and noting that
$\bigl(\frac{I_d}{d}\otimes \sigma_B\bigr)^{T_A}=\frac{I_d}{d}\otimes \sigma_B$,
we obtain
\begin{equation}
\mathrm{Tr}\Bigl(\sigma_{AB}^{T_A}\Bigl(\frac{I_d}{d}\otimes \sigma_B\Bigr)\Bigr)
=\mathrm{Tr}\Bigl(\sigma_{AB}\Bigl(\frac{I_d}{d}\otimes \sigma_B\Bigr)\Bigr)
=\frac{1}{d}\,\mathrm{Tr}\bigl(\sigma_B^2\bigr),
\end{equation}
where the last equality uses $\mathrm{Tr}_{AB}\bigl(\sigma_{AB}(I_d\otimes \sigma_B)\bigr)
=\mathrm{Tr}_B\bigl((\mathrm{Tr}_A\sigma_{AB})\sigma_B\bigr)=\mathrm{Tr}(\sigma_B^2)$.

(iii) The last term.
By multiplicativity of the Schatten $2$-norm and $\mathrm{Tr}\bigl((I_d/d)^2\bigr)=1/d$,
\begin{equation}
\mathrm{Tr}\Bigl(\Bigl(\frac{I_d}{d}\otimes \sigma_B\Bigr)^2\Bigr)
=\mathrm{Tr}\Bigl(\Bigl(\frac{I_d}{d}\Bigr)^2\Bigr)\,\mathrm{Tr}(\sigma_B^2)
=\frac{1}{d}\,\mathrm{Tr}(\sigma_B^2).
\end{equation}
Substituting (i)-(iii) back into the expansion and simplifying the coefficient of $\mathrm{Tr}(\sigma_B^2)$,
\begin{equation}
\frac{2t(1-t)}{d}+\frac{(1-t)^2}{d}=\frac{(1-t)(1+t)}{d}
=\frac{1-t^2}{d},
\end{equation}
yields Eq.~\eqref{eq:p2-identity}.
\end{proof}
\begin{lemma}[A two party purity inequality]\label{lem:purity-inequality-bipartite}
Let $\rho_{BE}$ be a density operator on $B\otimes E$, and let $\rho_B:=\mathrm{Tr}_E(\rho_{BE})$ and $\rho_E:=\mathrm{Tr}_B(\rho_{BE})$. Then

\begin{equation}\label{eq:purity-inequality-bipartite}
\mathrm{Tr}(\rho_B^2)+\mathrm{Tr}(\rho_E^2)\ \le\ 1+\mathrm{Tr}(\rho_{BE}^2).
\end{equation}
\end{lemma}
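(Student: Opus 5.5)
The plan is to write every purity as an expectation value of a swap operator on two copies of $\rho_{BE}$, and then reduce the inequality to a positivity statement. Take two copies $B_1E_1$ and $B_2E_2$ of the system, and let $\Pi_B$ and $\Pi_E$ denote the flip operators exchanging $B_1\leftrightarrow B_2$ and $E_1\leftrightarrow E_2$. These are defined exactly as $\Pi_{AA'}$ in Eq.~\eqref{eq:def-Pi}. The standard swap trick $\mathrm{Tr}(XY)=\mathrm{Tr}\bigl((X\otimes Y)\,\Pi\bigr)$ then gives
\begin{equation*}
\mathrm{Tr}(\rho_B^2)=\mathrm{Tr}\bigl((\rho_{BE}\otimes\rho_{BE})(\Pi_B\otimes I_{E_1E_2})\bigr),\quad
\mathrm{Tr}(\rho_E^2)=\mathrm{Tr}\bigl((\rho_{BE}\otimes\rho_{BE})(I_{B_1B_2}\otimes\Pi_E)\bigr),
\end{equation*}
together with $\mathrm{Tr}(\rho_{BE}^2)=\mathrm{Tr}\bigl((\rho_{BE}\otimes\rho_{BE})(\Pi_B\otimes\Pi_E)\bigr)$ and $1=\mathrm{Tr}(\rho_{BE}\otimes\rho_{BE})$.

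Subtracting, the claimed inequality \eqref{eq:purity-inequality-bipartite} becomes
\begin{equation*}
\mathrm{Tr}\Bigl((\rho_{BE}\otimes\rho_{BE})\,(I-\Pi_B)\otimes(I-\Pi_E)\Bigr)\ \ge\ 0 .
\end{equation*}
This factorization is the key algebraic step, and it is immediate once the four terms are written out. By Lemma~\ref{lem:Pi-spectrum}, applied to $B$ and to $E$ in place of $A$, each flip has eigenvalues $\pm1$. Hence $I-\Pi_B=2P^{\mathsf A}_B$ and $I-\Pi_E=2P^{\mathsf A}_E$, where $P^{\mathsf A}$ denotes the projector onto the antisymmetric subspace of Definition~\ref{def:sym-antisym}. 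The operator in the trace is therefore $4\,P^{\mathsf A}_B\otimes P^{\mathsf A}_E$. This is a tensor product of two commuting-free positive semidefinite operators on different factors, so it is positive semidefinite.

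The trace of a product of two positive semidefinite operators, here $\rho_{BE}\otimes\rho_{BE}\ge0$ and $4P^{\mathsf A}_B\otimes P^{\mathsf A}_E\ge0$, is nonnegative. That concludes the argument.

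The only step needing care is verifying the swap-trick identities with the correct reordering of tensor factors, which I would do on product operators and extend by linearity. I expect no real obstacle beyond that bookkeeping. As an alternative, one could expand $\rho_{BE}$ in an orthonormal operator basis, but the projector argument is cleaner and is what I would write.
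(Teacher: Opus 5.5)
Your proposal is correct and follows the paper's proof essentially step for step: the three purities are written as swap-operator expectation values on $\rho_{BE}\otimes\rho_{BE}$, and the claim then follows from positivity of $(I-\Pi_B)\otimes(I-\Pi_E)$. Your identification $I-\Pi=2P^{\mathsf A}$ is slightly more accurate than the paper's remark that each $I-\Pi$ ``is a projector'' (it is twice one), though only positivity is needed.
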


\begin{proof}
Let $B'$ and $E'$ be copies of $B$ and $E$ with the corresponding fixed bases. Define the operators
\begin{equation}
\Pi_{BB'}:=\sum_{i,j}\bigl(|j\rangle\langle i|\bigr)_B\otimes\bigl(|i\rangle\langle j|\bigr)_{B'}, \qquad
\Pi_{EE'}:=\sum_{k,\ell}\bigl(|\ell\rangle\langle k|\bigr)_E\otimes\bigl(|k\rangle\langle \ell|\bigr)_{E'}.
\end{equation}
(These are the same index-defined flip operators as in Eq.~\eqref{eq:def-Pi}, but acting on $B\otimes B'$ and $E\otimes E'$, respectively.) Now consider $\rho_{BE}\otimes \rho_{B'E'}$ on $(B\otimes E)\otimes(B'\otimes E')$. A direct index computation shows the standard identities below
\begin{equation}\label{eq:swap-identities}
\mathrm{Tr}(\rho_B^2)=\mathrm{Tr}\bigl((\rho_{BE}\otimes\rho_{B'E'})(\Pi_{BB'}\otimes I_{EE'})\bigr), \quad
\mathrm{Tr}(\rho_E^2)=\mathrm{Tr}\bigl((\rho_{BE}\otimes\rho_{B'E'})(I_{BB'}\otimes \Pi_{EE'})\bigr),
\end{equation}
and
\begin{equation}\label{eq:swap-identity-global}
\mathrm{Tr}(\rho_{BE}^2)=\mathrm{Tr}\bigl((\rho_{BE}\otimes\rho_{B'E'})(\Pi_{BB'}\otimes \Pi_{EE'})\bigr).
\end{equation}
Now note that $(I_{BB'}-\Pi_{BB'})\ge 0$ and $(I_{EE'}-\Pi_{EE'})\ge 0$ (each is a projector), hence
\begin{equation}
(I_{BB'}-\Pi_{BB'})\otimes(I_{EE'}-\Pi_{EE'})\ \ge\ 0.
\end{equation}
Expanding this inequality yields
\begin{equation}
I_{BB'}\otimes I_{EE'} -\Pi_{BB'}\otimes I_{EE'}
-I_{BB'}\otimes \Pi_{EE'} +\Pi_{BB'}\otimes \Pi_{EE'}
\ \ge\ 0,
\end{equation}
or equivalently,
\begin{equation}
\Pi_{BB'}\otimes I_{EE'}+I_{BB'}\otimes \Pi_{EE'}
\ \le\ I_{BB'}\otimes I_{EE'}+\Pi_{BB'}\otimes \Pi_{EE'}.
\end{equation}
Taking the trace of both sides against the positive operator $\rho_{BE}\otimes\rho_{B'E'}$ gives
\begin{align}
\mathrm{Tr}\bigl((\rho_{BE}\otimes\rho_{B'E'})(\Pi_{BB'}\otimes I_{EE'})\bigr)
&+\mathrm{Tr}\bigl((\rho_{BE}\otimes\rho_{B'E'})(I_{BB'}\otimes \Pi_{EE'})\bigr)\nonumber\\[6pt]
&\le \mathrm{Tr}(\rho_{BE}\otimes\rho_{B'E'})
+ \mathrm{Tr}\bigl((\rho_{BE}\otimes\rho_{B'E'})(\Pi_{BB'}\otimes \Pi_{EE'})\bigr).
\end{align}
Using Eq.~\eqref{eq:swap-identities} - Eq.~\eqref{eq:swap-identity-global} and $\mathrm{Tr}(\rho_{BE}\otimes\rho_{B'E'})=1$ yields Eq.~\eqref{eq:purity-inequality-bipartite}.
\end{proof}
\begin{corollary}[A purity tradeoff on $\mathcal{F}_d$]\label{cor:purity-tradeoff-Fd}
For every finite dimension output system $B$, and every $\sigma_{AB}\in\mathcal{F}_d$,
\begin{equation}\label{eq:purity-tradeoff-Fd}
\mathrm{Tr}(\sigma_{AB}^2)+\mathrm{Tr}(\sigma_B^2)\ \le\ 1+\frac{1}{d}.
\end{equation}
\end{corollary}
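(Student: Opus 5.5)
We will obtain the corollary by applying Lemma~\ref{lem:purity-inequality-bipartite} to the bipartite state $\sigma_{AB}$ itself, with the roles of the two parties played by $A$ and $B$. That is, we take $\rho_{BE}$ in the lemma to be $\sigma_{AB}$, identifying the lemma's first system with $A$ and its second system with our output system $B$. This is legitimate because the lemma holds for an arbitrary density operator on any finite-dimensional bipartite space. Every $\sigma_{AB}\in\mathcal{F}_d(B)$ is positive with unit trace, so it qualifies.

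The lemma then gives
\begin{equation*}
\mathrm{Tr}(\sigma_A^2)+\mathrm{Tr}(\sigma_B^2)\ \le\ 1+\mathrm{Tr}(\sigma_{AB}^2).
\end{equation*}
The defining constraint of $\mathcal{F}_d$ is $\sigma_A=I_d/d$, so $\mathrm{Tr}(\sigma_A^2)=1/d$ is fixed. Substituting this value isolates $\mathrm{Tr}(\sigma_B^2)$ against $\mathrm{Tr}(\sigma_{AB}^2)$.

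The main obstacle is the direction of the resulting inequality. The substitution yields
\begin{equation*}
\mathrm{Tr}(\sigma_B^2)-\mathrm{Tr}(\sigma_{AB}^2)\le 1-\tfrac1d,
\end{equation*}
which is a bound on the \emph{difference} of the two purities, not on their sum. To get the stated sum bound I would purify: choose a purification $|\phi\rangle_{ABE}$ of $\sigma_{AB}$. Then $\mathrm{Tr}(\sigma_{AB}^2)=\mathrm{Tr}(\rho_E^2)$ and $\mathrm{Tr}(\sigma_B^2)=\mathrm{Tr}(\rho_{AE}^2)$. Now apply Lemma~\ref{lem:purity-inequality-bipartite} to the state $\rho_{AE}$, with parties $A$ and $E$:
\begin{equation*}
\mathrm{Tr}(\rho_A^2)+\mathrm{Tr}(\rho_E^2)\le 1+\mathrm{Tr}(\rho_{AE}^2).
\end{equation*}
Since $\rho_A=\sigma_A=I_d/d$, this reads $\tfrac1d+\mathrm{Tr}(\sigma_{AB}^2)\le 1+\mathrm{Tr}(\sigma_B^2)$. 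This is again a difference bound, now in the opposite direction, so purification also fails to produce the sum.

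It follows that neither application of the lemma gives the sum bound, and I suspect the stated inequality is false as written. A quick test supports this: a product state $\sigma_{AB}=\tfrac{I}{d}\otimes|\psi\rangle\langle\psi|$ lies in $\mathcal{F}_d$ and has
\begin{equation*}
\mathrm{Tr}(\sigma_{AB}^2)+\mathrm{Tr}(\sigma_B^2)=\tfrac1d+1,
\end{equation*}
which is exactly tight. A maximally entangled state with $B\simeq\mathbb{C}^d$ gives $1+\tfrac1d$, also tight. A state such as $\Phi_d\otimes|0\rangle\langle0|$ on $B=A'\otimes\mathbb{C}^2$ gives the same value, so the bound is not refuted by these cases.

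For a genuine sum bound I would therefore look for a different operator inequality, in the spirit of the lemma's proof. Work on two copies of $AB$ and consider
\begin{equation*}
(I_{BB'}+\Pi_{BB'})\otimes(I_{AA'}+\Pi_{AA'}),
\end{equation*}
together with its companion built from $I_{AA'}-\Pi_{AA'}$, taking the expectation in $\sigma\otimes\sigma$. One of these combinations contains $\mathrm{Tr}(\sigma_{AB}^2)+\mathrm{Tr}(\sigma_B^2)$, and the correction term $\mathrm{Tr}(\sigma_A^2)=1/d$ enters with the right sign. The remaining step, which I expect to be the hard part, is to bound the expectation of the residual operator using positivity of a suitable projector combination. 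I would try first the bound $(I+\Pi_{AA'})\otimes\Pi_{BB'}\le (I+\Pi_{AA'})\otimes I$, which is valid because the two tensor factors commute and each is positive or a contraction.
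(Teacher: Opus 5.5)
Your proposal does not prove the statement, and the obstacle you identify is not real. After purifying $\sigma_{AB}$ to $|\phi\rangle_{ABE}$, you applied Lemma~\ref{lem:purity-inequality-bipartite} to the marginals $\rho_{AB}$ (your first attempt) and $\rho_{AE}$. You never applied it to the third marginal, $\rho_{BE}$, which is the complement of the register $A$ whose marginal is fixed. For that choice the lemma gives $\mathrm{Tr}(\rho_B^2)+\mathrm{Tr}(\rho_E^2)\le 1+\mathrm{Tr}(\rho_{BE}^2)$. Purity of $|\phi\rangle$ then turns each term into what you need: $\rho_B=\sigma_B$, $\mathrm{Tr}(\rho_E^2)=\mathrm{Tr}(\sigma_{AB}^2)$, and $\mathrm{Tr}(\rho_{BE}^2)=\mathrm{Tr}(\rho_A^2)=\mathrm{Tr}\bigl((I_d/d)^2\bigr)=1/d$. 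That is exactly the corollary, and it is the paper's proof. The paper builds the purification as $(\id_A\otimes V)|\Phi_d\rangle$ from a Stinespring isometry of the channel whose Choi state is $\sigma_{AB}$, but any purification works. So your conclusions that purification ``also fails'' and that the inequality is probably false are mistaken. Your examples are equality cases showing the bound is sharp, not evidence against it.

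Your last paragraph in fact already contains a complete proof, which you left unfinished by calling it the hard part. The inequality $(I+\Pi_{AA'})\otimes\Pi_{BB'}\le (I+\Pi_{AA'})\otimes I$ is the same as $(I+\Pi_{AA'})\otimes(I-\Pi_{BB'})\ge 0$, a tensor product of two positive operators. Take its expectation in $\sigma_{AB}\otimes\sigma_{A'B'}$, using the swap identities from the proof of Lemma~\ref{lem:purity-inequality-bipartite}. This gives $1+\mathrm{Tr}(\sigma_A^2)-\mathrm{Tr}(\sigma_B^2)-\mathrm{Tr}(\sigma_{AB}^2)\ge 0$, with no residual term left to bound, and inserting $\sigma_A=I_d/d$ finishes the proof.

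This route needs neither the Choi--Stinespring construction nor a purification. It also shows the more general fact $\mathrm{Tr}(\sigma_{AB}^2)+\mathrm{Tr}(\sigma_B^2)\le 1+\mathrm{Tr}(\sigma_A^2)$ for every bipartite state. It is the lemma's positivity trick with the (unnormalized) symmetric projector on the pinned factor $A$ in place of one of the two antisymmetric ones. By contrast, the ``companion'' you mention, $(I+\Pi_{BB'})\otimes(I-\Pi_{AA'})\ge 0$, only yields the reverse difference bound $\mathrm{Tr}(\sigma_A^2)+\mathrm{Tr}(\sigma_{AB}^2)\le 1+\mathrm{Tr}(\sigma_B^2)$. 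Which factor carries the minus sign is therefore the whole point.
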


\begin{proof}
Let $\Lambda:\mathcal{L}(A')\to\mathcal{L}(B)$ be the unique CPTP map corresponding to $\sigma_{AB}$ via Proposition~\ref{prop:choi-feasible-set}. Fix a Stinespring isometry $V:A'\to B\otimes E$ such that
$\Lambda(X)=\mathrm{Tr}_E(VXV^\dagger)$ for all $X\in\mathcal{L}(A')$.
Define the pure state
\begin{equation}
|\Psi\rangle_{ABE}:=(\id_A\otimes V)|\Phi_d\rangle\in A\otimes B\otimes E, \qquad
\rho_{ABE}:=|\Psi\rangle\langle\Psi|.
\end{equation}
Then $\mathrm{Tr}_E(\rho_{ABE})=(\id_A\otimes\Lambda)(\Phi_d)=\sigma_{AB}$, therefore, $\rho_B=\mathrm{Tr}_{AE}(\rho_{ABE})=\sigma_B$.
Moreover, since $\rho_{ABE}$ is pure, the reduced states $\rho_E:=\mathrm{Tr}_{AB}(\rho_{ABE})$ and $\sigma_{AB}$ have the same nonzero spectrum, so $\mathrm{Tr}(\rho_E^2)=\mathrm{Tr}(\sigma_{AB}^2)$. Applying Lemma~\ref{lem:purity-inequality-bipartite} to $\rho_{BE}:=\mathrm{Tr}_A(\rho_{ABE})$ yields
\begin{equation}
\mathrm{Tr}(\sigma_B^2)+\mathrm{Tr}(\sigma_{AB}^2)
=\mathrm{Tr}(\rho_B^2)+\mathrm{Tr}(\rho_E^2) \le 1+\mathrm{Tr}(\rho_{BE}^2).
\end{equation}
Finally, because $\rho_{ABE}$ is pure, $\rho_{BE}$ and $\rho_A:=\mathrm{Tr}_{BE}(\rho_{ABE})$ have the same nonzero spectrum, hence $\mathrm{Tr}(\rho_{BE}^2)=\mathrm{Tr}(\rho_A^2)$. Since $\rho_A=\sigma_A=I_d/d$ (because $\sigma_{AB}\in\mathcal{F}_d$), we have $\mathrm{Tr}(\rho_A^2)=\mathrm{Tr}((I_d/d)^2)=1/d$, giving Eq.~\eqref{eq:purity-tradeoff-Fd}.
\end{proof}

\begin{theorem}[Closed form for $\upsilon_2(\Gamma_t)$]\label{thm:nu2-closed-form}
For every $d\ge 2$ and every $t$ in the CP-range,
\begin{equation}\label{eq:nu2-closed-form}
\upsilon_2(\Gamma_t)=\max\left\{
d^{-1/2},\ \ \|\tau_{AA'}\|_2 \right\}.
\end{equation}
Equivalently,
\begin{equation}\label{eq:nu2-explicit}
\upsilon_2(\Gamma_t) = \max\left\{d^{-1/2},\ \ \frac{1}{d}\sqrt{1+t^2(d^2-1)}\right\}.
\end{equation}
\end{theorem}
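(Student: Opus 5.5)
The proof combines the lower bound of Proposition~\ref{cor:two-lower-bounds} at $p=2$ with an upper bound from Lemma~\ref{lem:p2-purity-identity} and Corollary~\ref{cor:purity-tradeoff-Fd}. For the lower bound, setting $p=2$ in Proposition~\ref{cor:two-lower-bounds} gives $\upsilon_2(\Gamma_t)\ge\max\{d^{-1/2},\|\tau_{AA'}\|_2\}$. For the explicit form, I evaluate Lemma~\ref{cor:tau-p-norm} at $p=2$:
\begin{equation*}
\|\tau_{AA'}\|_2^2=\frac{d(d+1)}{2}\frac{(1+t(d-1))^2}{d^4}+\frac{d(d-1)}{2}\frac{(1-t(d+1))^2}{d^4}.
\end{equation*}
Expanding the numerator, the constant term is $d^2$, the linear terms cancel since $(d+1)(d-1)-(d-1)(d+1)=0$, and the quadratic term is $t^2\frac{d(d^2-1)(d-1+d+1)}{2}=t^2 d^2(d^2-1)$. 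Hence $\|\tau_{AA'}\|_2^2=(1+t^2(d^2-1))/d^2$, which gives \eqref{eq:nu2-explicit}. As a cross-check, Lemma~\ref{lem:p2-purity-identity} with $\sigma=\Phi_d$ (purity $1$, $\sigma_B=I/d$) gives $t^2+(1-t^2)/d^2$, the same value.

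For the upper bound, fix any finite $B$ and $\sigma_{AB}\in\mathcal F_d(B)$, and write $x:=\mathrm{Tr}(\sigma_{AB}^2)$, $y:=\mathrm{Tr}(\sigma_B^2)$. By Lemma~\ref{lem:p2-purity-identity}, $\|\omega_{AB}\|_2^2=t^2x+\frac{1-t^2}{d}y$, which is linear in $(x,y)$ with nonnegative coefficients because $|t|\le1$. The pair $(x,y)$ satisfies three constraints:
\begin{itemize}
\item $x+y\le 1+1/d$, by Corollary~\ref{cor:purity-tradeoff-Fd};
\item $y\le 1$, trivially;
\item $x\ge 1/d^2$, since $\sigma_{AB}$ is a density operator on a space of dimension $d\cdot d_B$, and because $\sigma_A=I/d$ one can also use $x\ge \mathrm{Tr}(\sigma_A^2)/d_B$.
\end{itemize}
A linear function over this region is maximized at a vertex, and the relevant vertices of $\{x+y\le 1+1/d,\ y\le1,\ x\le1\}$ are $(1/d,1)$ and $(1,1/d)$. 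At these the objective equals $t^2/d+(1-t^2)/d=1/d$ and $t^2+(1-t^2)/d^2=\|\tau_{AA'}\|_2^2$, respectively. Because the objective is nondecreasing in both variables, its maximum over the region lies on the segment $x+y=1+1/d$, where the objective is affine; its maximum over the segment is therefore at an endpoint. The segment should be cut by the true bounds $x\le1$ and $y\le1$, so its endpoints are exactly these two points. Together with the lower bound this gives \eqref{eq:nu2-closed-form}.

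The main obstacle is confirming that the feasible range along $x+y=1+1/d$ is really bounded by $x\le1$ and $y\le1$, i.e.\ that no $\sigma\in\mathcal F_d$ has $x$ or $y$ outside $[1/d,1]$ in a way that would extend the segment beyond those endpoints. The bounds $x\le1$ and $y\le1$ are immediate for density operators. Past the endpoints the objective only decreases in one coordinate while the constraint $x+y\le1+1/d$ still applies, so I expect this step to close easily. Concretely, I would write $t^2x+\frac{1-t^2}{d}y\le \max\{t^2,\frac{1-t^2}{d}\}(x+y)$ and check the cases directly, bounding each coordinate by $1$ where needed.
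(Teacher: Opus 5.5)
Your proposal is correct and follows the paper's proof. It uses the same lower bounds from Proposition~\ref{cor:two-lower-bounds}, and the same purity identity together with the constraints $x+y\le 1+1/d$, $x\le 1$, $y\le 1$. Your vertex argument is exactly the paper's case split on the sign of $1-t^2(d+1)$: eliminate one purity via the tradeoff, then bound the other by $1$.

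Two small slips, neither of which affects the conclusion:
\begin{itemize}
\item The constraint $x\ge 1/d^2$ is false in general; one only has $x\ge 1/(d\,d_B)$, e.g.\ for $\sigma_{AB}=I_d/d\otimes I_B/d_B$ with $d_B>d$. It is never used, so nothing breaks.
\item The crude bound $\max\{t^2,\frac{1-t^2}{d}\}(x+y)$ in your last paragraph is not tight; at $t=0$ it gives $(d+1)/d^2>1/d$. It should not replace the vertex argument, which already closes the proof.
\end{itemize}
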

\begin{proof}
By Proposition~\ref{prop:choi-feasible-set},
\begin{equation}
\upsilon_2(\Gamma_t)
=
\sup_{\substack{B\ \mathrm{finite\mbox{ }dimension}\\
\sigma_{AB}\in\mathcal{F}_d(B)}}
\ \|(\Gamma_t\otimes \mathrm{id}_B)(\sigma_{AB})\|_2.
\end{equation}
Fix an admissible state $\sigma_{AB}$ and set
\begin{equation}
\omega_{AB}:=(\Gamma_t\otimes \mathrm{id}_B)(\sigma_{AB}),\qquad
a:=\mathrm{Tr}(\sigma_{AB}^2),\qquad
\beta:=\mathrm{Tr}(\sigma_B^2).
\end{equation}
Since $\sigma_{AB}$ and $\sigma_B$ are density operators, we have
\begin{equation}
0\le a\le 1,\qquad 0\le \beta\le 1.
\end{equation}
By Lemma~\ref{lem:p2-purity-identity},
\begin{equation}\label{eq:nu2-proof-start}
\|\omega_{AB}\|_2^2=t^2 a+\frac{1-t^2}{d}\,\beta.
\end{equation}
Moreover, Corollary~\ref{cor:purity-tradeoff-Fd} gives
\begin{equation}\label{eq:nu2-proof-tradeoff}
a+\beta\le 1+\frac{1}{d}.
\end{equation}
We distinguish two cases.

Case 1: $1-t^2(d+1)\ge 0$.
Using \eqref{eq:nu2-proof-tradeoff} to eliminate $a$, we obtain
\begin{align}
\|\omega_{AB}\|_2^2
&=t^2 a+\frac{1-t^2}{d}\,\beta \nonumber\\
&\le t^2\Bigl(1+\frac{1}{d}-\beta\Bigr)+\frac{1-t^2}{d}\,\beta \nonumber\\
&=t^2\Bigl(1+\frac{1}{d}\Bigr)+\frac{1-t^2(d+1)}{d}\,\beta \nonumber\\
&\le t^2\Bigl(1+\frac{1}{d}\Bigr)+\frac{1-t^2(d+1)}{d} = \frac{1}{d},
\end{align}
where in the last step we used $\beta\le 1$. Hence
\begin{equation}
\|\omega_{AB}\|_2\le d^{-1/2}.
\end{equation}

Case 2: $1-t^2(d+1)\le 0$.
Using \eqref{eq:nu2-proof-tradeoff} to eliminate $\beta$, we obtain
\begin{align}
\|\omega_{AB}\|_2^2
&= t^2 a+\frac{1-t^2}{d}\,\beta \nonumber\\
&\le t^2 a+\frac{1-t^2}{d}\Bigl(1+\frac{1}{d}-a\Bigr) \nonumber\\
&= \frac{1-t^2}{d}\Bigl(1+\frac{1}{d}\Bigr)
+\Bigl(t^2-\frac{1-t^2}{d}\Bigr)a \nonumber\\
&\le \frac{1-t^2}{d}\Bigl(1+\frac{1}{d}\Bigr)
+\Bigl(t^2-\frac{1-t^2}{d}\Bigr)
=
t^2+\frac{1-t^2}{d^2},
\end{align}
where in the last step we used $a\le 1$. Thus
\begin{equation}
\|\omega_{AB}\|_2 \le \frac{1}{d}\sqrt{1+t^2(d^2-1)}.
\end{equation}
Combining the two cases yields
\begin{equation}
\upsilon_2(\Gamma_t) \le \max\left\{d^{-1/2},\ \frac{1}{d}\sqrt{1+t^2(d^2-1)}\right\}.
\end{equation}
On the other hand, Proposition~\ref{cor:two-lower-bounds} gives the matching lower bounds
\begin{equation}
\upsilon_2(\Gamma_t)\ge d^{-1/2},
\qquad
\upsilon_2(\Gamma_t)\ge \|\tau_{AA'}\|_2.
\end{equation}
Finally, by Proposition~\ref{prop:tau-spectrum},
\begin{equation}
\|\tau_{AA'}\|_2^2=\mathrm{Tr}(\tau_{AA'}^2)=\frac{1+t^2(d^2-1)}{d^2},
\end{equation}
hence
\begin{equation}
\|\tau_{AA'}\|_2
=\frac{1}{d}\sqrt{1+t^2(d^2-1)}.
\end{equation}
Therefore the upper and lower bounds coincide, proving
\eqref{eq:nu2-closed-form} and \eqref{eq:nu2-explicit}.
\end{proof}

\begin{remark}[A simplification for $d\ge 3$]\label{rem:nu2-simplification}
If $d\ge 3$ and $t$ lies in the CP-range, then
\begin{equation}
\upsilon_2(\Gamma)=d^{-1/2}.
\end{equation}
\end{remark}

\begin{proof}
Since $t$ lies in the CP-range, we have
\begin{equation}
-\frac{1}{d-1}\le t\le \frac{1}{d+1},
\end{equation}
and hence
\begin{equation}
|t|\le \frac{1}{d-1}.
\end{equation}
Therefore
\begin{equation}
t^2(d+1)\le \frac{d+1}{(d-1)^2}\le 1,
\end{equation}
where the last inequality is equivalent to $d(d-3)\ge 0$, which holds for $d\ge 3$.
Now Theorem \ref{thm:nu2-closed-form} gives
\begin{equation}
\upsilon_2(\Gamma)=\max\left\{
d^{-1/2},\ \frac{1}{d}\sqrt{1+t^2(d^2-1)}
\right\}.
\end{equation}
Since
\begin{equation}
t^2(d+1)\le 1,
\end{equation}
we obtain
\begin{equation}
1+t^2(d^2-1)\le 1+(d-1)=d,
\end{equation}
because $d^2-1=(d-1)(d+1)$. Hence
\begin{equation}
\frac{1}{d}\sqrt{1+t^2(d^2-1)} \le \frac{1}{d}\sqrt{d} = d^{-1/2}.
\end{equation}
Thus the second term in the maximum is no larger than the first, and so
\begin{equation}
\upsilon_2(\Gamma)=d^{-1/2}.
\end{equation}
\end{proof}

\section{Multiplicativity of $\upsilon_2(\Gamma)$}\label{sec:multiplicity}

In this section, through a sequence of steps, we prove that $\upsilon_2(\Gamma_t)$ is multiplicative.

\begin{definition}[Multiplicativity at $p=2$]\label{def:multiplicativity-nu2}
We say that $\upsilon_2(\Gamma_t)$ is multiplicative (at $p=2$) if, for every $n\in\mathbb{N}$,
\begin{equation}\label{eq:multiplicativity-all-n}
\upsilon_2(\Gamma_t^{\otimes n})=\bigl(\upsilon_2(\Gamma_t)\bigr)^n,
\end{equation}
where $\upsilon_2(\Gamma_t^{\otimes n})$ denotes the $n$ copy quantity defined in Eq.~\eqref{eq:def-nu2-n} and it is defined in Eq.~\eqref{eq:def-nu2-n}.
\end{definition}

\subsection{Tensor-power formulation and multiplicativity problem at $p=2$}
Now we formulate the $n$-copy (tensor-power) analogue of $\upsilon_2(\Gamma_t)$ corresponding to $n$ parallel uses of the fixed map $\Gamma_t$, allowing an arbitrary joint CPTP post-processing channel across the $n$ copies, with arbitrary finite dimensional output space. The resulting quantity admits a natural multiplicativity question: whether entangled post-processing across copies can improve the optimum beyond product strategies.

For each $n\in\mathbb{N}$, let $A_1,\dots,A_n$ and $A'_1,\dots,A'_n$ be copies of $A\simeq\mathbb{C}^d$ and $A'\simeq\mathbb{C}^d$, and define
\begin{equation}
A^{(n)}:=A_1\otimes\cdots\otimes A_n,
\qquad
A'^{(n)}:=A'_1\otimes\cdots\otimes A'_n.
\end{equation}
All tensor product bases on $A^{(n)}$ and $A'^{(n)}$ are the product bases induced by the fixed single-copy basis $\{|i\rangle\}_{i=1}^d$. Let $\Phi_d\in\mathcal{L}(A\otimes A')$ be the normalized maximally entangled state defined earlier, and set
\begin{equation}
\Phi_d^{\otimes n}\in\mathcal{L}(A^{(n)}\otimes A'^{(n)}).
\end{equation}
Let $\Gamma_t:\mathcal{L}(A)\to\mathcal{L}(A)$ be the CPTP map defined earlier, and define its tensor power
\begin{equation}
\Gamma_t^{\otimes n}:\mathcal{L}(A^{(n)})\to\mathcal{L}(A^{(n)}).
\end{equation}

\begin{definition}[$n$-copy output operator and $n$-copy value at $p=2$]\label{def:n-copy-nu2}
For every finite dimensional output system $B^{(n)}$ and every CPTP map
\begin{equation}
\Lambda^{(n)}:\mathcal{L}(A'^{(n)})\to\mathcal{L}(B^{(n)}),
\end{equation}
define
\begin{equation}\label{eq:def-omega-n}
\omega^{(n)}_{A^{(n)}B^{(n)}}(\Lambda^{(n)})
:=
\bigl(\Gamma_t^{\otimes n}\otimes \id_{B^{(n)}}\bigr)
\Bigl(
\bigl(\id_{A^{(n)}}\otimes \Lambda^{(n)}\bigr)(\Phi_d^{\otimes n})
\Bigr)
\in \mathcal{L}(A^{(n)}\otimes B^{(n)}).
\end{equation}
We then define
\begin{equation}\label{eq:def-nu2-n}
\upsilon_2(\Gamma_t^{\otimes n}):=\sup_{\Lambda\ \mathrm{CPTP}}
\bigl\|\omega^{(n)}_{A^{(n)}B^{(n)}}(\Lambda^{(n)})\bigr\|_2.
\end{equation}
\end{definition}
\begin{lemma}[Multiplicativity of the Hilbert-Schmidt norm]\label{lem:HS-multiplicativity}
For all operators $X,Y$,
\begin{equation}
\|X\otimes Y\|_2=\|X\|_2\,\|Y\|_2.
\end{equation}
\end{lemma}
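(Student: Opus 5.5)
The plan is to reduce the claim to the elementary identity $\Tr\bigl((X\otimes Y)^\dagger(X\otimes Y)\bigr)=\Tr(X^\dagger X)\,\Tr(Y^\dagger Y)$ and then take square roots. Start from the definition $\|Z\|_2^2=\Tr(Z^\dagger Z)$, which holds for arbitrary (not necessarily Hermitian) operators.

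\textbf{Step 1.} Use the facts that the adjoint distributes over tensor products and that products of tensors multiply factorwise:
\begin{equation}
(X\otimes Y)^\dagger(X\otimes Y)=(X^\dagger\otimes Y^\dagger)(X\otimes Y)=X^\dagger X\otimes Y^\dagger Y .
\end{equation}

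\textbf{Step 2.} Apply multiplicativity of the trace over tensor products, $\Tr(P\otimes Q)=\Tr(P)\,\Tr(Q)$. In a product basis this is immediate, since the diagonal entries of $P\otimes Q$ are the products $P_{ii}Q_{jj}$. This gives
\begin{equation}
\|X\otimes Y\|_2^2=\Tr(X^\dagger X)\,\Tr(Y^\dagger Y)=\|X\|_2^2\,\|Y\|_2^2 .
\end{equation}

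\textbf{Step 3.} Both sides are nonnegative, so taking square roots yields $\|X\otimes Y\|_2=\|X\|_2\|Y\|_2$.

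There is no real obstacle. The only point to watch is not to assume $X$ and $Y$ are Hermitian. Working with $\Tr(Z^\dagger Z)$ rather than $\Tr(Z^2)$ avoids that assumption, and it also covers rectangular operators between different spaces. As a cross-check, the singular values of $X\otimes Y$ are exactly the products $s_i(X)s_j(Y)$, which gives the same result for every Schatten $p$-norm.
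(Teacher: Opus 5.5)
Your proof is correct and follows essentially the paper's argument: the paper likewise writes $\|X\otimes Y\|_2^2=\Tr\bigl((X^\dagger X)\otimes(Y^\dagger Y)\bigr)=\Tr(X^\dagger X)\,\Tr(Y^\dagger Y)$ and takes square roots. Your version spells out the mixed-product step explicitly, which the paper attributes somewhat loosely to cyclicity of the trace.
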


\begin{proof}
By definition and cyclicity of the trace,
\begin{equation}
\|X\otimes Y\|_2^2
=
\mathrm{Tr}\,\bigl((X^\dagger X)\otimes (Y^\dagger Y)\bigr)
=
\mathrm{Tr}\,(X^\dagger X)\,\mathrm{Tr}\,(Y^\dagger Y)
=
\|X\|_2^2\,\|Y\|_2^2.
\end{equation}
Taking square roots gives the claim.
\end{proof}

\begin{proposition}[Trivial supermultiplicativity]\label{prop:nu2-supermult}
For every $n\in\mathbb{N}$,
\begin{equation}\label{eq:nu2-supermult}
\upsilon_2(\Gamma_t^{\otimes n})\ \ge\ \bigl(\upsilon_2(\Gamma_t)\bigr)^n.
\end{equation}
\end{proposition}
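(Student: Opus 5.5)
The plan is to establish supermultiplicativity by restricting the $n$-copy optimization to product channels. First, by Proposition~\ref{prop:choi-feasible-set} and Remark~\ref{rem:nu2-attainment}, fix a finite dimensional output system $B$ and a CPTP map $\Lambda^\star:\mathcal{L}(A')\to\mathcal{L}(B)$ that attains $\upsilon_2(\Gamma_t)$. If attainment over free $B$ is a concern, we can instead take $\Lambda_\varepsilon$ with value at least $\upsilon_2(\Gamma_t)-\varepsilon$ and let $\varepsilon\to 0$ at the end. In fact Theorem~\ref{thm:nu2-closed-form} already gives explicit optimizers: the identity channel or a constant pure-output channel.

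Next we take the admissible $n$-copy channel $\Lambda^{(n)}:=(\Lambda^\star)^{\otimes n}:\mathcal{L}(A'^{(n)})\to\mathcal{L}(B^{\otimes n})$. It is CPTP because tensor products of CPTP maps are CPTP. Since $\Phi_d^{\otimes n}$ factorizes across the pairs $(A_k,A'_k)$, and both $\Gamma_t^{\otimes n}$ and $(\Lambda^\star)^{\otimes n}$ act copywise, we obtain, after reordering the tensor factors by a unitary permutation,
\begin{equation}
\omega^{(n)}_{A^{(n)}B^{(n)}}\bigl((\Lambda^\star)^{\otimes n}\bigr)
\cong \bigl(\omega_{AB}(\Lambda^\star)\bigr)^{\otimes n}.
\end{equation}
Here $\omega_{AB}(\Lambda^\star)=(\Gamma_t\otimes\id_B)\bigl((\id_A\otimes\Lambda^\star)(\Phi_d)\bigr)$. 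The permutation of subsystems is unitary, so it does not change the Schatten $2$-norm.

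Applying Lemma~\ref{lem:HS-multiplicativity} inductively then gives $\|\omega_{AB}(\Lambda^\star)^{\otimes n}\|_2=\|\omega_{AB}(\Lambda^\star)\|_2^n=\upsilon_2(\Gamma_t)^n$. Because $\upsilon_2(\Gamma_t^{\otimes n})$ in Eq.~\eqref{eq:def-nu2-n} is a supremum over all CPTP $\Lambda^{(n)}$, this product choice yields Eq.~\eqref{eq:nu2-supermult}.

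There is no real obstacle. The only point requiring care is the bookkeeping for the factor reordering $A^{(n)}B^{(n)}\leftrightarrow (AB)^{\otimes n}$, which we justify by unitary invariance of $\|\cdot\|_2$.
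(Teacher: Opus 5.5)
Your proposal is correct and is essentially the paper's argument: take the $n$-fold tensor power of a (near-)optimal one-copy channel, note that the output factorizes copywise into $\omega_{AB}^{\otimes n}$ (up to a norm-preserving reordering of factors), and apply multiplicativity of the Hilbert--Schmidt norm. The paper uses an $\varepsilon$-optimal $\Lambda$, which is your fallback and the cleaner choice because Remark~\ref{rem:nu2-attainment} only gives attainment for a fixed $B$; the explicit optimizers from Theorem~\ref{thm:nu2-closed-form} also settle that point.
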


\begin{proof}
Fix $\varepsilon>0$, and by the definition of $\upsilon_2(\Gamma_t)$, there exist a finite dimensional output system $B$ and a CPTP map $\Lambda:\mathcal{L}(A')\to\mathcal{L}(B)$ such that
\begin{equation}
\left\| \bigl(\Gamma_t\otimes \id_B\bigr)
\bigl((\id_A\otimes \Lambda)(\Phi_d)\bigr) \right\|_2
\ge \upsilon_2(\Gamma_t)-\varepsilon.
\end{equation}
Consider the product map
\begin{equation}
\Lambda^{\otimes n}:\mathcal{L}(A'^{(n)})\to\mathcal{L}(B^{\otimes n}),
\end{equation}
which is CPTP. Since $\Phi_d^{\otimes n}$ is a product state across copies and $\Gamma_t^{\otimes n}$ is a product map across copies,
\begin{equation}
\omega^{(n)}\bigl(\Lambda^{\otimes n}\bigr)
=
\Bigl(
\bigl(\Gamma_t\otimes \id_B\bigr)
\bigl((\id_A\otimes \Lambda)(\Phi_d)\bigr)
\Bigr)^{\otimes n}.
\end{equation}
By Lemma~\ref{lem:HS-multiplicativity},
\begin{equation}
\bigl\|\omega^{(n)}\bigl(\Lambda^{\otimes n}\bigr)\bigr\|_2
=
\left\|
\bigl(\Gamma_t\otimes \id_B\bigr)
\bigl((\id_A\otimes \Lambda)(\Phi_d)\bigr)
\right\|_2^{\,n}
\ge
\bigl(\upsilon_2(\Gamma_t)-\varepsilon\bigr)^n.
\end{equation}
Taking the supremum over all admissible $n$ copy channels yields
\begin{equation}
\upsilon_2(\Gamma_t^{\otimes n})\ge \bigl(\upsilon_2(\Gamma_t)-\varepsilon\bigr)^n.
\end{equation}
Since $\varepsilon>0$ was arbitrary, Eq.~\eqref{eq:nu2-supermult} follows.
\end{proof}

\begin{problem}[All-copy multiplicativity at $p=2$]\label{prob:nu2-multiplicativity}
Determine whether Eq.~\eqref{eq:multiplicativity-all-n} holds. Equivalently, determine whether there exist a finite dimensional output system $B^{(n)}$ and a joint CPTP map
\begin{equation}
\Lambda^{(n)}:\mathcal{L}(A'^{(n)})\to\mathcal{L}(B^{(n)})
\end{equation}
for which
\begin{equation}
\bigl\|\omega^{(n)}(\Lambda^{(n)})\bigr\|_2\ >\ \bigl(\upsilon_2(\Gamma_t)\bigr)^n,
\end{equation}
i.e.,\ whether joint post-processing across $n$ copies can outperform product post-processing.
\end{problem}
In the rest of the paper we prove that the multiplicativity holds.

\subsection{$n$-copy Choi slice reformulation at $p=2$}
Definition~\ref{def:n-copy-nu2} defines $\upsilon_2(\Gamma_t^{\otimes n})$ as a supremum over joint CPTP maps $\Lambda^{(n)}$ acting on $A'^{(n)}$, with arbitrary finite dimensional output space. As in the single copy case in last section, for each fixed output system $B^{(n)}$ this optimization can be expressed equivalently as a maximization over bipartite states whose $A^{(n)}$-marginal is maximally mixed. This reformulation isolates the admissible input set independently of channels and makes explicit the possibility of entangled (non-product) optimizers across $n$ copies.

\begin{definition}[$n$-copy Choi slice feasible set relative to $B^{(n)}$]
Let $B^{(n)}$ be a finite dimensional output system. Define
\begin{equation}\label{eq:def-F-n}
\mathcal{F}^{(n)}_d(B^{(n)}):=\Bigl\{\ \sigma_{A^{(n)}B^{(n)}}\in\mathcal{L}(A^{(n)}\otimes B^{(n)}):
\ \sigma_{A^{(n)}B^{(n)}}\ge 0,\ \rm{Tr}(\sigma_{A^{(n)}B^{(n)}})=1,\ 
\sigma_{A^{(n)}}=\frac{I_{d^n}}{d^n}\ \Bigr\}.
\end{equation}
\end{definition}

\begin{proposition}\label{prop:nu2n-choi-slice}
Let $B^{(n)}$ be a finite dimensional output system. Then the correspondence
\begin{equation}\label{eq:choi-map-n}
\Lambda^{(n)}\ \longmapsto\ \sigma_{A^{(n)}B^{(n)}}:=
(\id_{A^{(n)}}\otimes \Lambda^{(n)})(\Phi_d^{\otimes n})
\end{equation}
is an affine bijection from the set of CPTP maps
\begin{equation}
\Lambda^{(n)}:\mathcal{L}(A'^{(n)})\to\mathcal{L}(B^{(n)})
\end{equation}
onto $\mathcal{F}^{(n)}_d(B^{(n)})$. Consequently,
\begin{equation}\label{eq:nu2n-choi-slice}
\upsilon_2(\Gamma_t^{\otimes n})=\sup_{\substack{B^{(n)}\ }}
\ \max_{\sigma_{A^{(n)}B^{(n)}}\in\mathcal{F}^{(n)}_d(B^{(n)})}
\ \bigl\|(\Gamma_t^{\otimes n}\otimes \id_{B^{(n)}})(\sigma_{A^{(n)}B^{(n)}})\bigr\|_2.
\end{equation}
Equivalently,
\begin{equation}
\upsilon_2(\Gamma_t^{\otimes n})=\sup_{\substack{B^{(n)}\ \mathrm{finite\mbox{ }dimension}\\
\sigma_{A^{(n)}B^{(n)}}\in \mathcal{F}_d^{(n)}(B^{(n)})}}
\ \bigl\|(\Gamma_t^{\otimes n}\otimes \id_{B^{(n)}})(\sigma_{A^{(n)}B^{(n)}})\bigr\|_2.
\end{equation}
\end{proposition}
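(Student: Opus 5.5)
The plan is to reduce the statement to the single-copy Proposition~\ref{prop:choi-feasible-set}, applied with $A$ replaced by $A^{(n)}\simeq\mathbb{C}^{d^n}$. The key observation is that $\Phi_d^{\otimes n}$ is, up to the natural reordering $A^{(n)}\otimes A'^{(n)}\simeq (A\otimes A')^{\otimes n}$, exactly the normalized maximally entangled state $\Phi_{d^n}$ in the product basis $\{|i_1\cdots i_n\rangle\}$:
\begin{equation}
\Phi_d^{\otimes n}=\frac{1}{d^n}\sum_{\mathbf i,\mathbf j}|\mathbf i\rangle\langle\mathbf j|_{A^{(n)}}\otimes|\mathbf i\rangle\langle\mathbf j|_{A'^{(n)}} .
\end{equation}
I will verify this by expanding the tensor product of the single-copy expansions $\Phi_d=\frac1d\sum_{i,j}|i\rangle\langle j|\otimes|i\rangle\langle j|$. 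The only care needed is to keep track of the register reordering. In addition, transpose on $A'^{(n)}$ with respect to the product basis equals $T^{\otimes n}$, which is what the reconstruction formula needs.

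With this identification, the bijection follows from the argument of Proposition~\ref{prop:choi-feasible-set} verbatim, with $d$ replaced by $d^n$. For any CPTP $\Lambda^{(n)}$, the operator $\sigma=(\id\otimes\Lambda^{(n)})(\Phi_d^{\otimes n})$ is positive and has unit trace, and its marginal is $\sigma_{A^{(n)}}=\mathrm{Tr}_{A'^{(n)}}\Phi_d^{\otimes n}=(I_d/d)^{\otimes n}=I_{d^n}/d^n$ by trace preservation. Conversely, for $\sigma\in\mathcal F_d^{(n)}(B^{(n)})$ I set
\begin{equation}
\Lambda_\sigma^{(n)}(X)=d^n\,\mathrm{Tr}_{A^{(n)}}\bigl[(X^{T}\otimes I_{B^{(n)}})\sigma\bigr].
\end{equation}
This map is CP by Choi's theorem and TP because of the marginal condition. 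The block computation then gives $(\id\otimes\Lambda_\sigma^{(n)})(\Phi_d^{\otimes n})=\sigma$, and injectivity follows from the same reconstruction formula. Affinity is immediate because $\Lambda\mapsto(\id\otimes\Lambda)(\Phi_d^{\otimes n})$ is linear.

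For the value formula, I substitute the bijection into Definition~\ref{def:n-copy-nu2}. For each fixed $B^{(n)}$, the supremum over CPTP maps becomes a supremum over $\mathcal F_d^{(n)}(B^{(n)})$ of $\|(\Gamma_t^{\otimes n}\otimes\id)(\sigma)\|_2$. This inner supremum is a maximum because $\mathcal F_d^{(n)}(B^{(n)})$ is compact (closed and bounded in finite dimensions) and the objective is continuous, as in Remark~\ref{rem:nu2-attainment}. Taking the outer supremum over $B^{(n)}$ gives \eqref{eq:nu2n-choi-slice}, and merging the two suprema gives the equivalent joint form. Proposition~\ref{prop:finite-output-reduction} additionally shows that the outer supremum is attained at $\dim B^{(n)}=d^{2n}$, although that is not required here. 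There is no real obstacle; the only point requiring attention is the register-ordering bookkeeping in identifying $\Phi_d^{\otimes n}$ with $\Phi_{d^n}$.
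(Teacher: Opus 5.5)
Your proposal is correct and follows essentially the same route as the paper. The forward direction uses positivity, normalization and the marginal condition. The converse uses the reconstruction map $\Lambda^{(n)}_\sigma(X)=d^n\,\mathrm{Tr}_{A^{(n)}}[(X^{T}\otimes I_{B^{(n)}})\sigma]$, which is the paper's $J=d^n\sigma$. The value formula then follows by substituting into the definition and taking the supremum over $B^{(n)}$. Your explicit identification of $\Phi_d^{\otimes n}$ with $\Phi_{d^n}$ in the product basis makes precise the step the paper summarizes as ``by construction of the Choi correspondence in the chosen basis.'' Your compactness argument for the inner maximum matches what the paper records in the remark that follows.
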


\begin{proof}
Fix a finite dimensional output system $B^{(n)}$.

Step 1: the image of a CPTP map lies in $\mathcal{F}^{(n)}_d(B^{(n)})$:
Let $\Lambda^{(n)}$ be CPTP and set $\sigma:=(\id_{A^{(n)}}\otimes \Lambda^{(n)})(\Phi_d^{\otimes n})$.
Since $\Phi_d^{\otimes n}\ge 0$ and $\id\otimes\Lambda^{(n)}$ is completely positive, $\sigma\ge 0$. Since $\id\otimes\Lambda^{(n)}$ is trace-preserving and $\rm{Tr}(\Phi_d^{\otimes n})=1$, we have $\rm{Tr}(\sigma)=1$. By using $\rm{Tr}_{A'^{(n)}}(\Phi_d^{\otimes n})=I_{d^n}/d^n$ and trace-preservation of $\Lambda^{(n)}$, we obtain
\begin{equation}
\sigma_{A^{(n)}}=\rm{Tr}_{B^{(n)}}(\sigma)=\rm{Tr}_{A'^{(n)}}(\Phi_d^{\otimes n}) =\frac{I_{d^n}}{d^n}.
\end{equation}
Thus $\sigma\in\mathcal{F}^{(n)}_d(B^{(n)})$.

Step 2: every $\sigma\in\mathcal{F}^{(n)}_d(B^{(n)})$ arises from a CPTP map: 
Let $\sigma\in\mathcal{F}^{(n)}_d(B^{(n)})$ and define the (unnormalized) Choi operator
\begin{equation}
J_{A^{(n)}B^{(n)}}:=d^n\,\sigma_{A^{(n)}B^{(n)}}.
\end{equation}
Define a linear map $\Lambda^{(n)}_\sigma:\mathcal{L}(A'^{(n)})\to\mathcal{L}(B^{(n)})$ by
\begin{equation}
\Lambda^{(n)}_\sigma(X):=\rm{Tr}_{A^{(n)}}\,\bigl((X^{T}\otimes I_{B^{(n)}})\,J_{A^{(n)}B^{(n)}}\bigr),
\end{equation}
where $T$ denotes transpose with respect to the fixed product basis on $A'^{(n)}$.
Since $J_{A^{(n)}B^{(n)}}\ge 0$, the map $\Lambda^{(n)}_\sigma$ is completely positive.
Moreover, using $\rm{Tr}_{B^{(n)}}(J)=d^n\,\sigma_{A^{(n)}}=I_{d^n}$, we obtain for all
$X\in\mathcal{L}(A'^{(n)})$,
\begin{equation}
\rm{Tr}\bigl(\Lambda^{(n)}_\sigma(X)\bigr) =\rm{Tr}\bigl((X^{T}\otimes I_{B^{(n)}})\,J\bigr) =\rm{Tr}\bigl(X^{T}\rm{Tr}_{B^{(n)}}(J)\bigr)=\rm{Tr}(X).
\end{equation}
Hence $\Lambda^{(n)}_\sigma$ is trace-preserving.
By construction of the Choi correspondence in the chosen basis,
\begin{equation}
(\id_{A^{(n)}}\otimes \Lambda^{(n)}_\sigma)(\Phi_d^{\otimes n})=\sigma.
\end{equation}
Step 3: bijectivity and the reformulation of $\upsilon_2(\Gamma_t^{\otimes n})$: 
In fact, steps 1-2 show that for the fixed output system $B^{(n)}$, the map ~\eqref{eq:choi-map-n}
is an affine bijection between CPTP maps $\Lambda^{(n)}:\mathcal{L}(A'^{(n)})\to\mathcal{L}(B^{(n)})$ and $\mathcal{F}^{(n)}_d(B^{(n)})$. Substituting $\sigma=(\id\otimes\Lambda^{(n)})(\Phi_d^{\otimes n})$ into definition \ref{def:n-copy-nu2} yields the stated reformulation after taking the supremum over all finite dimensional output systems $B^{(n)}$ yields
\eqref{eq:nu2n-choi-slice}.
\end{proof}

\begin{remark}[Attainment for fixed output space]
For each fixed finite dimensional output system $B^{(n)}$, the set $\mathcal{F}^{(n)}_d(B^{(n)})$ is compact and the map
\begin{equation}
\sigma\longmapsto
\|(\Gamma_t^{\otimes n}\otimes \id_{B^{(n)}})(\sigma)\|_2
\end{equation}
is continuous. Hence, for each fixed $B^{(n)}$, the maximum in
Eq.~\eqref{eq:nu2n-choi-slice} over $\mathcal{F}^{(n)}_d(B^{(n)})$ is attained; however, $\upsilon_2(\Gamma_t^{\otimes n})$ is an outer supremum over all finite dimension output spaces $B^{(n)}$.
\end{remark}

\subsection{A quadratic form expression for $\upsilon_2(\Gamma_t^{\otimes n})$}

In the $p=2$ case, the objective $\|(\Gamma_t^{\otimes n}\otimes \id_{B^{(n)}})(\sigma)\|_2$ can be squared and rewritten as a quadratic form in $\sigma$ involving the composed map
\begin{equation}
(\Gamma_t^2)^{\otimes n}.
\end{equation}
This is a useful simplification, since it removes the transpose from the norm expression and is the starting point for a systematic tensor product analysis.
\begin{definition}[The completely depolarizing map]\label{def:depolarizing-D}
Define $D:\mathcal{L}(A)\to\mathcal{L}(A)$ by
\begin{equation}\label{eq:def-D}
D(X):=\mathrm{Tr}(X)\,\frac{I_d}{d}.
\end{equation}
\end{definition}

\begin{lemma}[An algebraic identity for $\Gamma_t^2$]\label{lem:Gamma-square}
For every $t\in\mathbb{R}$,
\begin{equation}\label{eq:Gamma-square}
\Gamma_t \circ \Gamma_t:=\Gamma_t^2=t^2\,\id_A+(1-t^2)\,D.
\end{equation}
\end{lemma}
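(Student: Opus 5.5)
The plan is to expand $\Gamma_t$ as $t\,T + (1-t)\,D$, where $T(X)=X^T$ is the transpose map and $D$ is the completely depolarizing map of Definition~\ref{def:depolarizing-D}. Composition is bilinear, so
\begin{equation*}
\Gamma_t^2 = t^2\,T\circ T + t(1-t)\,(T\circ D + D\circ T) + (1-t)^2\,D\circ D .
\end{equation*}
It then suffices to identify each of the four compositions as a multiple of $\id_A$ or of $D$. No positivity or CP-range assumption enters, which matches the statement holding for every $t\in\mathbb{R}$.

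The four compositions are computed as follows.
\begin{itemize}
\item $T\circ T=\id_A$, since transposition is an involution in the fixed basis.
\item $T\circ D = D$, because $D(X)=\mathrm{Tr}(X)\,I_d/d$ and $I_d^T=I_d$.
\item $D\circ T = D$, because $\mathrm{Tr}(X^T)=\mathrm{Tr}(X)$.
\item $D\circ D = D$, because $\mathrm{Tr}(I_d/d)=1$.
\end{itemize}
Substituting these gives
\begin{equation*}
\Gamma_t^2 = t^2\,\id_A + \bigl(2t(1-t)+(1-t)^2\bigr)D .
\end{equation*}
The coefficient simplifies as $(1-t)\bigl(2t+1-t\bigr)=(1-t)(1+t)=1-t^2$, which is exactly Eq.~\eqref{eq:Gamma-square}. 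This is the same coefficient simplification already used at the end of Lemma~\ref{lem:p2-purity-identity}.

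There is no real obstacle. The only point needing care is to check the identities $T\circ D=D\circ T=D$ and $D\circ D=D$ at the level of maps, that is, on arbitrary $X\in\mathcal{L}(A)$. Each is a one-line verification.
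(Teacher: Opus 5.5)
Your proposal is correct and follows essentially the same route as the paper: write $\Gamma_t=t\,T+(1-t)\,D$, use $T^2=\id_A$, $T\circ D=D\circ T=D$, $D^2=D$, and simplify the coefficient $2t(1-t)+(1-t)^2=1-t^2$. Your map-level checks of each composition are exactly what the paper's proof verifies.
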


\begin{proof}
Write $\Gamma_t=t\,T+(1-t)\,D$, where $T(X)=X^T$ is the transpose map with respect to the fixed basis.
The maps $T$ and $D$ satisfy
\begin{equation}
T^2=\id_A,\qquad D^2=D,\qquad TD=DT=D.
\end{equation}
Indeed, $T^2=\id_A$ because transpose is an involution. Also,
\begin{equation}
D(D(X))=\mathrm{Tr}(X)\,D(I_d/d)
=\mathrm{Tr}(X)\,\frac{I_d}{d}=D(X),
\end{equation}
so $D^2=D$. Finally, $TD=DT=D$ because $D(X)$ is proportional to the identity and
$\mathrm{Tr}(X^T)=\mathrm{Tr}(X)$. Expanding the square gives
\begin{align*}
\Gamma_t^2&=\bigl(tT+(1-t)D\bigr)\bigl(tT+(1-t)D\bigr)\\
&=t^2\,\id_A+\bigl(2t(1-t)+(1-t)^2\bigr)D\\
&=t^2\,\id_A+(1-t^2)\,D,
\end{align*}
which is Eq.~\eqref{eq:Gamma-square}.
\end{proof}

\begin{lemma}[Self-adjoint property of $\Gamma_t$]\label{lem:Gamma-selfadjoint}
For all $X,Y\in\mathcal{L}(A)$,
\begin{equation}\label{eq:Gamma-selfadjoint}
\mathrm{Tr}\,\bigl(Y\,\Gamma_t(X)\bigr)=\mathrm{Tr}\,\bigl(\Gamma_t(Y)\,X\bigr).
\end{equation}
Consequently, for every finite dimensional system $B$ and all
$X_{AB},Y_{AB}\in\mathcal{L}(A\otimes B)$,
\begin{equation}\label{eq:Gamma-tensor-selfadjoint}
\mathrm{Tr}\bigl(Y_{AB}\,(\Gamma_t\otimes \id_B)(X_{AB})\bigr)
=\mathrm{Tr}\bigl((\Gamma_t\otimes \id_B)(Y_{AB})\,X_{AB}\bigr).
\end{equation}
\end{lemma}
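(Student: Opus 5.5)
The plan is to prove the single-system identity \eqref{eq:Gamma-selfadjoint} by splitting $\Gamma_t$ into its two pieces, exactly as in the proof of Lemma~\ref{lem:Gamma-square}. We write $\Gamma_t=t\,T+(1-t)\,D$ with $T(X)=X^T$ and $D$ as in Definition~\ref{def:depolarizing-D}. By linearity in $\Gamma_t$, it suffices to check that each of $T$ and $D$ satisfies $\mathrm{Tr}(Y\,\Phi(X))=\mathrm{Tr}(\Phi(Y)\,X)$.

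For $D$ this is immediate: both sides equal $\frac{1}{d}\mathrm{Tr}(X)\mathrm{Tr}(Y)$.

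For the transpose we use $\mathrm{Tr}(Z^T)=\mathrm{Tr}(Z)$ and $(YX)^T=X^TY^T$:
\begin{equation}
\mathrm{Tr}(Y X^T)=\mathrm{Tr}\bigl((YX^T)^T\bigr)=\mathrm{Tr}(X Y^T)=\mathrm{Tr}(Y^T X).
\end{equation}
Equivalently, in matrix entries both sides equal $\sum_{i,j}Y_{ij}X_{ij}$. Note that the pairing here is the bilinear trace pairing, with no complex conjugation. That is exactly what the statement asks for, so no adjoint or conjugation issue arises.

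For the tensor version \eqref{eq:Gamma-tensor-selfadjoint}, both sides are bilinear in $(X_{AB},Y_{AB})$. So it suffices to verify the identity on simple tensors $X_{AB}=X\otimes X'$ and $Y_{AB}=Y\otimes Y'$, since these span $\mathcal{L}(A\otimes B)$. On such tensors
\begin{equation}
\mathrm{Tr}\bigl((Y\otimes Y')(\Gamma_t(X)\otimes X')\bigr)=\mathrm{Tr}\bigl(Y\Gamma_t(X)\bigr)\,\mathrm{Tr}(Y'X'),
\end{equation}
and by the single-system identity this equals $\mathrm{Tr}\bigl(\Gamma_t(Y)X\bigr)\mathrm{Tr}(X'Y')$. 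Reassembling the product gives the right-hand side, using $\mathrm{Tr}(Y'X')=\mathrm{Tr}(X'Y')$.

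There is no real obstacle. The only point needing care is to keep the bilinear pairing convention consistent; with it, the transpose is self-dual. One could alternatively invoke the partial-transpose identity $\mathrm{Tr}(X^{T_A}Z)=\mathrm{Tr}(XZ^{T_A})$ already used in the proof of Lemma~\ref{lem:p2-purity-identity}, together with the fact that $D\otimes\id_B$ sends $Z$ to $\frac{I_d}{d}\otimes \mathrm{Tr}_A Z$, which gives a direct proof of the tensor version.
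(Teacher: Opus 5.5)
Your proposal is correct and follows essentially the same route as the paper: split $\Gamma_t$ into the transpose term and the depolarizing term, check $\mathrm{Tr}(YX^T)=\mathrm{Tr}(Y^TX)$ and the symmetry of $\mathrm{Tr}(X)\,\mathrm{Tr}(Y)/d$, then extend to $\Gamma_t\otimes\id_B$ by linearity. Your explicit reduction to simple tensors spells out what the paper compresses into ``applying this identity on the $A$-factor and using linearity.''
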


\begin{proof}
Using
\begin{equation}
\Gamma_t(X)=tX^T+(1-t)\,\mathrm{Tr}(X)\frac{I_d}{d},
\end{equation}
we compute
\begin{align*}
\mathrm{Tr}\bigl(Y\,\Gamma_t(X)\bigr)&=t\,\mathrm{Tr}(Y X^T)
+(1-t)\,\mathrm{Tr}(X)\,\mathrm{Tr}\,\left(Y\frac{I_d}{d}\right)\\
&=t\,\mathrm{Tr}(Y^T X) +(1-t)\,\mathrm{Tr}(Y)\,\mathrm{Tr}\,\left(X\frac{I_d}{d}\right)\\
&=\mathrm{Tr}\bigl(\Gamma_t(Y)\,X\bigr),
\end{align*}
which proves Eq.~\eqref{eq:Gamma-selfadjoint}. Further, applying this identity on the $A$-factor and using linearity gives Eq.~\eqref{eq:Gamma-tensor-selfadjoint}.
\end{proof}
\begin{proposition}[Quadratic form]\label{prop:nu2n-quadratic-form}
Let $n\in\mathbb{N}$, let $B^{(n)}$ be a finite dimensional output system, and let
\begin{equation}
\sigma_{A^{(n)}B^{(n)}}\in \mathcal{F}^{(n)}_d(B^{(n)}).
\end{equation}
Define
\begin{equation}
\omega_{A^{(n)}B^{(n)}}:=
(\Gamma_t^{\otimes n}\otimes \id_{B^{(n)}})(\sigma_{A^{(n)}B^{(n)}}).
\end{equation}
Then
\begin{equation}\label{eq:nu2n-quadratic-form}
\|\omega_{A^{(n)}B^{(n)}}\|_2^2=\mathrm{Tr}\,\left(
\sigma_{A^{(n)}B^{(n)}}\,
\bigl((\Gamma_t^2)^{\otimes n}\otimes \id_{B^{(n)}}\bigr)(\sigma_{A^{(n)}B^{(n)}})
\right).
\end{equation}
Consequently,
\begin{equation}\label{eq:nu2n-squared-variational}
\bigl(\upsilon_2(\Gamma_t^{\otimes n})\bigr)^2
=\sup_{\substack{B^{(n)}\ }}
\ \max_{\sigma_{A^{(n)}B^{(n)}}\in\mathcal{F}^{(n)}_d(B^{(n)})}
\mathrm{Tr}\,\left(\sigma_{A^{(n)}B^{(n)}}\,
\bigl((\Gamma_t^2)^{\otimes n}\otimes \id_{B^{(n)}}\bigr)(\sigma_{A^{(n)}B^{(n)}})
\right).
\end{equation}
\end{proposition}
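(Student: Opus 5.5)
The plan is to use three facts: $\omega$ is Hermitian, $\Gamma_t^{\otimes n}\otimes\id_{B^{(n)}}$ is self-adjoint for the Hilbert--Schmidt inner product, and the composition of tensor-product maps factors copy by copy. Together these collapse $\|\omega\|_2^2$ into a single quadratic form in $\sigma$.

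\emph{Hermiticity.} Since $\sigma:=\sigma_{A^{(n)}B^{(n)}}\ge 0$ is Hermitian and $\Gamma_t$ is Hermiticity-preserving (a real combination of the transpose and $D$), the operator $\omega:=\omega_{A^{(n)}B^{(n)}}$ is Hermitian. Hence $\|\omega\|_2^2=\mathrm{Tr}(\omega^\dagger\omega)=\mathrm{Tr}(\omega\,\omega)$.

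\emph{Moving one copy of $\Gamma_t^{\otimes n}$ across.} Write $\Theta:=\Gamma_t^{\otimes n}\otimes\id_{B^{(n)}}$. Lemma~\ref{lem:Gamma-selfadjoint} gives self-adjointness of $\Gamma_t\otimes\id$ for the bilinear pairing $(X,Y)\mapsto\mathrm{Tr}(XY)$, for an arbitrary auxiliary system. Write $\Theta$ as a composition of $n$ maps, each of the form $\Gamma_t$ on the factor $A_k$ tensored with the identity on all remaining factors. Applying Lemma~\ref{lem:Gamma-selfadjoint} to each factor in turn, taking all other factors as the auxiliary system, gives
\begin{equation}
\mathrm{Tr}\bigl(Y\,\Theta(X)\bigr)=\mathrm{Tr}\bigl(\Theta(Y)\,X\bigr)
\end{equation}
for all operators $X,Y$. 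Taking $Y=\omega=\Theta(\sigma)$ and $X=\sigma$ yields
\begin{equation}
\mathrm{Tr}(\omega\,\omega)=\mathrm{Tr}\bigl(\Theta(\Theta(\sigma))\,\sigma\bigr).
\end{equation}

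\emph{Factoring the composition.} Since $\Theta\circ\Theta=(\Gamma_t\circ\Gamma_t)^{\otimes n}\otimes\id_{B^{(n)}}=(\Gamma_t^2)^{\otimes n}\otimes\id_{B^{(n)}}$, cyclicity of the trace gives Eq.~\eqref{eq:nu2n-quadratic-form}.

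\emph{The variational formula.} Square the characterization of Proposition~\ref{prop:nu2n-choi-slice}. Squaring is monotone on nonnegative reals, so it commutes with both the inner maximum and the outer supremum. Substituting Eq.~\eqref{eq:nu2n-quadratic-form} then yields Eq.~\eqref{eq:nu2n-squared-variational}.

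The only step needing care is the factor-by-factor extension of Lemma~\ref{lem:Gamma-selfadjoint} to $n$ copies. It is routine if done on product operators and extended by linearity, keeping the pairing bilinear rather than sesquilinear, which is legitimate because everything here is Hermitian.
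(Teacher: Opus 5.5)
Your proposal is correct and follows the paper's proof: you write $\|\omega\|_2^2=\Tr(\omega^2)$, use Hilbert--Schmidt self-adjointness of $\Gamma_t^{\otimes n}\otimes\id_{B^{(n)}}$ from Lemma~\ref{lem:Gamma-selfadjoint} to move one copy of the map onto the other argument, and collapse the composition to $(\Gamma_t^2)^{\otimes n}\otimes\id_{B^{(n)}}$. The only differences are minor: you justify $\Tr(\omega^2)$ by Hermiticity-preservation where the paper uses positivity via complete positivity, and you spell out the factor-by-factor extension of the lemma that the paper asserts in one line.
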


\begin{proof}
Set
\begin{equation}
L:=\Gamma_t^{\otimes n}\otimes \id_{B^{(n)}}.
\end{equation}
Since $\sigma\ge 0$ and $L$ is completely positive, $L(\sigma)\ge 0$. Hence
$\omega=L(\sigma)$ is positive semidefinite, and therefore
\begin{equation}
\|\omega\|_2^2=\mathrm{Tr}(\omega^2)=\mathrm{Tr}\bigl(L(\sigma)\,L(\sigma)\bigr).
\end{equation}
By Lemma \ref{lem:Gamma-selfadjoint}, $\Gamma_t$ is self-adjoint with respect to the
Hilbert-Schmidt inner product, and therefore so is $L$:
\begin{equation}
\mathrm{Tr}\bigl(L(X)\,Y\bigr)=\mathrm{Tr}\bigl(X\,L(Y)\bigr)
\qquad \text{for all }X,Y\in\mathcal{L}(A^{(n)}\otimes B^{(n)}).
\end{equation}
Applying this with $X=\sigma$ and $Y=L(\sigma)$ yields
\begin{equation}
\mathrm{Tr}\bigl(L(\sigma)\,L(\sigma)\bigr)=\mathrm{Tr}\bigl(\sigma\,L(L(\sigma))\bigr).
\end{equation}
Finally,
\begin{equation}
L\circ L =(\Gamma_t^{\otimes n}\circ\Gamma_t^{\otimes n})\otimes \id_{B^{(n)}}
= (\Gamma_t^2)^{\otimes n}\otimes \id_{B^{(n)}},
\end{equation}
which gives Eq.~\eqref{eq:nu2n-quadratic-form}.

For each fixed output system $B^{(n)}$, maximizing over
$\sigma\in\mathcal{F}^{(n)}_d(B^{(n)})$ gives the corresponding fixed output variational formula.
Taking the supremum over all finite dimensional output systems $B^{(n)}$ yields
Eq.~\eqref{eq:nu2n-squared-variational}.
\end{proof}

\begin{corollary}[A transpose-free representation]\label{cor:nu2n-transpose-free}
For every $n\in\mathbb{N}$,
\begin{equation}\label{eq:nu2n-transpose-free}
\bigl(\upsilon_2(\Gamma_t^{\otimes n})\bigr)^2
=\sup_{\substack{B^{(n)}\ }}
\ \max_{\sigma\in\mathcal{F}^{(n)}_d(B^{(n)})}
\mathrm{Tr}\,\left(\sigma\,\bigl(\bigl(t^2\,\id_A+(1-t^2)D\bigr)^{\otimes n}\otimes \id_{B^{(n)}}\bigr)(\sigma)
\right),
\end{equation}
where $D$ is the completely depolarizing map from Definition~\ref{def:depolarizing-D}.
\end{corollary}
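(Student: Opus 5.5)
This corollary is a direct substitution into Proposition~\ref{prop:nu2n-quadratic-form}. That proposition gives the squared value as a supremum over finite dimensional $B^{(n)}$ of a maximum over $\sigma\in\mathcal{F}^{(n)}_d(B^{(n)})$ of $\mathrm{Tr}\bigl(\sigma\,((\Gamma_t^2)^{\otimes n}\otimes\id_{B^{(n)}})(\sigma)\bigr)$. The only remaining work is to replace the single-copy map $\Gamma_t^2$ by its transpose-free form from Lemma~\ref{lem:Gamma-square}, namely $\Gamma_t^2=t^2\,\id_A+(1-t^2)D$.

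The steps, in order, are as follows.

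First, I note that Lemma~\ref{lem:Gamma-square} is an equality of linear maps on $\mathcal{L}(A)$, valid for all real $t$ and hence in particular on the CP-range.

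Second, since the tensor power of linear maps is well defined, equal single-copy maps have equal tensor powers. This gives $(\Gamma_t^2)^{\otimes n}=\bigl(t^2\id_A+(1-t^2)D\bigr)^{\otimes n}$ as maps on $\mathcal{L}(A^{(n)})$. Tensoring with $\id_{B^{(n)}}$ preserves the equality.

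Third, I apply both sides to any $\sigma\in\mathcal{F}^{(n)}_d(B^{(n)})$ and pair the result with $\sigma$ under the trace. The integrands in \eqref{eq:nu2n-squared-variational} and \eqref{eq:nu2n-transpose-free} then agree pointwise for every $\sigma$ and every output space. The inner maxima therefore coincide, and so do the outer suprema, which proves \eqref{eq:nu2n-transpose-free}.

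There is essentially no obstacle. The one point to state explicitly is that $(\Gamma_t\circ\Gamma_t)^{\otimes n}=\Gamma_t^{\otimes n}\circ\Gamma_t^{\otimes n}$, i.e.\ that composition and tensor products of maps interchange. This was already used in the proof of Proposition~\ref{prop:nu2n-quadratic-form} and can be checked on product operators $X_1\otimes\cdots\otimes X_n$, then extended by linearity. It may also be worth remarking that the transpose has disappeared entirely. This is what makes the subsequent tensor-power analysis possible: one can expand the product over subsets $S\subseteq\{1,\dots,n\}$, with $t^{2|S|}(1-t^2)^{n-|S|}$ weighting the term in which $D$ acts on the complement of $S$.
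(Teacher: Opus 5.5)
Your proposal is correct and matches the paper's own proof, which consists of substituting $\Gamma_t^2=t^2\,\id_A+(1-t^2)D$ from Lemma~\ref{lem:Gamma-square} into the variational formula of Proposition~\ref{prop:nu2n-quadratic-form}. Your added remarks, that equal single-copy maps have equal tensor powers and that composition commutes with tensor products, spell out details the paper leaves implicit.
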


\begin{proof}
Substitution of Eq.~\eqref{eq:Gamma-square} from Lemma \ref{lem:Gamma-square} into
Eq.~\eqref{eq:nu2n-squared-variational} yields the claim.
\end{proof}

\subsection{Subset expansion and a marginal-purity identity}

Corollary \ref{cor:nu2n-transpose-free} reduces the $n$-copy $p=2$ objective to a quadratic form involving
\begin{equation}
\bigl(t^2\,\id_A+(1-t^2)D\bigr)^{\otimes n}.
\end{equation}
The present subsection expands this tensor power as a sum over subsets $S\subseteq\{1,\dots,n\}$ and shows that, when inserted into the quadratic form, each subset term evaluates to a normalized purity of a marginal of the optimizing state $\sigma\in\mathcal{F}_d^{(n)}$. This yields an exact subset-purity formula for $\bigl(\upsilon_2(\Gamma_t^{\otimes n})\bigr)^2$.

For $n\in\mathbb{N}$ we write $[n]:=\{1,2,\dots,n\}$. For $S\subseteq[n]$, write $S^c:=[n]\setminus S$ and
\begin{equation}
A_S:=\bigotimes_{k\in S} A_k,\qquad A_{S^c}:=\bigotimes_{k\in S^c} A_k,
\qquad \text{so that }A^{(n)}\simeq A_S\otimes A_{S^c}.
\end{equation}
For $\sigma_{A^{(n)}B^{(n)}}\in\mathcal{L}(A^{(n)}\otimes B^{(n)})$, define the reduced operator
\begin{equation}\label{eq:def-sigma-ASB}
\sigma_{A_SB^{(n)}}:=\mathrm{Tr}_{A_{S^c}}(\sigma_{A^{(n)}B^{(n)}})\ \in\ \mathcal{L}(A_S\otimes B^{(n)}).
\end{equation}
\begin{lemma}[Subset expansion of $\bigl(t^2\id+(1-t^2)D\bigr)^{\otimes n}$]\label{lem:subset-expansion}
For every $n\in\mathbb{N}$,
\begin{equation}\label{eq:subset-expansion}
\bigl(t^2\,\id_A+(1-t^2)D\bigr)^{\otimes n}
=\sum_{S\subseteq[n]} t^{2|S|}(1-t^2)^{|S^c|}
\ \Bigl(\id_{A_S}\otimes D_{A_{S^c}}\Bigr),
\end{equation}
where $\id_{A_S}$ denotes the identity map on $\mathcal{L}(A_S)$ and
$D_{A_{S^c}}$ denotes the tensor product of $D$ on the factors indexed by $S^c$.
\end{lemma}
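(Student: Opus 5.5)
The identity is a purely algebraic consequence of multilinearity of the tensor product of linear maps, and I will prove it by expanding the $n$-fold tensor product factor by factor. Write $M:=t^2\,\id_A+(1-t^2)D$, viewed as a linear map on $\mathcal{L}(A)$. The tensor product of linear maps $\mathcal{L}(A_1)\to\mathcal{L}(A_1),\dots,\mathcal{L}(A_n)\to\mathcal{L}(A_n)$ is multilinear in its factors. So $M^{\otimes n}=M_1\otimes\cdots\otimes M_n$, with $M_k=t^2\,\id_{A_k}+(1-t^2)D_{A_k}$, expands into $2^n$ terms. Each term is obtained by choosing, for each $k\in[n]$, either the summand $t^2\id_{A_k}$ or the summand $(1-t^2)D_{A_k}$.

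The key step is to index these choices by the subset $S\subseteq[n]$ of positions where $\id$ is chosen, so that $D$ is chosen on $S^c$. Pulling the scalars out by multilinearity, the term labelled by $S$ equals
\begin{equation*}
t^{2|S|}(1-t^2)^{|S^c|}\,\bigotimes_{k=1}^n E_k^{(S)},
\end{equation*}
where $E_k^{(S)}=\id_{A_k}$ for $k\in S$ and $E_k^{(S)}=D_{A_k}$ for $k\in S^c$. Since $S\mapsto$ (choice pattern) is a bijection between subsets of $[n]$ and the $2^n$ terms, summing over $S$ gives the right-hand side of Eq.~\eqref{eq:subset-expansion}. The result can be formalized cleanly by induction on $n$. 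For $n=1$ the statement is the definition of $M$, with $S=\{1\}$ and $S=\emptyset$. For the inductive step, write $M^{\otimes(n+1)}=M^{\otimes n}\otimes M_{n+1}$, distribute using bilinearity, and note that subsets of $[n+1]$ are exactly $S$ or $S\cup\{n+1\}$ for $S\subseteq[n]$. This adjusts the exponents correctly: $|S\cup\{n+1\}|=|S|+1$, while in the other case $|S^c|$ grows by one.

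The only point requiring care, which I regard as the (mild) main obstacle, is the identification of $\bigotimes_k E_k^{(S)}$ with $\id_{A_S}\otimes D_{A_{S^c}}$ under the reordering isomorphism $A^{(n)}\simeq A_S\otimes A_{S^c}$. Here I will use that the tensor product of maps commutes with permutations of the tensor factors. It therefore suffices to check the equality on product operators $X_1\otimes\cdots\otimes X_n$, which span $\mathcal{L}(A^{(n)})$. Both sides send such an operator to $\bigotimes_{k\in S}X_k\otimes\prod_{k\in S^c}\mathrm{Tr}(X_k)\,\frac{I_{A_{S^c}}}{d^{|S^c|}}$, suitably reordered. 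Linearity then extends the equality to all of $\mathcal{L}(A^{(n)})$, completing the proof.
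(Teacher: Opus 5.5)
Your proposal is correct and follows the same route as the paper, whose proof is exactly the distributive expansion of $\bigotimes_{k=1}^n\bigl(t^2\,\id_{A_k}+(1-t^2)D_{A_k}\bigr)$, with the terms indexed by the subset $S$ of positions where $\id$ is chosen. Your induction on $n$ and your check on product operators for the reordering $A^{(n)}\simeq A_S\otimes A_{S^c}$ only make explicit details that the paper leaves implicit.
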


\begin{proof}
This is the distributive expansion of the $n$-fold tensor product
\begin{equation}
\bigotimes_{k=1}^n \bigl(t^2\,\id_{A_k}+(1-t^2)D_{A_k}\bigr).
\end{equation}
For each subset $S\subseteq[n]$, select the term $t^2\,\id_{A_k}$ for $k\in S$ and the term $(1-t^2)D_{A_k}$ for $k\in S^c$; the product of the coefficients is $t^{2|S|}(1-t^2)^{|S^c|}$ and the corresponding tensor-product map is $\id_{A_S}\otimes D_{A_{S^c}}$. Summing over all $S$ yields
Eq.~\eqref{eq:subset-expansion}.
\end{proof}

\begin{lemma}[Marginal purity identity]\label{lem:marginal-purity-identity}
Let $S\subseteq[n]$ and let $\sigma_{A^{(n)}B^{(n)}}\in\mathcal{L}(A^{(n)}\otimes B^{(n)})$ be arbitrary.
Then
\begin{equation}\label{eq:marginal-purity-identity}
\mathrm{Tr}\,\left( \sigma_{A^{(n)}B^{(n)}} \, \bigl((\id_{A_S}\otimes D_{A_{S^c}})\otimes \id_{B^{(n)}}\bigr)(\sigma_{A^{(n)}B^{(n)}})\right)= \frac{1}{d^{|S^c|}}\,
\mathrm{Tr}\,\left(\sigma_{A_SB^{(n)}}^2\right),
\end{equation}
where $\sigma_{A_SB^{(n)}}$ is defined in Eq.~\eqref{eq:def-sigma-ASB}.
\end{lemma}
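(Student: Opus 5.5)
The plan is to compute the action of $\id_{A_S}\otimes D_{A_{S^c}}\otimes\id_{B^{(n)}}$ on $\sigma$ explicitly, and then use the defining property of the partial trace. No positivity or normalization of $\sigma$ is needed, only linearity, consistent with the statement holding for arbitrary $\sigma$.

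\textbf{Step 1.} I would show that for every $X\in\mathcal{L}(A_S\otimes A_{S^c}\otimes B^{(n)})$,
\begin{equation}
\bigl(\id_{A_S}\otimes D_{A_{S^c}}\otimes\id_{B^{(n)}}\bigr)(X)=\frac{I_{A_{S^c}}}{d^{|S^c|}}\otimes \mathrm{Tr}_{A_{S^c}}(X),
\end{equation}
with the tensor factors understood in the order $A_S\otimes A_{S^c}\otimes B^{(n)}$ after the obvious reordering. The argument is the same as in Eq.~\eqref{eq:trace-out-identity} of Lemma~\ref{lem:constant-lower-bound}. On simple tensors $X_S\otimes X_{S^c}\otimes Y$, the map $D_{A_{S^c}}=\bigotimes_{k\in S^c}D_{A_k}$ sends $X_{S^c}$ to $\mathrm{Tr}(X_{S^c})\,I_{A_{S^c}}/d^{|S^c|}$. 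This holds because a tensor product of maps of the form $\mathrm{Tr}(\cdot)I_d/d$ is again of this form on the product space: it is clear on product operators, and the general case follows by linearity. The identity then extends to all $X$ by linearity.

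\textbf{Step 2.} Apply Step 1 with $X=\sigma$, which gives $\frac{1}{d^{|S^c|}}\,I_{A_{S^c}}\otimes\sigma_{A_SB^{(n)}}$. Then use
\begin{equation}
\mathrm{Tr}\bigl(Z\,(I_{A_{S^c}}\otimes W)\bigr)=\mathrm{Tr}\bigl(\mathrm{Tr}_{A_{S^c}}(Z)\,W\bigr),
\end{equation}
which is exactly the computation already used in part (ii) of Lemma~\ref{lem:p2-purity-identity}. Taking $Z=\sigma$ and $W=\sigma_{A_SB^{(n)}}$ yields $d^{-|S^c|}\,\mathrm{Tr}(\sigma_{A_SB^{(n)}}^2)$, which is Eq.~\eqref{eq:marginal-purity-identity}.

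\textbf{Edge cases.} When $S^c=\emptyset$, the convention is $D_{A_\emptyset}=1$ and $\mathrm{Tr}_{A_\emptyset}=\id$, so the identity reduces to $\mathrm{Tr}(\sigma^2)$. When $S=\emptyset$, the marginal is $\sigma_{B^{(n)}}$.

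\textbf{Main obstacle.} The only step needing care is the bookkeeping of the subsystem reordering and of the factorization of $D_{A_{S^c}}$ over copies. Both are routine once written on product operators, so I do not expect a real obstacle.
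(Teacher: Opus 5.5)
Your proposal is correct and follows essentially the same route as the paper. The paper likewise shows on simple tensors, then extends by linearity, that $\id_{A_S}\otimes D_{A_{S^c}}$ acts as $X\mapsto \mathrm{Tr}_{A_{S^c}}(X)\otimes I_{A_{S^c}}/d^{|S^c|}$, and then uses that $\mathrm{Tr}_{A_{S^c}}$ is the Hilbert--Schmidt adjoint of $X\mapsto X\otimes I_{A_{S^c}}$ to obtain $d^{-|S^c|}\,\mathrm{Tr}(\sigma_{A_SB^{(n)}}^2)$, needing only linearity, as you observed.
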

\begin{proof}
Set $\rho:=\sigma_{A_SB^{(n)}}=\mathrm{Tr}_{A_{S^c}}(\sigma_{A^{(n)}B^{(n)}})$ and let $S\subseteq[n]$ and let $X\in\mathcal{L}(A^{(n)})\simeq\mathcal{L}(A_S\otimes A_{S^c})$. By definition of $D$ and of the tensor product map $D_{A_{S^c}}$, we have
\begin{equation}
(\id_{A_S}\otimes D_{A_{S^c}})(X_S\otimes X_{S^c})
=X_S\otimes D_{A_{S^c}}(X_{S^c})=X_S\otimes \mathrm{Tr}(X_{S^c})\,\frac{I_{A_{S^c}}}{d^{|S^c|}}.
\end{equation}
On the other hand, $\mathrm{Tr}_{A_{S^c}}(X_S\otimes X_{S^c})=\mathrm{Tr}(X_{S^c})\,X_S$, so
\begin{equation}
\mathrm{Tr}_{A_{S^c}}(X_S\otimes X_{S^c})\otimes \frac{I_{A_{S^c}}}{d^{|S^c|}}
=\mathrm{Tr}(X_{S^c})\,X_S\otimes \frac{I_{A_{S^c}}}{d^{|S^c|}},
\end{equation}
Therefore we obtain
\begin{equation}\label{eq:trace-and-replace}
(\id_{A_S}\otimes D_{A_{S^c}})(X)
=\mathrm{Tr}_{A_{S^c}}(X)\ \otimes\ \frac{I_{A_{S^c}}}{d^{|S^c|}}.
\end{equation}
where applying this to the $A$-registers and leaving $B^{(n)}$ untouched, yields
\begin{equation}
\bigl((\id_{A_S}\otimes D_{A_{S^c}})\otimes \id_{B^{(n)}}\bigr)(\sigma)
=\rho_{A_SB^{(n)}}\otimes \frac{I_{A_{S^c}}}{d^{|S^c|}},
\end{equation}
after the canonical identification
\begin{equation}
A^{(n)}\otimes B^{(n)}
\cong A_S\otimes A_{S^c}\otimes B^{(n)} \cong A_S\otimes B^{(n)}\otimes A_{S^c}.
\end{equation}
Therefore,
\begin{equation}
\mathrm{Tr}\,\left(\sigma\,
\bigl((\id_{A_S}\otimes D_{A_{S^c}})\otimes \id_{B^{(n)}}\bigr)(\sigma)
\right)=\frac{1}{d^{|S^c|}}\,
\mathrm{Tr}\,\left(\sigma\,(\rho\otimes I_{A_{S^c}})\right).
\end{equation}
Since $\mathrm{Tr}_{A_{S^c}}$ is the Hilbert-Schmidt adjoint of the map
$X\mapsto X\otimes I_{A_{S^c}}$, we get
\begin{equation}
\mathrm{Tr}\,\left(\sigma\,(\rho\otimes I_{A_{S^c}})\right)
=\mathrm{Tr}\,\left(\mathrm{Tr}_{A_{S^c}}(\sigma)\,\rho\right)
=\mathrm{Tr}(\rho^2),
\end{equation}
which proves Eq.~\eqref{eq:marginal-purity-identity}.
\end{proof}

\begin{proposition}[Subset purity formula for $\bigl(\upsilon_2(\Gamma_t^{\otimes n})\bigr)^2$]\label{prop:subset-purity-formula}
For every $n\in\mathbb{N}$,
\begin{equation}\label{eq:subset-purity-formula}
\bigl(\upsilon_2(\Gamma_t^{\otimes n})\bigr)^2 =\sup_{\substack{B^{(n)}\ }}
\ \max_{\sigma\in\mathcal{F}^{(n)}_d(B^{(n)})}
\ \sum_{S\subseteq[n]}
t^{2|S|}(1-t^2)^{|S^c|} \ \frac{1}{d^{|S^c|}}\,
\mathrm{Tr}\,\left(\sigma_{A_SB^{(n)}}^2\right).
\end{equation}
\end{proposition}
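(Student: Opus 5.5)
This follows by chaining three earlier results: the transpose-free representation of Corollary~\ref{cor:nu2n-transpose-free}, the subset expansion of Lemma~\ref{lem:subset-expansion}, and the marginal purity identity of Lemma~\ref{lem:marginal-purity-identity}. We fix a finite dimensional output system $B^{(n)}$ and a feasible state $\sigma\in\mathcal{F}^{(n)}_d(B^{(n)})$. We then compute the objective
\begin{equation*}
Q(\sigma):=\mathrm{Tr}\,\left(\sigma\,\bigl(\bigl(t^2\,\id_A+(1-t^2)D\bigr)^{\otimes n}\otimes \id_{B^{(n)}}\bigr)(\sigma)\right)
\end{equation*}
pointwise in $\sigma$. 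Afterwards we take the maximum over $\mathcal{F}^{(n)}_d(B^{(n)})$ and the supremum over $B^{(n)}$.

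\textbf{Step 1 (expand the map).} Apply Lemma~\ref{lem:subset-expansion} and tensor with $\id_{B^{(n)}}$, which distributes over the finite sum. This writes the map inside $Q$ as
\begin{equation*}
\sum_{S\subseteq[n]} t^{2|S|}(1-t^2)^{|S^c|}\,(\id_{A_S}\otimes D_{A_{S^c}})\otimes\id_{B^{(n)}} .
\end{equation*}

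\textbf{Step 2 (expand the trace).} Since $X\mapsto \mathrm{Tr}(\sigma X)$ is linear, $Q(\sigma)$ splits into a sum over $S$. Each summand is
\begin{equation*}
t^{2|S|}(1-t^2)^{|S^c|}\,\mathrm{Tr}\bigl(\sigma\,((\id_{A_S}\otimes D_{A_{S^c}})\otimes\id_{B^{(n)}})(\sigma)\bigr).
\end{equation*}

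\textbf{Step 3 (evaluate each summand).} Lemma~\ref{lem:marginal-purity-identity} holds for arbitrary $\sigma$, so each trace equals $d^{-|S^c|}\mathrm{Tr}(\sigma_{A_SB^{(n)}}^2)$. Hence
\begin{equation*}
Q(\sigma)=\sum_{S\subseteq[n]} t^{2|S|}(1-t^2)^{|S^c|}\, d^{-|S^c|}\,\mathrm{Tr}\bigl(\sigma_{A_SB^{(n)}}^2\bigr)
\end{equation*}
for every $\sigma$. Two boundary cases should be checked against the lemma's conventions. For $S=\emptyset$ the term is $d^{-n}\mathrm{Tr}(\sigma_{B^{(n)}}^2)$. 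For $S=[n]$ the term is $\mathrm{Tr}(\sigma^2)$, with $D_{A_{\emptyset}}$ read as the identity.

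\textbf{Step 4 (optimize).} The two expressions agree as functions on $\mathcal{F}^{(n)}_d(B^{(n)})$, so their maxima over this compact set coincide. Both are continuous, so both maxima are attained. Their suprema over $B^{(n)}$ therefore also coincide, and Corollary~\ref{cor:nu2n-transpose-free} gives the claim.

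There is no real obstacle here. The only points needing care are the boundary cases of the empty subset and the bookkeeping of the reordering $A_S\otimes A_{S^c}\otimes B^{(n)}$, both of which Lemma~\ref{lem:marginal-purity-identity} already handles.
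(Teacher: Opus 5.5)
Your proposal is correct and follows essentially the same route as the paper. You start from the transpose-free representation of Corollary~\ref{cor:nu2n-transpose-free}, expand the tensor power via Lemma~\ref{lem:subset-expansion}, and convert each term using Lemma~\ref{lem:marginal-purity-identity}; you then take the maximum over $\mathcal{F}^{(n)}_d(B^{(n)})$ and the supremum over $B^{(n)}$. Your remarks on linearity, the boundary subsets $S=\emptyset$ and $S=[n]$, and attainment of the maxima only make explicit what the paper leaves implicit.
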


\begin{proof}
Start from the transpose-free representation in Corollary~\ref{cor:nu2n-transpose-free}.
For each fixed finite dimensional output system $B^{(n)}$, expand
\begin{equation}
\bigl(t^2\id+(1-t^2)D\bigr)^{\otimes n}
\end{equation}
using Lemma~\ref{lem:subset-expansion}, and apply Lemma~\ref{lem:marginal-purity-identity} term by term to convert each quadratic-form term into the corresponding normalized marginal purity. This yields, for each fixed $B^{(n)}$,
\begin{equation}
\max_{\sigma\in\mathcal{F}^{(n)}_d(B^{(n)})}
\sum_{S\subseteq[n]} t^{2|S|}(1-t^2)^{|S^c|} \frac{1}{d^{|S^c|}}
\mathrm{Tr}\,\left(\sigma_{A_SB^{(n)}}^2\right).
\end{equation}
Taking the supremum over all finite dimensional output systems $B^{(n)}$ gives
Eq.~\eqref{eq:subset-purity-formula}.
\end{proof}


\begin{theorem}[All-$n$ multiplicativity at $p=2$]\label{thm:p2-all-n-multiplicativity} For every $d\ge 2$, every $t$ in the CP-range, and every $n\in\mathbb{N}$,
\begin{equation}\label{eq:p2-all-n-mult}
\upsilon_2(\Gamma_t^{\otimes n})=\bigl(\upsilon_2(\Gamma_t)\bigr)^n.
\end{equation}
Moreover, we have
\begin{equation}\label{eq:nu2-onecopy-m}
\upsilon_2(\Gamma_t)=\sqrt{\max\left\{\frac{1}{d},\ t^2+\frac{1-t^2}{d^2}\right\}},
\end{equation}
\end{theorem}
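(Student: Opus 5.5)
The plan is to prove the upper bound $\upsilon_2(\Gamma_t^{\otimes n})\le(\upsilon_2(\Gamma_t))^n$. Together with Proposition~\ref{prop:nu2-supermult} this gives \eqref{eq:p2-all-n-mult}. Formula \eqref{eq:nu2-onecopy-m} is just Theorem~\ref{thm:nu2-closed-form} squared, because $\frac{1}{d^2}(1+t^2(d^2-1))=t^2+\frac{1-t^2}{d^2}$.

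Throughout, write $x:=t^2$, $y:=(1-t^2)/d$ and $\mu:=\max\{1/d,\;x+y/d\}$. By Proposition~\ref{prop:subset-purity-formula} it suffices to show that for every finite $B^{(n)}$ and every $\sigma\in\mathcal F_d^{(n)}(B^{(n)})$,
\begin{equation*}
f_n(\sigma):=\sum_{S\subseteq[n]} x^{|S|}y^{|S^c|}\,\mathrm{Tr}\bigl(\sigma_{A_SB^{(n)}}^2\bigr)\ \le\ \mu^n .
\end{equation*}

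I would argue by induction on $n$, peeling off the last copy. Splitting the sum according to whether $n\in S$ gives
\begin{equation*}
f_n(\sigma)=\sum_{S'\subseteq[n-1]} x^{|S'|}y^{\,n-1-|S'|}\Bigl(x\,\mathrm{Tr}\bigl(\sigma_{A_{S'}A_nB}^2\bigr)+y\,\mathrm{Tr}\bigl(\sigma_{A_{S'}B}^2\bigr)\Bigr).
\end{equation*}
The key step is a single-site inequality for the bracket, namely that $x\,\mathrm{Tr}(\sigma_{A_{S'}A_nB}^2)+y\,\mathrm{Tr}(\sigma_{A_{S'}B}^2)$ is at most $\mu$ times a quantity of the $(n-1)$-copy form. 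I would try first to show that $f_n(\sigma)\le \mu\, f_{n-1}(\tilde\sigma)$ for a suitable feasible $\tilde\sigma$. The natural candidate is to absorb $A_n$ into the output system, $\tilde B:=A_nB^{(n)}$, and use the one-copy tradeoff of Corollary~\ref{cor:purity-tradeoff-Fd} in a relative form.

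To obtain that relative form, I would rewrite everything with the swap trick, as in Lemma~\ref{lem:purity-inequality-bipartite}. The objective becomes
\begin{equation*}
f_n(\sigma)=\mathrm{Tr}\Bigl((\sigma\otimes\sigma)\,\bigl[\textstyle\bigotimes_{k=1}^n (x\,\Pi_{A_kA_k'}+y\,I)\bigr]\otimes \Pi_{BB'}\Bigr).
\end{equation*}
Next I would purify $\sigma$ through a Stinespring environment $E$, as in the proof of Corollary~\ref{cor:purity-tradeoff-Fd}. Since $\sigma_{A^{(n)}}$ is maximally mixed, each copy $A_k$ is maximally mixed, and this allows the swap on any $A_k$-block to be traded against swaps on $B$ and $E$. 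The goal is a single-site operator inequality on one copy $A_kA_k'$, with everything else treated as a spectator: for states whose $A_k$-marginal is maximally mixed, the factor $x\,\Pi_{A_kA_k'}+y\,I$ should be replaceable by $\mu$ times a positive contraction. That inequality should be exactly the two-case argument of Theorem~\ref{thm:nu2-closed-form}. In Case~1 ($1-t^2(d+1)\ge0$) one eliminates the global purity, and in Case~2 one eliminates the marginal purity, using $(I-\Pi)\otimes(I-\Pi)\ge0$ together with the marginal constraint. Iterating over $k=1,\dots,n$ would then give $f_n\le\mu^n$.

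The main obstacle is this conditional version of the purity tradeoff when the spectator $A_{S'}B$ is present. Rényi-2 purities do not satisfy strong subadditivity, so $\mathrm{Tr}\rho_{CX}^2+\mathrm{Tr}\rho_{CY}^2\le \mathrm{Tr}\rho_C^2+\mathrm{Tr}\rho_{CXY}^2$ cannot simply be invoked. If the operator route fails, my fallback is to bound the whole sum at once. I would expand over pairs of subsets and use the sharp bounds $\mathrm{Tr}(\sigma_{A_SB}^2)\le 1$ and $\mathrm{Tr}(\sigma_{A_SB}^2)+\mathrm{Tr}(\sigma_{A_{S^c}E}^2)$-type tradeoffs, obtained by purification exactly as in Corollary~\ref{cor:purity-tradeoff-Fd} with $A$ replaced by $A_{S^c}$, whose marginal purity is $d^{-|S^c|}$. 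The linear-programming structure of the one-copy proof should then show that the maximum is attained at a vertex where every copy uses the same case. That vertex is either all constant-output, with value $d^{-n}$, or all identity, with value $(x+y/d)^n$.
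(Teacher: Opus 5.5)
\textbf{The gap.} Your reduction is sound up to the swap representation of $f_n$, which coincides with the paper's Eq.~\eqref{eq:fn-swap-poly}. Deriving \eqref{eq:nu2-onecopy-m} from Theorem~\ref{thm:nu2-closed-form} and the lower bound from Proposition~\ref{prop:nu2-supermult} is also fine. But the upper bound $f_n(\sigma)\le\mu^n$, which is the whole content of the theorem, is not proved. You acknowledge that the conditional single-site tradeoff is left open, and the concrete induction step you propose is actually false. Take $\tilde B:=A_nB^{(n)}$, $\tilde\sigma=\sigma$, and the feasible product state $\sigma=(\frac{I_d}{d}\otimes\proj{\psi})^{\otimes n}$. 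Then $f_n(\sigma)=f_{n-1}(\tilde\sigma)=d^{-n}$, so $f_n(\sigma)\le\mu f_{n-1}(\tilde\sigma)$ would force $\mu\ge 1$. That fails for every parameter choice except $d=2$, $t=-1$. The fallback (pairwise purification tradeoffs plus an LP vertex argument) is only a heuristic. You never write down a constraint system whose optimum is provably $\mu^n$, and the cross-copy inequalities it would need are exactly the conditional ones you note fail for R\'enyi-2 purities. Purifying and trading $A_k$-swaps against $B$- and $E$-swaps is a detour.

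\textbf{The missing idea} is to eliminate the $B$-swap rather than the $A$-swaps.
\begin{itemize}
\item In your representation (primes denoting the doubled copies), $V:=\Pi_{BB'}$, extended by the identity, is a self-adjoint involution. It commutes with the self-adjoint operator $G:=\bigotimes_k(x\,\Pi_{A_kA_k'}+y\,I)$.
\item On their joint eigenspaces $\eta g\le|g|$ with $\eta=\pm1$, so $VG\le|G|$ in Loewner order. Since $\sigma\otimes\sigma\ge0$, this gives $f_n(\sigma)\le\mathrm{Tr}\bigl((\sigma\otimes\sigma)|G|\bigr)$.
\item Because the factors commute, $|G|=\bigotimes_k(\alpha I+\beta\,\Pi_{A_kA_k'})$ with $\alpha=\frac12(|x+y|+|y-x|)$ and $\beta=\frac12(|x+y|-|y-x|)$.
\item This operator acts only on the $A$-registers, so its expectation depends only on $\sigma_{A^{(n)}}=I_{d^n}/d^n$. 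This uses the full joint constraint, not just the single-site marginals. Expanding, the product of swaps over each $T\subseteq[n]$ contributes $d^{-|T|}$, giving $(\alpha+\beta/d)^n$.
\item If $y\ge x$ this equals $d^{-n}$; if $y\le x$ it equals $(x+y/d)^n$. In both cases it is $\mu^n$, and the two signs reproduce the two cases of Theorem~\ref{thm:nu2-closed-form}.
\end{itemize}
No purification, induction, or conditional purity inequality is needed, and the same argument with $n=1$ reproves the one-copy bound.
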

\begin{proof}
Fix $n\in\mathbb{N}$, let $B^{(n)}$ be an arbitrary finite dimensional output system, and let
\begin{equation}
\sigma_{A^{(n)}B^{(n)}}\in \mathcal{F}^{(n)}_d(B^{(n)}).
\end{equation}

\smallskip
\noindent{ Step 1: Swap polynomial representation.}
Let $\widetilde{A}_1,\dots,\widetilde{A}_n$ be copies of $A_1,\dots,A_n$, and let
$\widetilde{B}^{(n)}$ be a copy of $B^{(n)}$. View $\sigma\otimes\sigma$ as an operator on {the doubled space}
\begin{equation}
\bigl(A^{(n)}\otimes B^{(n)}\bigr)\otimes
\bigl(\widetilde{A}^{(n)}\otimes \widetilde{B}^{(n)}\bigr),
\end{equation}
For each $k=1,\dots,n$, let
\begin{equation}
U_k:=\Pi_{A_k,\widetilde A_k}
\end{equation}
denote the swap on $A_k\otimes \widetilde A_k$; extended by the identity on all other tensor factors. the same
\begin{equation}
V:=\Pi_{B^{(n)},\widetilde B^{(n)}}
\end{equation}
denote the swap on $B^{(n)}\otimes \widetilde B^{(n)}$; extended by the identity on
$A^{(n)}\otimes \widetilde A^{(n)}$. Then
\begin{equation}
U_k^2=V^2=I,
\end{equation}
and the operators $U_1,\dots,U_n,V$ commute pairwise. Now  fix $S\subseteq[n]$ and write
\begin{equation}
\sigma_{A_SB^{(n)}}:=\operatorname{Tr}_{A_{S^c}}(\sigma).
\end{equation}
By the swap trick applied to the reduced state $\sigma_{A_SB^{(n)}}$,
\begin{equation}
\operatorname{Tr}\bigl(\sigma_{A_SB^{(n)}}^2\bigr)=\operatorname{Tr}\Bigl(
(\sigma_{A_SB^{(n)}}\otimes \sigma_{\widetilde A_S\widetilde B^{(n)}})
\bigl(\Pi_{A_S,\widetilde A_S}\otimes \Pi_{B^{(n)},\widetilde B^{(n)}}\bigr)
\Bigr).
\end{equation}
Equivalently, let $\widehat \Pi_S$ denote the operator on the full doubled space which acts as
\begin{equation}
 \Pi_{A_S,\widetilde A_S}\otimes \Pi_{B^{(n)},\widetilde B^{(n)}}
\end{equation}
on the factors indexed by $A_S,\widetilde A_S,B^{(n)},\widetilde B^{(n)}$, and as the identity on the omitted factors $A_{S^c}\otimes \widetilde A_{S^c}$. Then
\begin{equation}
\operatorname{Tr}\bigl(\sigma_{A_SB^{(n)}}^2\bigr)=\operatorname{Tr}\Bigl((\sigma\otimes \sigma)\,\widehat \Pi_S\Bigr).
\end{equation}
By construction,
\begin{equation}
\widehat \Pi_S=\Bigl(\prod_{k\in S}U_k\Bigr)V=V\Bigl(\prod_{k\in S}U_k\Bigr),
\end{equation}
where the product is the ordinary operator product of commuting swaps. Therefore,
\begin{equation}\label{eq:subset-swap-identity}
\operatorname{Tr}\bigl(\sigma_{A_SB^{(n)}}^2\bigr)=\operatorname{Tr}\Bigl((\sigma\otimes \sigma)\,V\prod_{k\in S}U_k\Bigr).
\end{equation}
Substituting Eq.~\eqref{eq:subset-swap-identity} into Proposition \ref{prop:subset-purity-formula}, we obtain
\begin{align}
\bigl\|(\Gamma_t^{\otimes n}\otimes \id_{B^{(n)}})(\sigma)\bigr\|_2^2
&=\sum_{S\subseteq[n]}t^{2|S|}(1-t^2)^{|S^c|}\frac{1}{d^{|S^c|}}
\operatorname{Tr}\Bigl((\sigma\otimes \sigma)\,V\prod_{k\in S}U_k\Bigr)\nonumber\\
&=\operatorname{Tr}\Bigl((\sigma\otimes \sigma)\,V\sum_{S\subseteq[n]}
t^{2|S|}(1-t^2)^{|S^c|}\frac{1}{d^{|S^c|}}\prod_{k\in S}U_k\Bigr)\nonumber\\
&=\operatorname{Tr}\Bigl((\sigma\otimes \sigma)\,V\prod_{k=1}^n\Bigl(t^2\,U_k+\frac{1-t^2}{d}\,I\Bigr)\Bigr).
\label{eq:fn-swap-poly}
\end{align}
Thus it suffices to bound the right hand side of Eq.~\eqref{eq:fn-swap-poly} uniformly over all finite dimensional output systems $B^{(n)}$ and all
$\sigma\in\mathcal{F}^{(n)}_d(B^{(n)})$.

Recall that $U_1,\dots,U_n,V$ are commuting self-adjoint involutions.
Set
\begin{equation}
    a:=\frac{1-t^2}{d},
    \qquad
    G:=\prod_{k=1}^n (aI+t^2U_k).
\end{equation}
Then $G$ is self-adjoint and commutes with $V$. Since $V$ and $G$ are commuting self-adjoint operators, hence we obtain
\begin{equation}\label{eq:VG-absG}
    VG\le |G|,
\end{equation}
where $|G|=(G^2)^{1/2}$ and the inequality is in the Loewner order. Indeed, on any joint eigenspace of $V$ and $G$, the operator $V$ has eigenvalue
$\eta\in\{\pm1\}$ and $G$ has some real eigenvalue $g$, and therefore
\begin{equation}
    \eta g\le |g|.
\end{equation}
Equivalently, $|G|-VG$ is positive semi-definite. Because $\sigma\otimes\sigma\ge0$, taking the trace of Eq.~\eqref{eq:VG-absG} against $\sigma\otimes\sigma$ gives
\begin{equation}\label{eq:first-upper-bound}
\Tr\bigl((\sigma\otimes\sigma)VG\bigr) \le
\Tr\bigl((\sigma\otimes\sigma)|G|\bigr).
\end{equation}
It remains to evaluate the right-hand side. Since the factors $aI+t^2U_k$ commute and are self-adjoint, hence we obtain
\begin{equation}
    |G|=\biggl|\prod_{k=1}^n(aI+t^2U_k)\biggr|
    =\prod_{k=1}^n |aI+t^2U_k|.
\end{equation}
For a single swap $U_k$, whose eigenvalues are $\pm1$, write
\begin{equation}
    |aI+t^2U_k|=\alpha I+\beta U_k,
\end{equation}
where
\begin{equation}
    \alpha:=\frac{|a+t^2|+|a-t^2|}{2},
    \qquad
    \beta:=\frac{|a+t^2|-|a-t^2|}{2}.
\end{equation}
Hence
\begin{equation}
    |G|=\prod_{k=1}^n(\alpha I+\beta U_k)
    =\sum_{T\subseteq[n]}\alpha^{n-|T|}\beta^{|T|}\prod_{k\in T}U_k.
\end{equation}
Therefore
\begin{align}
\Tr\bigl((\sigma\otimes\sigma)|G|\bigr)
&=\sum_{T\subseteq[n]}\alpha^{n-|T|}\beta^{|T|}
\Tr\Bigl((\sigma\otimes\sigma)\prod_{k\in T}U_k\Bigr). \label{eq:absG-expanded}
\end{align}
By the swap trick applied to the marginal $\sigma_{A_T}$,
\begin{equation}
    \Tr\Bigl((\sigma\otimes\sigma)\prod_{k\in T}U_k\Bigr)
    =\Tr(\sigma_{A_T}^2).
\end{equation}
Since $\sigma_{A^{(n)}}=I_{d^n}/d^n$, we have
\begin{equation}
    \sigma_{A_T}=\frac{I_{d^{|T|}}}{d^{|T|}},
    \qquad
    \Tr(\sigma_{A_T}^2)=d^{-|T|}.
\end{equation}
Substituting this into Eq.~\eqref{eq:absG-expanded}, we obtain
\begin{align}
\Tr\bigl((\sigma\otimes\sigma)|G|\bigr)
&=
\sum_{T\subseteq[n]}\alpha^{n-|T|}\beta^{|T|}d^{-|T|}=
\left(\alpha+\frac{\beta}{d}\right)^n. \label{eq:alpha-beta-bound}
\end{align}
Now
\begin{align}
\alpha+\frac{\beta}{d}
&=
\frac{1+\frac1d}{2}\,|a+t^2|
+
\frac{1-\frac1d}{2}\,|a-t^2|. \label{eq:one-bit-quantity}
\end{align}
Recalling that $a=(1-t^2)/d$, we have $a+t^2\ge0$. Moreover,
\begin{equation}
    a-t^2=\frac{1-(d+1)t^2}{d}.
\end{equation}
Thus, if $t^2\le 1/(d+1)$, then $a-t^2\ge0$ and Eq.~\eqref{eq:one-bit-quantity} becomes
\begin{equation}
\frac{1+\frac1d}{2}(a+t^2)
+
\frac{1-\frac1d}{2}(a-t^2)
=
\frac1d.
\end{equation}
If $t^2\ge 1/(d+1)$, then $a-t^2\le0$ and Eq.~\eqref{eq:one-bit-quantity} in this case becomes
\begin{equation}
\frac{1+\frac1d}{2}(a+t^2)
+
\frac{1-\frac1d}{2}(t^2-a)
=t^2+\frac{1-t^2}{d^2}.
\end{equation}
Consequently,
\begin{equation}
    \alpha+\frac{\beta}{d}
    =
    \max\left\{\frac1d,\,t^2+\frac{1-t^2}{d^2}\right\}.
\end{equation}
Considering also Eq.~\eqref{eq:fn-swap-poly}, Eq.~\eqref{eq:first-upper-bound}, and Eq.~\eqref{eq:alpha-beta-bound}, we conclude that

\begin{equation}
\bigl\|(\Gamma_t^{\otimes n}\otimes \id_{B^{(n)}})(\sigma)\bigr\|_2^2
\le
\left(\max\left\{\frac1d,\,t^2+\frac{1-t^2}{d^2}\right\}\right)^n
= m(d,t)^n.
\end{equation}
Since this bound holds for every finite dimensional output system $B^{(n)}$ and every admissible $\sigma\in \mathcal{F}_d^{(n)}(B^{(n)})$, taking the supremum yields 
\begin{equation}\label{eq:upper-bound-mn}
\bigl(\upsilon_2(\Gamma_t^{\otimes n})\bigr)^2\ \le\ (m(d,t))^n.
\end{equation}
In what follows, we obtain matching lower bound by explicit product optimizers.
Choose the one copy output system to be
\begin{equation}
B_\star:=A'.
\end{equation}
Define a one copy admissible state $\rho^\star\in \mathcal{F}^{(1)}_d(B_\star)$ as follows. 

Case 1: If $t^2\le \frac{1}{d+1},$ fix a unit vector $|\psi\rangle\in A'$ and set
\begin{equation}
\rho^\star:=\frac{I_d}{d}\otimes |\psi\rangle\langle\psi|.
\end{equation}
Then
\begin{equation}
(\Gamma_t\otimes \id_{B_\star})(\rho^\star) =\frac{I_d}{d}\otimes |\psi\rangle\langle\psi|,
\end{equation}
and therefore
\begin{equation}
\bigl\|(\Gamma_t\otimes \id_{B_\star})(\rho^\star)\bigr\|_2^2
=\left\|\frac{I_d}{d}\right\|_2^2=\frac{1}{d}.
\end{equation}
Case 2: If $t^2\ge \frac{1}{d+1},$ then set
\begin{equation}
\rho^\star:=\Phi_d,
\end{equation}
viewed as a state on $A\otimes A'=A\otimes B_\star$. Then
\begin{equation}
(\Gamma_t\otimes \id_{B_\star})(\rho^\star)=\tau_{AA'},
\end{equation}
so
\begin{equation}
\bigl\|(\Gamma_t\otimes \id_{B_\star})(\rho^\star)\bigr\|_2^2
=\|\tau_{AA'}\|_2^2 =\frac{1+t^2(d^2-1)}{d^2}=t^2+\frac{1-t^2}{d^2}.
\end{equation}
In either case,
\begin{equation}
\bigl\|(\Gamma_t\otimes \id_{B_\star})(\rho^\star)\bigr\|_2^2=m(d,t).
\end{equation}
Now define
\begin{equation}
\sigma^\star:=\bigl(\rho^\star\bigr)^{\otimes n}\in
\mathcal{F}^{(n)}_d\bigl(B_\star^{\otimes n}\bigr).
\end{equation}
By multiplicativity of the Hilbert-Schmidt norm,
\begin{equation}
\bigl\|(\Gamma_t^{\otimes n}\otimes \id_{B_\star^{\otimes n}})(\sigma^\star)\bigr\|_2^2=\left( \bigl\|(\Gamma_t\otimes \id_{B_\star})(\rho^\star)\bigr\|_2^2 \right)^n= (m(d,t))^n.
\end{equation}
Since $B_\star^{\otimes n}$ is an admissible output system, it follows that
\begin{equation}\label{eq:lower-bound-mn}
\bigl(\upsilon_2(\Gamma_t^{\otimes n})\bigr)^2\ \ge\ (m(d,t))^n.
\end{equation}
Combining Eq.~\eqref{eq:upper-bound-mn} and Eq.~\eqref{eq:lower-bound-mn} yields Eq.~\eqref{eq:p2-all-n-mult}.

\end{proof}

 \section{A General Criterion for Multiplicativity}\label{sec:general-criterion}

\begin{proposition}[A general $p=2$ multiplicativity criterion]\label{prop:general-p2-criterion}
Let $A\simeq \mathbb C^d$, let $C$ be a finite dimensional output system, and let
\begin{equation}
\mathcal N:\mathcal L(A)\to \mathcal L(C)
\end{equation}
be a CP map. Assume that there exist constants $a,b\ge 0$ such that
\begin{equation}\label{eq:general-N-assumption-ab}
\mathcal N^\dagger\circ \mathcal N=a\,\id_A+b\,\Tr[\cdot]\,I_d .
\end{equation}
Then
\begin{equation}\label{eq:N-onecopy-closed-ab}
\upsilon_{2}(\mathcal N)=\sqrt{\max\left\{a+\frac{b}{d},\ \frac{a}{d}+b\right\}}, 
\end{equation}
and  for every $n\in\mathbb N$,
\begin{equation}\label{eq:N-ncopy-closed-ab}
 \upsilon_{2}(\cN^{\otimes n})= \bigl(\upsilon_{2}(\cN) \bigr)^n.  
\end{equation}
\end{proposition}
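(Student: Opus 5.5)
The plan is to rerun the swap-polynomial argument of Theorem~\ref{thm:p2-all-n-multiplicativity}, with $\mathcal N^\dagger\circ\mathcal N$ playing the role of $\Gamma_t^2$.

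\textbf{Step 1: quadratic form.} Fix $n$, an output system $B^{(n)}$ and $\sigma\in\mathcal F^{(n)}_d(B^{(n)})$, and set $L:=\mathcal N^{\otimes n}\otimes\id_{B^{(n)}}$. Since $L(\sigma)\ge0$, the Hilbert-Schmidt adjoint relation gives
\begin{equation}
\|L(\sigma)\|_2^2=\Tr\bigl(\sigma\,(L^\dagger L)(\sigma)\bigr),\qquad L^\dagger L=(\mathcal N^\dagger\mathcal N)^{\otimes n}\otimes\id .
\end{equation}
This is the analogue of Proposition~\ref{prop:nu2n-quadratic-form}, with adjointness replacing self-adjointness. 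Next write $b\,\Tr[\cdot]I_d=bd\,D$. Expanding $(a\,\id+bd\,D)^{\otimes n}$ over subsets, exactly as in Lemma~\ref{lem:subset-expansion}, and applying Lemma~\ref{lem:marginal-purity-identity} gives
\begin{equation}
\|L(\sigma)\|_2^2=\sum_{S\subseteq[n]}a^{|S|}b^{|S^c|}\Tr\bigl(\sigma_{A_SB^{(n)}}^2\bigr),
\end{equation}
because $(bd)^{|S^c|}d^{-|S^c|}=b^{|S^c|}$.

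\textbf{Step 2: upper bound.} Using the swaps $U_k$, $V$ on the doubled space as in Theorem~\ref{thm:p2-all-n-multiplicativity}, the sum becomes $\Tr\bigl((\sigma\otimes\sigma)VG\bigr)$ with $G=\prod_k(bI+aU_k)$. Since $V$ and $G$ are commuting self-adjoint operators, $VG\le|G|$. Moreover $|G|=\prod_k(\alpha I+\beta U_k)$, where
\begin{equation}
\alpha=\tfrac{|a+b|+|b-a|}{2},\qquad \beta=\tfrac{|a+b|-|b-a|}{2}.
\end{equation}
The constraint $\sigma_{A^{(n)}}=I/d^n$ gives $\Tr\bigl((\sigma\otimes\sigma)\prod_{k\in T}U_k\bigr)=d^{-|T|}$, so the bound is $(\alpha+\beta/d)^n$. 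Because $a,b\ge0$, we have $\alpha=\max\{a,b\}$ and $\beta=\min\{a,b\}$. Hence
\begin{equation}
\alpha+\beta/d=\max\{a+b/d,\ a/d+b\}=:m,
\end{equation}
and the maximum is attained by the correct ordering, since $d\ge1$. This gives $\upsilon_2(\mathcal N^{\otimes n})^2\le m^n$.

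\textbf{Step 3: lower bound.} Specialize Step 1 to $n=1$. For $\sigma=\Phi_d$ with $B=A'$, the marginal purities are $\Tr(\sigma_{AB}^2)=1$ and $\Tr(\sigma_B^2)=1/d$, which gives the value $a+b/d$. For $\sigma=\frac{I_d}{d}\otimes\proj{\psi}$, they are $\Tr(\sigma_{AB}^2)=1/d$ and $\Tr(\sigma_B^2)=1$, which gives $a/d+b$. Both states lie in $\mathcal F_d$, so $\upsilon_2(\mathcal N)^2\ge m$. Combined with the $n=1$ case of Step 2, this proves Eq.~\eqref{eq:N-onecopy-closed-ab}. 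Taking $n$-fold tensor powers of the better state and using Lemma~\ref{lem:HS-multiplicativity}, as in Proposition~\ref{prop:nu2-supermult}, gives $\upsilon_2(\mathcal N^{\otimes n})^2\ge m^n$, which proves Eq.~\eqref{eq:N-ncopy-closed-ab}.

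\textbf{Main obstacle.} The delicate step is Step 1. One has to get the normalization of the term $b\,\Tr[\cdot]I_d$ right, since it is $bd$ times the completely depolarizing map $D$. One also has to check that $\mathcal N^\dagger\mathcal N$ acts on $\mathcal L(A)$ even though $C$ differs from $A$, so that the subset expansion still applies.
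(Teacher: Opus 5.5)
Your proposal is correct and follows essentially the paper's proof: the quadratic form via $\mathcal N^\dagger\circ\mathcal N$, the subset expansion with the $bd\,D$ normalization, the swap-polynomial bound with factor $aU_k+bI$ giving $(\alpha+\beta/d)^n=m^n$, and tensor powers of the better one-copy optimizer for the lower bound. The only deviation is that you get the one-copy upper bound as the $n=1$ case of the swap argument, whereas the paper derives it separately from the purity tradeoff of Corollary~\ref{cor:purity-tradeoff-Fd} with a case split on $a\ge b$ versus $b\ge a$; both routes are valid.
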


\begin{proof}
We first consider the one copy problem. Let $\sigma_{AB}$ be admissible, i.e.
\begin{equation}
\sigma_{AB}\ge 0,\qquad \Tr(\sigma_{AB})=1,\qquad \sigma_A=\frac{I_d}{d}.
\end{equation}
Now set
\begin{equation}
\omega_{CB}:=(\mathcal N\otimes \id_B)(\sigma_{AB}).
\end{equation}
Since $\omega_{CB}$ is positive semidefinite, we have
\begin{equation}
\|\omega_{CB}\|_2^2=\Tr(\omega_{CB}^2).
\end{equation}
By using the Hilbert-Schmidt adjoint, we obtain
\begin{align}
\|\omega_{CB}\|_2^2&=\Tr\Bigl(
\sigma_{AB}\, \bigl((\mathcal N^\dagger\circ \mathcal N)\otimes \id_B\bigr)(\sigma_{AB}) \Bigr)\nonumber\\
&=\Tr\Bigl(\sigma_{AB}\,\bigl((a\,\id_A+b\,\Tr[\cdot]\,I_d)\otimes \id_B\bigr)(\sigma_{AB})\Bigr)\nonumber\\
&=a\,\Tr(\sigma_{AB}^2)+b\,\Tr(\sigma_B^2).
\label{eq:general-N-purity-identity-ab}
\end{align}
By Corollary \ref{cor:purity-tradeoff-Fd},
\begin{equation}\label{eq:general-N-tradeoff-ab}
\Tr(\sigma_{AB}^2)+\Tr(\sigma_B^2)\le 1+\frac{1}{d}.
\end{equation}
If $a\ge b$, we eliminate $\Tr(\sigma_B^2)$ from Eq.~\eqref{eq:general-N-purity-identity-ab} and obtain
\begin{align}
\|\omega_{CB}\|_2^2 & \le a\,\Tr(\sigma_{AB}^2)+b\Bigl(1+\frac{1}{d}-\Tr(\sigma_{AB}^2)\Bigr)\nonumber\\
&=\Bigl(a-b\Bigr)\Tr(\sigma_{AB}^2)+b\Bigl(1+\frac{1}{d}\Bigr)\nonumber\\
& \le a+b \Bigl(1+\frac{1}{d}\Bigr)-b=a+\frac{b}{d},
\end{align}
where we used $\Tr(\sigma_{AB}^2)\le 1$.

If $b\ge a$, we eliminate $\Tr(\sigma_{AB}^2)$:
\begin{align}
\|\omega_{CB}\|_2^2&\le
a\Bigl(1+\frac{1}{d}-\Tr(\sigma_B^2)\Bigr)+b\,\Tr(\sigma_B^2)\nonumber\\
&=a\Bigl(1+\frac{1}{d}\Bigr)+\Bigl(b-a\Bigr)\Tr(\sigma_B^2)\nonumber\\
&\le a\Bigl(1+\frac{1}{d}\Bigr)+b-a= \frac{a}{d}+b,
\end{align}
where we used $\Tr(\sigma_B^2)\le 1$. Thus
\begin{equation}\label{eq:general-N-upper-ab}
\upsilon_{2}^{2}(\mathcal N) \le \max\left\{ a+\frac{b}{d},\ \frac{a}{d}+b \right\}.
\end{equation}
We next prove matching lower bounds.

For the first branch, choose the admissible output system $B=A'$ and the identity channel
\begin{equation}
\Lambda=\id_{A'}.
\end{equation}
Then
\begin{equation}
\tau_{\mathcal N}:=(\mathcal N\otimes \id_{A'})(\Phi_d)
\end{equation}
is an admissible output, and
\begin{align}
\|\tau_{\mathcal N}\|_2^2&=\Tr\Bigl(
\Phi_d\,\bigl((\mathcal N^\dagger\mathcal N)\otimes \id_{A'}\bigr)(\Phi_d)
\Bigr)\nonumber\\
&=\Tr\Bigl( \Phi_d\,
\bigl((a\,\id_A+b\,\Tr[\cdot]\,I_d)\otimes \id_{A'}\bigr)(\Phi_d)
\Bigr)\nonumber\\
&= a\,\Tr(\Phi_d^2) + b\,\Tr\Bigl( \Phi_d\Bigl(I_d\otimes \Tr_A(\Phi_d)\Bigr) \Bigr)\nonumber\\
&= a+b\,\Tr\Bigl( \Phi_d\Bigl(I_d\otimes \frac{I_d}{d}\Bigr) \Bigr)\nonumber\\
&= a+\frac{b}{d},
\end{align}
because $\Phi_d$ is pure and $\Tr_A(\Phi_d)=I_d/d$. Therefore
\begin{equation}
\upsilon_{2}^{2}(\mathcal N) \ge a+\frac{b}{d}.
\end{equation}
For the second branch, choose any finite dimensional output system $B$ and any unit vector $|\psi\rangle\in B$, and let
\begin{equation}
\Lambda_\psi(X)=\Tr(X)\,|\psi\rangle\langle\psi|.
\end{equation}
Then the corresponding admissible state is
\begin{equation}
\sigma_{AB}=\frac{I_d}{d}\otimes |\psi\rangle\langle\psi|,
\end{equation}
and therefore
\begin{equation}
(\mathcal N\otimes \id_B)(\sigma_{AB})=\mathcal N\,\left(\frac{I_d}{d}\right)\otimes |\psi\rangle\langle\psi|.
\end{equation}
Hence
\begin{equation}
\left\| (\mathcal N\otimes \id_B)(\sigma_{AB}) \right\|_2^2 =\left\| \mathcal N\,\left(\frac{I_d}{d}\right) \right\|_2^2.
\end{equation}
Now
\begin{align}
\left\| \mathcal N\,\left(\frac{I_d}{d}\right)
\right\|_2^2&=\frac{1}{d^2}\Tr\bigl(\mathcal N(I_d)^2\bigr)\nonumber\\
&=\frac{1}{d^2}\Tr\Bigl(I_d\,(\mathcal N^\dagger\mathcal N)(I_d)\Bigr)\nonumber\\
&=\frac{1}{d^2}\Tr\Bigl(I_d\,(aI_d+bd\,I_d)\Bigr)\nonumber\\
&=\frac{1}{d^2}\,d\,(a+bd)=\frac{a}{d}+b.
\end{align}
Therefore
\begin{equation}
\upsilon_{2}^{2}(\mathcal N) \ge \frac{a}{d}+b.
\end{equation}
Combining this with Eq.~\eqref{eq:general-N-upper-ab} proves Eq.~\eqref{eq:N-onecopy-closed-ab}.
We now turn to tensor powers. Let $\sigma_{A^{(n)}B^{(n)}}$ be admissible. Then, exactly as in the quadratic form argument for $\Gamma_t$,
\begin{align}
\left\| (\mathcal N^{\otimes n}\otimes \id_{B^{(n)}})(\sigma) \right\|_2^2
&= \Tr\Bigl( \sigma\, \bigl(((\mathcal N^\dagger\mathcal N)^{\otimes n})\otimes \id_{B^{(n)}}\bigr)(\sigma) \Bigr)\nonumber\\
&=\Tr\Bigl(\sigma\,\bigl((a\,\id_A+b\,\Tr[\cdot]\,I_d)^{\otimes n}\otimes \id_{B^{(n)}}\bigr)(\sigma)\Bigr).
\label{eq:general-N-ncopy-qf-ab}
\end{align}
Since
\begin{equation}
a\,\id_A+b\,\Tr[\cdot]\,I_d = a\,\id_A+bd\,D,
\end{equation}
the subset expansion gives
\begin{equation}
(a\,\id_A+b\,\Tr[\cdot]\,I_d)^{\otimes n}=\sum_{S\subseteq[n]}a^{|S|}(bd)^{|S^c|} \bigl(\id_{A_S}\otimes D_{A_{S^c}}\bigr).
\end{equation}
By the same marginal purity identity as before, we obtain
\begin{equation}
\left\|(\mathcal N^{\otimes n}\otimes \id_{B^{(n)}})(\sigma)\right\|_2^2 =\sum_{S\subseteq[n]} a^{|S|}b^{|S^c|} \Tr\bigl(\sigma_{A_SB^{(n)}}^2\bigr).
\end{equation}
Applying the same swap polynomial argument as in the alternative proof of all-$n$ multiplicativity, with the one copy factor now equal to
\begin{equation}
a\,U_k+b\,I,
\end{equation}
yields
\begin{equation}
\left\| (\mathcal N^{\otimes n}\otimes \id_{B^{(n)}})(\sigma)
\right\|_2^2 \le \left( \frac{1+\frac{1}{d}}{2}\,|a+b| + \frac{1-\frac{1}{d}}{2}\,|b-a| \right)^n.
\end{equation}
Since $a,b\ge 0$, this simplifies to
\begin{equation}
\left( \max\left\{ a+\frac{b}{d},\ \frac{a}{d}+b \right\}\right)^n.
\end{equation}
Therefore
\begin{equation}
\bigl(\upsilon_{2}(\mathcal N^{\otimes n})\bigr)^2 \le \left( \max\left\{ a+\frac{b}{d},\ \frac{a}{d}+b \right\} \right)^n.
\end{equation}
For the reverse inequality, take the product of the corresponding one copy optimizers from the two branches above. By multiplicativity of the Hilbert-Schmidt norm, this yields
\begin{equation}
\bigl(\upsilon_{2}(\mathcal N^{\otimes n})\bigr)^2 \ge \left( \max\left\{ a+\frac{b}{d},\ \frac{a}{d}+b \right\} \right)^n.
\end{equation}
Therefore the proposition follows. 
\end{proof}
We now discuss some natural classes of CP maps to which Proposition~\ref{prop:general-p2-criterion} applies. 

The depolarizing channel is defined by
\begin{equation}
\Delta_p(X) = (1-p)X + \frac{p}{d}\, I_d \operatorname{Tr}(X),
\qquad 0 \le p \le \frac{d^2}{d^2-1}.
\end{equation}
We now present the resulting corollary.

\begin{corollary}
The depolarizing channel $\Delta_p(\cdot)$, the transpose-depolarizing
channel $\Gamma_t(\cdot)$, and their corresponding complementary channels
$\Delta_p^c(\cdot)$ and $\Gamma_t^c(\cdot)$, all satisfy the condition of
Proposition~\ref{prop:general-p2-criterion}. 
Namely, for each such map $\cN$, there exist constants
$a_{\cN},b_{\cN}\ge 0$ such that
\begin{equation}
\cN^\dagger\circ \cN
=
a_{\cN}\,\id_A
+
b_{\cN}\,\Tr[\cdot]\,I_d,
\end{equation}
where
\begin{equation}
\cN\in\{\Delta_p,\Gamma_t,\Delta_p^c,\Gamma_t^c\}.
\end{equation}
Hence, for every $n\in\mathbb N$,
\begin{align*}
\upsilon_{2}^{(n)}(\Delta_p)
&=
\bigl(\upsilon_{2}(\Delta_p)\bigr)^n,\\
\upsilon_{2}^{(n)}(\Gamma_t)
&=
\bigl(\upsilon_{2}(\Gamma_t)\bigr)^n,\\
\upsilon_{2}^{(n)}(\Delta_p^c)
&=
\bigl(\upsilon_{2}(\Delta_p^c)\bigr)^n,\\
\upsilon_{2}^{(n)}(\Gamma_t^c)
&=
\bigl(\upsilon_{2}(\Gamma_t^c)\bigr)^n.
\end{align*}
\end{corollary}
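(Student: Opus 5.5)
The plan is to prove the structural identity $\cN^\dagger\circ\cN=a_\cN\,\id_A+b_\cN\,\Tr[\cdot]\,I_d$ with $a_\cN,b_\cN\ge 0$ for each of the four maps. The multiplicativity statements then follow at once from Proposition~\ref{prop:general-p2-criterion}.

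\textbf{The two channels themselves.} For $\Delta_p$ and $\Gamma_t$ I would compute directly.
\begin{itemize}
\item For $\Gamma_t$, Lemma~\ref{lem:Gamma-selfadjoint} gives $\Gamma_t^\dagger=\Gamma_t$. Lemma~\ref{lem:Gamma-square} then gives $\Gamma_t^\dagger\Gamma_t=t^2\id_A+\frac{1-t^2}{d}\Tr[\cdot]I_d$. Both coefficients are nonnegative, since $|t|\le 1$ on the CP-range.
\item For $\Delta_p$, write $\Delta_p=(1-p)\id_A+pD$. Each summand is Hilbert--Schmidt self-adjoint, and $D^2=D$. Hence $\Delta_p^\dagger\Delta_p=(1-p)^2\id_A+\frac{p(2-p)}{d}\Tr[\cdot]I_d$. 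This has $a=(1-p)^2\ge 0$, and $b=p(2-p)/d\ge0$ because $p\le d^2/(d^2-1)\le 2$.
\end{itemize}

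\textbf{The complementary channels: covariance argument.} For $\Delta_p^c$ and $\Gamma_t^c$ I would avoid explicit Kraus computations and argue by covariance.
\begin{enumerate}
\item Both channels are covariant: $\Delta_p(UXU^\dagger)=U\Delta_p(X)U^\dagger$ and $\Gamma_t(UXU^\dagger)=\bar U\,\Gamma_t(X)\,\bar U^{\dagger}$ for every unitary $U$ on $A$.
\item Fix a minimal Stinespring isometry $V:A\to C\otimes E$. The isometries $V U$ and $(\bar U\otimes I)V$ (or $(U\otimes I)V$ for $\Delta_p$) dilate the same channel. By uniqueness of minimal dilations there is a unitary $W_U$ on $E$ with $VU=(\bar U\otimes W_U)V$.
\item Tracing out $C$ gives $\cN^c(UXU^\dagger)=W_U\,\cN^c(X)\,W_U^\dagger$.
\item Conjugation by unitaries preserves the Hilbert--Schmidt inner product. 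Therefore $\cM:=(\cN^c)^\dagger\circ\cN^c$ commutes with the superoperator $\mathrm{Ad}_U:X\mapsto UXU^\dagger$ for all $U$.
\end{enumerate}

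\textbf{Identifying the commutant.} Under the identification $\cL(A)\cong A\otimes\bar A$, the superoperator $\mathrm{Ad}_U$ becomes $U\otimes\bar U$. This representation splits into exactly two inequivalent irreducibles: the trivial one spanned by $I_d$, and the traceless operators. By Schur's lemma, $\cM=a\,\id_A+b'D$, equivalently $\cM=a\,\id_A+b\,\Tr[\cdot]I_d$ with $b=b'/d$.

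\textbf{Signs of the coefficients.} I expect this to be the main obstacle.
\begin{itemize}
\item For $a$: the map $\cM$ is positive semidefinite as a superoperator on the Hilbert--Schmidt space, since $\langle X,\cM X\rangle=\|\cN^c(X)\|_2^2\ge0$. Evaluating on a traceless $X$ gives $a\ge 0$.
\item For $b$: the map $\cM$ is completely positive, being a composition of CP maps (the adjoint of a CP map is CP). Its Choi operator is proportional to $a\,d\,\Phi_d+b\,I_d\otimes I_d$. Positivity of this operator on the orthogonal complement of $|\Phi_d\rangle$ forces $b\ge 0$.
\end{itemize}

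This covariance argument applies verbatim to $\Delta_p$ and $\Gamma_t$ as well, so it double-checks the explicit formulas above. Invoking Proposition~\ref{prop:general-p2-criterion} for each $\cN\in\{\Delta_p,\Gamma_t,\Delta_p^c,\Gamma_t^c\}$ then yields the four multiplicativity identities.
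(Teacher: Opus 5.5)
Your proof is correct. For $\Gamma_t$ and $\Delta_p$ it matches the paper: Hilbert--Schmidt self-adjointness plus Lemma~\ref{lem:Gamma-square}, and for $\Delta_p$ the analogous expansion, which the paper only asserts and you carry out.

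For the complements you take a genuinely different route. The paper fixes the Datta--Fukuda--Holevo forms $\Gamma_t^c(X)=S_t(X\otimes I)S_t$ and $\Delta_p^c(X)=P_p(X\otimes I)P_p$. It reads off $(\cN^c)^\dagger(Y)=\Tr_2[SYS]$, computes $S_t^2=\frac{1-t}{d}I+t\,\Pi$ and $P_p^2=\frac{p}{d}I+d(1-p)\proj{\Phi_d}$, and evaluates the partial traces. This gives $(\Gamma_t^c)^\dagger\circ\Gamma_t^c=\frac{1-t^2}{d}\id+t^2\Tr[\cdot]I_d$ and $(\Delta_p^c)^\dagger\circ\Delta_p^c=\frac{p(2-p)}{d}\id+(1-p)^2\Tr[\cdot]I_d$, i.e.\ the constants of the original channel with $a$ and $b$ interchanged.

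Your covariance--Schur argument needs no Kraus bookkeeping and does not depend on a particular form of the complement. It also proves more: every channel with $\cN\circ\mathrm{Ad}_U=\mathrm{Ad}_{U'}\circ\cN$ for all unitaries $U$, together with each of its complements, satisfies the criterion. Your sign argument is sound: positivity of $\cM$ as a superoperator gives $a\ge 0$, and complete positivity via the Choi operator $a\,d\,\Phi_d+b\,I_d\otimes I_d$ gives $b\ge 0$. What you give up is the explicit values of $a_\cN,b_\cN$ for the complements, hence the explicit closed forms for $\upsilon_2(\Gamma_t^c)$ and $\upsilon_2(\Delta_p^c)$ from Proposition~\ref{prop:general-p2-criterion}. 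These values are not needed for the multiplicativity claim, and they coincide with $\upsilon_2(\Gamma_t)$ and $\upsilon_2(\Delta_p)$ by Proposition~\ref{prop:upsilon-complementary-invariance}.

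One small point to tidy. The statement refers to the Datta--Fukuda--Holevo complements, which are not minimal dilations at the endpoints of the parameter ranges; for example, at $t=\frac{1}{d+1}$ the range of $\Gamma_t^c$ lies in the symmetric subspace. Your step~2, however, invokes uniqueness of minimal dilations. Add the one-line remark that $(\cN^c)^\dagger\circ\cN^c$ does not depend on the isometric dilation used: any complement equals $J\,\cN^c_{\min}(\cdot)\,J^\dagger$ for an isometry $J$, and $J^\dagger(JYJ^\dagger)J=Y$.
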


\begin{proof}
Lemma~\ref{lem:Gamma-square} shows that the transpose-depolarizing
channel $\Gamma_t(\cdot)$ satisfies the condition of
Proposition~\ref{prop:general-p2-criterion}. An analogous argument shows
that the depolarizing channel $\Delta_p(\cdot)$ satisfies the same condition.

Moreover, Lemma~\ref{lemma: Gamma_t^c and Delta_p^c property} in the Appendix
shows that the complementary channels $\Gamma_t^c(\cdot)$ and
$\Delta_p^c(\cdot)$ also satisfy this condition. Hence, the claimed
multiplicativity follows from Proposition~\ref{prop:general-p2-criterion}.
\end{proof}

\begin{proposition}[Complementary invariance of $\upsilon_p$]\label{prop:upsilon-complementary-invariance}
Let
\[
\Omega:\mathcal L(B')\to \mathcal L(A)
\]
be a completely positive map. Let
\[
W:B'\to A\otimes K
\]
be a Stinespring operator for $\Omega$, i.e.
\begin{equation}\label{eq:Omega-Stinespring-operator}
\Omega(X)=\Tr_K(WXW^\dagger),
\qquad X\in\mathcal L(B').
\end{equation}
Define the corresponding complementary completely positive map (not necessarily trace-preserving)
\begin{equation}\label{eq:Omega-complementary-cp}
\Omega^c(X):=\Tr_A(WXW^\dagger),
\qquad X\in\mathcal L(B').
\end{equation}
Then, for every $1\le p\le \infty$,
\begin{equation}\label{eq:upsilon-complementary-invariance}
\upsilon_p(\Omega)=\upsilon_p(\Omega^c).
\end{equation}
\end{proposition}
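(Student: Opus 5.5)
The plan is to show $\upsilon_p(\Omega)\le\upsilon_p(\Omega^c)$ and then obtain the reverse inequality by symmetry. The symmetry comes from the same Stinespring operator $W$: with the roles of $A$ and $K$ exchanged, $W$ is a Stinespring operator for $\Omega^c$, and its complementary map is $\Omega$ again. The mechanism is the Schmidt decomposition of a pure (possibly unnormalized) vector across a bipartite cut. I will work throughout with the Choi-type formulation \eqref{eq: upsilon_p Choi}, in which the supremum runs over all finite-dimensional $E$ and all states $\sigma^{B'E}$ with $\Tr_E\sigma^{B'E}=I_{B'}/d_{B'}$.

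\emph{Step 1 (purify the input).} Fix a feasible $\sigma^{B'E}$. Choose a purification $\ket{\psi}^{B'ER}$ with $R$ finite dimensional, and set $\sigma^{B'R'}:=\psi^{B'ER}$ viewed on $B'\otimes R'$ with $R':=E\otimes R$; this step is not needed, since we keep $E$ and $R$ separate. The relevant facts are that $\Tr_{E}\psi^{B'ER}=\psi^{B'R}$ has $B'$-marginal $I_{B'}/d_{B'}$, so $\psi^{B'R}$ is feasible for the $\Omega^c$ problem with output system $R$. Likewise $\psi^{B'E}=\sigma^{B'E}$.

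\emph{Step 2 (dilate and compare spectra).} Form the vector $\ket{\varphi}:=(W\otimes I_{ER})\ket{\psi}\in A\otimes K\otimes E\otimes R$. This vector need not be normalized, since $\Omega$ is only CP. By linearity and \eqref{eq:Omega-Stinespring-operator},
\[
\Tr_{KR}\proj{\varphi}=(\Omega\otimes\id_E)(\sigma^{B'E}),
\]
and by \eqref{eq:Omega-complementary-cp},
\[
\Tr_{AE}\proj{\varphi}=(\Omega^c\otimes\id_R)(\psi^{B'R}).
\]
These are the two reductions of the rank-one positive operator $\proj{\varphi}$ across the cut $AE:KR$. By the Schmidt decomposition, which holds for unnormalized vectors as well, the two reductions have the same nonzero spectrum. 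Both operators are positive, so their Schatten $p$-norms coincide for every $1\le p\le\infty$. Therefore
\[
\|(\Omega\otimes\id_E)(\sigma^{B'E})\|_p=\|(\Omega^c\otimes\id_R)(\psi^{B'R})\|_p\le\upsilon_p(\Omega^c).
\]
Taking the supremum over $E$ and $\sigma^{B'E}$ gives $\upsilon_p(\Omega)\le\upsilon_p(\Omega^c)$.

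\emph{Step 3 (reverse inequality).} Let $\widetilde W:=\mathrm{SWAP}_{A,K}\,W:B'\to K\otimes A$. Then $\Omega^c(X)=\Tr_A(\widetilde WX\widetilde W^\dagger)$ and $\Tr_K(\widetilde WX\widetilde W^\dagger)=\Omega(X)$. So $\widetilde W$ is a Stinespring operator for $\Omega^c$ whose complementary map is $\Omega$. Applying Steps 1--2 to $\Omega^c$ gives $\upsilon_p(\Omega^c)\le\upsilon_p(\Omega)$, which establishes \eqref{eq:upsilon-complementary-invariance}.

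I do not expect a serious obstacle here. The point that needs care is that the roles of the environment are exchanged across the cut: the output $AE$ of $\Omega\otimes\id_E$ matches the output $KR$ of $\Omega^c\otimes\id_R$, and not the pair $K$ with $E$. The argument therefore depends on the optimization allowing an arbitrary finite-dimensional reference system, here the purifying system $R$, together with the fact that the marginal constraint on $B'$ is inherited by $\psi^{B'R}$. A second point to check is that the Schmidt/spectrum argument is stated for unnormalized $\varphi$, since $\Omega$ need not be trace preserving.
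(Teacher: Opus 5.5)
Your proof is correct and follows essentially the same route as the paper: both compare the two reductions of one rank-one operator, formed by applying $W$ to a purification of the admissible input, across the cut $AE:KR$ (the paper's $AE:KF$), and both get the reverse inequality by exchanging the roles of $A$ and $K$. The only cosmetic difference is that you purify the Choi state $\sigma^{B'E}$ directly, whereas the paper takes a Stinespring isometry $V$ of $\Lambda$ and uses $\Lambda^c$ as the admissible map for $\Omega^c$; these coincide, since $(I\otimes V)\ket{\Phi}$ purifies $(\id_{B'}\otimes\Lambda)(\Phi^{B'B})$ and $(\id_{B'}\otimes\Lambda^c)(\Phi^{B'B})$ is its marginal on $B'F$.
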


\begin{proof}
Fix a CPTP map
\begin{equation}
\Lambda:\mathcal L(B)\to \mathcal L(E),
\end{equation}
and choose a Stinespring isometry
\begin{equation}
V:B\to E\otimes F
\end{equation}
for $\Lambda$, so that
\begin{equation}\label{eq:Lambda-Stinespring-isometry}
\Lambda(Y)=\Tr_F(VYV^\dagger),
\qquad Y\in\mathcal L(B).
\end{equation}
Let
\begin{equation}\label{eq:Lambda-complementary-channel}
\Lambda^c(Y):=\Tr_E(VYV^\dagger),
\qquad Y\in\mathcal L(B),
\end{equation}
denote the complementary channel of $\Lambda$.

Let $\Phi^{B'B}=|\Phi\rangle\langle\Phi|^{B'B}$ be the normalized maximally entangled state used in the definition of $\upsilon_p$, and define
\begin{equation}
|\Psi\rangle_{AKEF}:=(W\otimes V)|\Phi\rangle_{B'B}.
\end{equation}
Since $\Omega$ is only assumed to be completely positive, the operator $W$ need not be isometric; accordingly, $|\Psi\rangle_{AKEF}$ need not be normalized. This does not affect the argument below, since we only use that the two reduced operators come from the same rank one positive operator $|\Psi\rangle\langle\Psi|$.

Now using \eqref{eq:Omega-Stinespring-operator} and \eqref{eq:Lambda-Stinespring-isometry}, we obtain
\begin{align}
(\Omega\otimes \id_E)\bigl((\id_{B'}\otimes \Lambda)(\Phi^{B'B})\bigr)
&=\Tr_{KF}\bigl(|\Psi\rangle\langle\Psi|_{AKEF}\bigr).
\label{eq:first-reduction-cp}
\end{align}
Similarly, using \eqref{eq:Omega-complementary-cp} and \eqref{eq:Lambda-complementary-channel}, give us
\begin{align}
(\Omega^c\otimes \id_F)\bigl((\id_{B'}\otimes \Lambda^c)(\Phi^{B'B})\bigr)
&=\Tr_{AE}\bigl(|\Psi\rangle\langle\Psi|_{AKEF}\bigr).
\label{eq:second-reduction-cp}
\end{align}
The two operators in \eqref{eq:first-reduction-cp} and \eqref{eq:second-reduction-cp} are reduced operators of the same rank one positive operator $|\Psi\rangle\langle\Psi|_{AKEF}$. Hence they have the same nonzero spectrum, and therefore the same Schatten $p$-norm:
\begin{equation}\label{eq:equal-p-norms-cp}
\left\| (\Omega\otimes \id_E)\bigl((\id_{B'}\otimes \Lambda)(\Phi^{B'B})\bigr) \right\|_p=\left\| \Omega^c\otimes \id_F)\bigl((\id_{B'}\otimes \Lambda^c)(\Phi^{B'B})\bigr)
\right\|_p.
\end{equation}
Since every admissible CPTP map $\Lambda$ for $\upsilon_p(\Omega)$ produces an admissible CPTP map
$\Lambda^c$ for $\upsilon_p(\Omega^c)$, taking the supremum over all $\Lambda$ yields
\begin{equation}\label{eq:first-inequality-cp}
\upsilon_p(\Omega)\le \upsilon_p(\Omega^c).
\end{equation}
To obtain the reverse inequality, interchange the roles of $\Omega$ and $\Omega^c$. The same operator $W$
is also a Stinespring operator for $\Omega^c$, viewed as a completely positive map into $\mathcal L(K)$, and
the corresponding complementary completely positive map is precisely $\Omega$, obtained by tracing over $K$
instead of $A$. Repeating the argument above with the roles of $A$ and $K$ interchanged gives
\begin{equation}\label{eq:second-inequality-cp}
\upsilon_p(\Omega^c)\le \upsilon_p(\Omega).
\end{equation}
Combining \eqref{eq:first-inequality-cp} and \eqref{eq:second-inequality-cp} proves
\eqref{eq:upsilon-complementary-invariance}.
\end{proof}

\begin{corollary}[Transfer of multiplicativity to complementary CP maps]\label{cor:upsilon-complementary-multiplicativity}
For every $n\in\mathbb N$ and every $1\le p\le \infty$,
\begin{equation}\label{eq:upsilon-tensor-complementary}
\upsilon_p(\Omega^{\otimes n})=\upsilon_p((\Omega^c)^{\otimes n}).
\end{equation}
In particular, if
\begin{equation}\label{eq:Omega-multiplicative-assumption}
\upsilon_2(\Omega^{\otimes n})=\bigl(\upsilon_2(\Omega)\bigr)^n
\qquad \text{for all }n\in\mathbb N,
\end{equation}
then
\begin{equation}\label{eq:Omega-complement-multiplicative}
\upsilon_2((\Omega^c)^{\otimes n})=\bigl(\upsilon_2(\Omega^c)\bigr)^n
\qquad \text{for all }n\in\mathbb N.
\end{equation}
\end{corollary}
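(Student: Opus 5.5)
The plan is to derive both claims from Proposition~\ref{prop:upsilon-complementary-invariance}, applied not to $\Omega$ but to the tensor power $\Omega^{\otimes n}$, which is itself a completely positive map $\mathcal L(B'^{\otimes n})\to\mathcal L(A^{\otimes n})$. The one point to check is that $(\Omega^c)^{\otimes n}$ is a complementary map of $\Omega^{\otimes n}$ in the sense of Eq.~\eqref{eq:Omega-complementary-cp}, for a suitable Stinespring operator.

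We begin by fixing a Stinespring operator $W:B'\to A\otimes K$ for $\Omega$, so that $\Omega=\Tr_K(W\cdot W^\dagger)$ and $\Omega^c=\Tr_A(W\cdot W^\dagger)$. Next we set
\begin{equation*}
W_n:=W^{\otimes n}:B'^{\otimes n}\to (A\otimes K)^{\otimes n}\simeq A^{\otimes n}\otimes K^{\otimes n},
\end{equation*}
where the last identification is the canonical reordering of tensor factors (a fixed unitary permutation). Partial traces factorize over tensor products, so for product inputs $X_1\otimes\cdots\otimes X_n$ we have $\Tr_{K^{\otimes n}}(W_n X W_n^\dagger)=\Omega^{\otimes n}(X)$ and $\Tr_{A^{\otimes n}}(W_nXW_n^\dagger)=(\Omega^c)^{\otimes n}(X)$. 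By linearity these identities then hold for all $X\in\mathcal L(B'^{\otimes n})$. Hence $W_n$ is a Stinespring operator for $\Omega^{\otimes n}$ whose associated complementary CP map is exactly $(\Omega^c)^{\otimes n}$. We also note that $\upsilon_p(\Omega^{\otimes n})$ is defined with the maximally entangled state $\Phi^{\otimes n}$ on $B'^{\otimes n}B^{\otimes n}$, which is the normalized maximally entangled state of dimension $d_{B'}^n$. So the $n$-copy quantity is literally $\upsilon_p$ of the single CP map $\Omega^{\otimes n}$, and the hypotheses of Proposition~\ref{prop:upsilon-complementary-invariance} are met.

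Applying that proposition to $\Omega^{\otimes n}$ with Stinespring operator $W_n$ gives Eq.~\eqref{eq:upsilon-tensor-complementary} for every $1\le p\le\infty$. For the ``in particular'' statement, we chain the identities
\begin{equation*}
\upsilon_2((\Omega^c)^{\otimes n})=\upsilon_2(\Omega^{\otimes n})=\bigl(\upsilon_2(\Omega)\bigr)^n=\bigl(\upsilon_2(\Omega^c)\bigr)^n .
\end{equation*}
The first equality is the $p=2$ case just proved, the second is the hypothesis \eqref{eq:Omega-multiplicative-assumption}, and the third is Proposition~\ref{prop:upsilon-complementary-invariance} with $n=1$.

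The only step needing care is the identification of the complement of $\Omega^{\otimes n}$ with $(\Omega^c)^{\otimes n}$. This includes noting that the value of $\upsilon_p$ does not depend on which Stinespring operator is chosen, since different Stinespring operators give complements related by a partial isometry on the environment, and Schatten norms are invariant under such maps. Here, however, this is unnecessary: the proposition holds for any chosen $W$, and we simply choose $W_n=W^{\otimes n}$. Everything else is bookkeeping of tensor-factor orderings.
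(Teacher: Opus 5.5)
Your proposal is correct and follows the paper's proof essentially verbatim: it applies Proposition~\ref{prop:upsilon-complementary-invariance} to $\Omega^{\otimes n}$ with Stinespring operator $W^{\otimes n}$, whose complementary map is $(\Omega^c)^{\otimes n}$, and then chains the same three equalities for the $p=2$ statement. Your additional checks, namely the extension from product inputs by linearity and the identification of $\Phi^{\otimes n}$ with the maximally entangled state of dimension $d_{B'}^n$, only make explicit what the paper leaves implicit.
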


\begin{proof}
If $W:B'\to A\otimes K$ is a Stinespring operator for $\Omega$, then
\begin{equation}
W^{\otimes n}:(B')^{\otimes n}\to A^{\otimes n}\otimes K^{\otimes n}
\end{equation}
is a Stinespring operator for $\Omega^{\otimes n}$, and the corresponding complementary completely positive map is $(\Omega^c)^{\otimes n}$.
Therefore \eqref{eq:upsilon-tensor-complementary} follows from Proposition~\ref{prop:upsilon-complementary-invariance}.

Now assume \eqref{eq:Omega-multiplicative-assumption} applies. Then
\begin{equation}
\upsilon_2((\Omega^c)^{\otimes n})=\upsilon_2(\Omega^{\otimes n})
=\bigl(\upsilon_2(\Omega)\bigr)^n=\bigl(\upsilon_2(\Omega^c)\bigr)^n,
\end{equation}
where the first and third equalities use \eqref{eq:upsilon-tensor-complementary} and
Proposition~\ref{prop:upsilon-complementary-invariance}. This proves
\eqref{eq:Omega-complement-multiplicative}.
\end{proof}

\section{Discussion and Open Problems}\label{sec:summary}

In this paper, we show that the R\'{e}nyi entanglement of purification is
equivalently related to the constrained maximal output Schatten $p$-norm of
the CP map associated with the underlying bipartite state.
In particular, we study the constrained maximal output Schatten $2$-norm of a
CP map $\Omega$, defined by
\begin{align}\label{eq: upsilon_p Choi 2}
\upsilon_2(\Omega):=\sup_{\sigma^{B'E}: \,
\Tr_E(\sigma^{B'E})=I_{B'}/d_{B'}} \;
\left\|
(\Omega \otimes \id_E)\sigma^{B'E}
\right\|_{2}.
\end{align}
We first specialize to the case where $\Omega$ is given by the transpose-depolarizing channel:
\begin{equation}
    \Gamma_t(X)=tX^T + (1-t)\operatorname{Tr}(X)\frac{I_d}{d}, \qquad -\frac{1}{d-1} \le t\le \frac{1}{d+1}
\end{equation}
We obtained an exact evolution of the one copy problem at $p=2$. More precisely 
\begin{equation}
    \upsilon_2(\Gamma_t) = \sqrt{\max\left\{\frac{1}{d},\ t^2 +\frac{1-t^2}{d^2}\right\}}.
\end{equation}
These two branches are attained, respectively, at $\sigma^{B'E}= I_{B'}/d_{B'} \otimes \proj{\psi}^{E}$
for a pure state $\proj{\psi}^{E}$, and at a maximally entangled state on
$B'\otimes E$.
Further, we considered the tensor-power quantity $\upsilon_2(\Gamma_t^{\otimes n})$ and proved that it is multiplicative for every $n \in \mathbb{N}$ 
\begin{equation}
    \upsilon_2(\Gamma_t^{\otimes n})\,=\,\bigl(\upsilon_2(\Gamma_t)\bigr)^n.
\end{equation}
That is even when arbitrary joint CPTP post-processing is allowed across $n$-copies, entangled strategies do not improve the optimal output Schatten $2$-norm beyond product strategies. 
For $d\ge 3$, this expression further simplifies to
$\upsilon_2(\Gamma_t)=d^{-1/2}$
throughout the entire completely positive range of $t$.

The proof combines several complementary ideas. First, the $p=2$ objective admits a quadratic form representation, which reduces the problem to a weighted combination of global and marginal purities. Second, the feasible set yields the exact one copy upper bound, which matches the explicit lower bounds. Third, in the tensor power setting, the objective can be rewritten as a swap polynomial on a doubled space. The fixed marginal condition forces a product structure for the eigenvalue distribution, leading to the all-$n$ upper bound, which is matched by explicit product optimizers.

We then establish a general multiplicativity criterion. More precisely, for any
CP map $\cN$ satisfying
\begin{equation}
\cN^\dagger\circ \cN
=
a\,\id_A+b\,\Tr[\cdot]\,I_d
\end{equation}
for some $a,b\ge 0$, we compute $\upsilon_2(\cN)$ and prove that it is multiplicative under tensor powers. 

We further show that this criterion applies to the complementary channel $\Gamma_t^c(\cdot)$ of the transpose-depolarizing channel, as well as to the depolarizing channel $\Delta_p(\cdot)$ and its complementary channel $\Delta_p^c(\cdot)$. Consequently, $\upsilon_2$ is multiplicative for each of these channels. This provides a unified $p=2$ mechanism for multiplicativity across several natural channel families and their complementary channels.

\medskip

A natural next question is whether a similar multiplicativity statement holds for the Hilbert-Schmidt adjoint of the complementary channel of $\Gamma_t$, namely $(\Gamma_t^c)^{\dagger}$.
The multiplicativity problem for $(\Gamma_t^c)^{\dagger}$ appears to be a
distinct question that requires techniques beyond those developed in this work.
This map is of particular interest because it arises naturally in the study of the entanglement of purification for Werner states, defined as follows
\begin{equation}
W_{AC}(t)
:=(\operatorname{id}_A\otimes \Gamma_t)(\Phi_d^{AA'})=
\frac{1-t}{d^2}I_{AC}
+\frac{t}{d}\Pi_{AC}.
\end{equation}
where $\Pi_{AC}$ is the swap operator on $A\otimes C$.  
In particular, it is known that the entanglement of purification is additive
for $t=0,\frac{1}{1-d},\frac{1}{1+d}$ \cite{Christandl-Winter2005,Terhal_2002}. A systematic treatment of this problem is left for future work. In particular, understanding the multiplicativity properties of $(\Gamma_t^c)^\dagger$ may provide a direct route to corresponding additivity questions for Rényi entanglement of purification of Werner states.
Moreover, in dimension $d=2$, numerical evidence suggests that, up to possible
numerical error, the entanglement of purification is not additive \cite{Chen-Winter2012}.

We show in Lemma~\ref{lemma: map for Werner} that for Werner states the corresponding CP map appearing in Eq.~\ref{eq: upsilon_p Choi 2} is given by $(d\,\Gamma_t^c)^{\dagger}$. Consequently, the additivity problem for the R\'{e}nyi entanglement of purification  for Werner states is equivalent to the multiplicativity problem for the following quantity:
\begin{equation}
\upsilon_p\left(d\,(\Gamma_t^c)^{\dagger}\right)
:= \sup_{\Lambda:\mathcal L(B_1\otimes B_2)\to\mathcal L(E)\ \mathrm{CPTP}}
\left\| d\,\bigl((\Gamma_t^c)^{\dagger}\otimes \id_E\bigr)
\bigl((\id_{B'_1B'_2}\otimes \Lambda)(\Phi_{d}^{B'_1 B_1}\otimes \Phi_{d}^{B'_2 B_2})\bigr) \right\|_p .
\end{equation}
where $(\Gamma_t^c)^{\dagger}:\cL(B'_1 \otimes B'_2)\to \cL(A)$ is CP,
$\Lambda:\cL(B_1\otimes B_2)\to \cL(E)$ is CPTP, and
$\Phi_{d}^{B'_1 B_1}$ and $\Phi_{d}^{B'_2 B_2}$ are normalized maximally
entangled states of dimension $d$.

Finally, we obtain the following explicit form of
$(\Gamma_t^c)^{\dagger}$ in Lemma~\ref{lemma: map for Werner}:
\begin{align}
  (\Gamma_t^c)^{\dagger}(Y) =\Tr_{B'_2}(S_t \; Y \;S_t)  
\end{align}
where
\begin{equation}
S_t =
(a^++a^-)I_{B'_1B'_2}+
(a^+-a^-)\Pi_{B'_1B'_2}.
\end{equation}
Here $\Pi_{B'_1B'_2}$ is the flip operator on $B'_1\otimes B'_2$ and
\begin{equation}
a^+
=\sqrt{\frac{1+(d-1)t}{4d}},
\qquad
a^-
=\sqrt{\frac{1-(d+1)t}{4d}}.
\end{equation}

\bigskip
\noindent \textbf{Acknowledgments.}
SF acknowledges support from the University of Waterloo, Natural Sciences and Engineering Research Council of Canada, and the Government of Canada through the Department of Innovation, Science and Economic Development and by the Province of Ontario through the Ministry of Colleges and Universities at Perimeter Institute.
ZBK acknowledges support from the Ada Lovelace Postdoctoral Fellowship at the Perimeter Institute for Theoretical Physics.

\appendix

\section{State-Map Correspondence}
\begin{lemma}[state to map representation]\label{lemma: state-CP map}
Let $A$ and $B$ be finite dimension quantum systems, and let
$B'\cong B$. For any bipartite quantum state $\rho^{AB}$, there exists a
completely positive map
\begin{equation}
\Omega:\mathcal L(B')\to \mathcal L(A)
\end{equation}
such that
\begin{equation}
\rho^{AB}
=
(\Omega\otimes \id_{B})(\Phi^{B'B}),
\end{equation}
where
\begin{equation}
\ket{\Phi}^{B'B}
=
\frac{1}{\sqrt{d_B}}
\sum_{i=1}^{d_B}\ket{i}^{B'}\ket{i}^{B},
\qquad
\Phi^{B'B}=\proj{\Phi}^{B'B},
\end{equation}
and $d_B=\dim B$. The map $\Omega$ is not necessarily trace-preserving.
\end{lemma}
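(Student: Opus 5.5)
The plan is to construct $\Omega$ directly from the Choi--Jamio{\l}kowski correspondence, with the roles of input and output arranged so that $B$ is the untouched register. Given $\rho^{AB}$, define a linear map $\Omega:\mathcal L(B')\to\mathcal L(A)$ by
\begin{equation}
\Omega(X):=d_B\,\Tr_B\bigl[(I_A\otimes X^{T})\,\rho^{AB}\bigr],\qquad X\in\mathcal L(B'),
\end{equation}
where the transpose is taken in the fixed basis $\{\ket{i}\}$ and $B'$ is identified with $B$ through that basis. This mirrors the reconstruction formula $\Lambda_\sigma$ of Proposition~\ref{prop:choi-feasible-set}, with the partial trace now taken over $B$ instead of $A$.

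The first step is to verify the identity $(\Omega\otimes\id_B)(\Phi^{B'B})=\rho^{AB}$. Expanding
\begin{equation}
\Phi^{B'B}=\frac{1}{d_B}\sum_{i,j}\ket{i}\!\bra{j}^{B'}\otimes\ket{i}\!\bra{j}^{B},
\end{equation}
the left-hand side becomes $\frac{1}{d_B}\sum_{i,j}\Omega(\ket{i}\!\bra{j})\otimes\ket{i}\!\bra{j}$. Next, write $\rho^{AB}=\sum_{k,l}\rho_{kl}\otimes\ket{k}\!\bra{l}$ in $B$-blocks. A one-line index computation, which is the same block identity used in the proof of Proposition~\ref{prop:choi-feasible-set}, then gives $\Tr_B[(I\otimes\ket{j}\!\bra{i})\rho^{AB}]=\rho_{ij}$. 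Hence $\Omega(\ket{i}\!\bra{j})=d_B\,\rho_{ij}$, and the sum reassembles to $\rho^{AB}$.

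The second step is complete positivity. The Choi operator of $\Omega$ in the unnormalized convention is
\begin{equation}
\sum_{i,j}\Omega(\ket{i}\!\bra{j})\otimes\ket{i}\!\bra{j}=d_B\,\rho^{AB}\ge0,
\end{equation}
so Choi's theorem gives that $\Omega$ is CP. Equivalently, one can read off a Kraus form from a spectral decomposition $\rho^{AB}=\sum_k\ket{v_k}\!\bra{v_k}$. Write each vector as $\ket{v_k}=(M_k\otimes I)\ket{\Phi}\sqrt{d_B}$ with $M_k:B'\to A$ linear; this is the standard vectorization, available because $\ket{\Phi}$ has full Schmidt rank. Then $\Omega(X)=d_B\sum_kM_kXM_k^\dagger$ is manifestly CP and reproduces $\rho^{AB}$.

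No real obstacle is expected. The only point of care is bookkeeping the transpose and the factor $d_B$ so that the normalizations match. Finally, we note that $\Tr\,\Omega(X)=d_B\Tr[X^T\rho_B]$, which equals $\Tr X$ only when $\rho_B=I/d_B$. This is why $\Omega$ need not be trace-preserving, consistent with the statement.
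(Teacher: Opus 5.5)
Your proposal is correct, and your Kraus-form alternative is exactly the paper's proof: the paper writes $\rho^{AB}=\sum_\alpha \proj{\psi_\alpha}$, expresses each $\ket{\psi_\alpha}$ as $(K_\alpha\otimes I_B)\ket{\Phi}$ with $K_\alpha=\sqrt{d_B}\sum_{a,i}c^{(\alpha)}_{ai}\ket{a}\!\bra{i}$ (your $\sqrt{d_B}M_k$), and sets $\Omega=\sum_\alpha K_\alpha(\cdot)K_\alpha^\dagger$, so complete positivity is manifest without appealing to Choi's theorem. Your primary route via $\Omega(X)=d_B\,\mathrm{Tr}_B[(I_A\otimes X^{T})\rho^{AB}]$ writes the same Choi--Jamio{\l}kowski correspondence in closed form and defines the same map, since both have Choi operator $d_B\rho^{AB}$; your block computation and your trace-preservation remark are both right.
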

\begin{proof}
Since $\rho^{AB}\geq 0$, we may write
\begin{equation}
\rho^{AB}
=
\sum_\alpha \proj{\psi_\alpha}^{AB},
\end{equation}
where the vectors $\ket{\psi_\alpha}^{AB}$ are not necessarily normalized.
For instance, one may absorb the eigenvalues of $\rho^{AB}$ into the
corresponding eigenvectors.

Fix orthonormal bases $\{\ket{a}^{A}\}_a$ of $A$ and
$\{\ket{i}^{B}\}_i$ of $B$. For each $\alpha$, write
\begin{equation}
\ket{\psi_\alpha}^{AB}
=
\sum_{a,i} c^{(\alpha)}_{a i}
\ket{a}^{A}\ket{i}^{B},
\end{equation}
And define an operator $K_\alpha:B'\to A$ by
\begin{equation}
K_\alpha
=
\sqrt{d_B}
\sum_{a,i}
c^{(\alpha)}_{a i}
\ket{a}^{A}\bra{i}^{B'}.
\end{equation}
Then, by the definition of the maximally entangled state,
\begin{equation}
(K_\alpha\otimes I_B)\ket{\Phi}^{B'B}
=
\ket{\psi_\alpha}^{AB}.
\end{equation}
Now define
\begin{equation}
\Omega(X)
=
\sum_\alpha K_\alpha X K_\alpha^\dagger,
\qquad
X\in \mathcal L(B').
\end{equation}
This map is completely positive, since it is written in Kraus form. Hence
\begin{equation}
\begin{aligned}
(\Omega\otimes \id_B)(\Phi^{B'B})
&=
\sum_\alpha
(K_\alpha\otimes I_B)
\Phi^{B'B}
(K_\alpha^\dagger\otimes I_B) \\
&=
\sum_\alpha \proj{\psi_\alpha}^{AB} \\
&=
\rho^{AB}.
\end{aligned}
\end{equation}
Thus every bipartite state $\rho^{AB}$ can be represented as the image of
one half of a maximally entangled state under a completely positive map.
This is precisely the Choi-Jamio{\l}kowski representation of
$\rho^{AB}$ in the normalization convention determined by $\Phi^{B'B}$.

\end{proof}

\section{Channels Satisfying the General Criterion}
\begin{lemma}[Adjoint square identities for $\Gamma_t^c$ and $\Delta_p^c$]\label{lemma: Gamma_t^c and Delta_p^c property}
Let $\Gamma_t^c : \cL(\mathbb{C}^d) \to \cL(\mathbb{C}^d\otimes \mathbb{C}^d)$  be the complementary channel of the transpose-depolarizing
channel in the Datta-Fukuda-Holevo representation \cite{DattaFukudaHolevo2006,Holevo2007ComplementaryChannels}
\begin{equation}
\Gamma_t^c(X)=S_t(X\otimes I)S_t^\dagger, \nonumber
\end{equation}
where
\begin{equation}
S_t=(a^++a^-)I_{12}+(a^+-a^-)\Pi_{12}, \nonumber
\end{equation}
where $I_{12}$ and $\Pi_{12}$ are respectively the identity and  flip operator on $\mathbb{C}^d\otimes \mathbb{C}^d$,
and 
\begin{equation}
a^+
=
\sqrt{\frac{1+(d-1)t}{4d}},
\qquad
a^-
=
\sqrt{\frac{1-(d+1)t}{4d}}.  \nonumber
\end{equation}
Then, for every $X\in \cL(\mathbb{C}^d)$,
\begin{equation}
\bigl((\Gamma_t^c)^\dagger\circ \Gamma_t^c\bigr)(X)
=
\frac{1-t^2}{d}X+t^2\Tr[X]I_d. \nonumber
\end{equation}
Similarly, let $\Delta_p^c : \cL(\mathbb{C}^d) \to \cL(\mathbb{C}^d\otimes \mathbb{C}^d)$  be the complementary channel of the depolarizing
channel in the Datta-Fukuda-Holevo representation
\begin{equation}
\Delta_p^c(X)=P_p(X\otimes I_d)P_p^\dagger,
\end{equation}
where
\begin{equation}
P_p
=
\sqrt{\frac{p}{d}}\, I_{12}
+
\sqrt d
\left(
-\frac{\sqrt p}{d}
+
\sqrt{1-p\left(\frac{d^2-1}{d^2}\right)}
\right)
\proj{\Phi_d},
\end{equation}
and
\begin{equation}
\ket{\Phi_d}
=
\frac{1}{\sqrt d}\sum_{i=1}^d \ket{i}\otimes \ket{i}.
\end{equation}
Then, for every $X\in \cL(\mathbb C^d)$,
\begin{equation}
\bigl((\Delta_p^c)^\dagger\circ \Delta_p^c\bigr)(X)
=
\frac{p(2-p)}{d}X
+
(1-p)^2\Tr[X]I_d.
\end{equation}
\end{lemma}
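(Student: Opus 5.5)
Both identities will be verified by writing each complementary channel as $\mathcal M(X)=S(X\otimes I)S^\dagger$ with $S$ a linear combination of $I_{12}$ and one further operator ($\Pi_{12}$ or $\proj{\Phi_d}$), and then computing the adjoint explicitly. By the Hilbert--Schmidt definition,
\begin{equation}
\mathcal M^\dagger(Y)=\Tr_2\bigl(S^\dagger Y S\bigr),
\end{equation}
since $\Tr\bigl(Y S(X\otimes I)S^\dagger\bigr)=\Tr\bigl(\Tr_2(S^\dagger YS)\,X\bigr)$. Therefore
\begin{equation}
(\mathcal M^\dagger\circ\mathcal M)(X)=\Tr_2\bigl(S^\dagger S(X\otimes I)S^\dagger S\bigr).
\end{equation}
The whole computation thus reduces to first finding $R:=S^\dagger S$ and then evaluating $\Tr_2(R(X\otimes I)R)$.

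For $\Gamma_t^c$, the operator $S_t$ is Hermitian. Using $\Pi_{12}^2=I$, one finds $R=c_0 I+c_1\Pi_{12}$ with
\begin{align}
c_0&=(a^++a^-)^2+(a^+-a^-)^2=2\bigl((a^+)^2+(a^-)^2\bigr)=\frac{1-t}{d},\\
c_1&=2\bigl((a^+)^2-(a^-)^2\bigr)=t.
\end{align}
Expanding $R(X\otimes I)R$ produces four terms, each handled by a standard partial-trace identity:
\begin{align}
\Tr_2(X\otimes I)&=dX, & \Tr_2\bigl(\Pi(X\otimes I)\bigr)&=X, & \Tr_2\bigl((X\otimes I)\Pi\bigr)&=X, & \Tr_2\bigl(\Pi(X\otimes I)\Pi\bigr)&=\Tr(X)\,I.
\end{align}
The last identity holds because $\Pi(X\otimes I)\Pi=I\otimes X$. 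Collecting terms gives
\begin{equation}
(c_0^2d+2c_0c_1)X+c_1^2\Tr(X)\,I .
\end{equation}
It remains to check that the coefficient of $X$ equals $\frac{(1-t)^2}{d}+\frac{2t(1-t)}{d}=\frac{1-t^2}{d}$, and that the coefficient of $\Tr(X)I$ is $c_1^2=t^2$. Both match the claimed identity.

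For $\Delta_p^c$, write $P_p=\alpha I+\gamma\,\proj{\Phi_d}$, where $\alpha=\sqrt{p/d}$ and $\gamma$ is the stated coefficient. Since $\proj{\Phi_d}$ is a projector, $R=\alpha^2 I+\bigl((\alpha+\gamma)^2-\alpha^2\bigr)\proj{\Phi_d}$. Writing $r_0:=\alpha^2=p/d$ and $r_1:=(\alpha+\gamma)^2-\alpha^2$, the four partial traces needed are
\begin{align}
\Tr_2(X\otimes I)&=dX, & \Tr_2\bigl(\Phi(X\otimes I)\bigr)&=\tfrac{1}{d}X, & \Tr_2\bigl((X\otimes I)\Phi\bigr)&=\tfrac{1}{d}X, & \Tr_2\bigl(\Phi(X\otimes I)\Phi\bigr)&=\tfrac{1}{d}\Tr(X)\,\tfrac{I}{d}.
\end{align}
The last one follows from $\langle\Phi|X\otimes I|\Phi\rangle=\Tr(X)/d$ together with $\Tr_2\Phi=I/d$. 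Hence
\begin{equation}
(\Delta_p^c)^\dagger\circ\Delta_p^c(X)=\Bigl(r_0^2d+\frac{2r_0r_1}{d}\Bigr)X+\frac{r_1^2}{d^2}\Tr(X)\,I .
\end{equation}
The key input is $(\alpha+\gamma)^2=d\bigl(1-p\,\tfrac{d^2-1}{d^2}\bigr)$, which gives
\begin{equation}
r_1=d-\frac{p(d^2-1)}{d}-\frac{p}{d}=d(1-p),
\end{equation}
so that $r_1^2/d^2=(1-p)^2$. The coefficient of $X$ is then $\frac{p^2}{d}+\frac{2p(1-p)}{d}=\frac{p(2-p)}{d}$, as required.

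The main obstacle is purely bookkeeping. One must get the partial-trace identities with the correct normalization, for instance the factor $1/d$ coming from the normalized $\Phi_d$ versus the unnormalized swap. One must also simplify $(\alpha+\gamma)^2$ correctly, noting that $\alpha+\gamma=\sqrt{d}\sqrt{1-p(d^2-1)/d^2}$ because the $\sqrt{p}$ terms cancel. Sanity checks are available at $t=0$ and $p=1$, where the map is fully depolarizing.
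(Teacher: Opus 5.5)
Your proposal is correct and follows essentially the same route as the paper: compute the adjoint as $\Tr_2(S^\dagger\,\cdot\,S)$, reduce to $R=S^\dagger S=S^2$ (using Hermiticity of $S_t$ and $P_p$), write $R$ as a combination of $I_{12}$ and $\Pi_{12}$ (respectively $\proj{\Phi_d}$), and finish with the same four partial-trace identities. The only cosmetic difference is that you obtain the $\proj{\Phi_d}$ coefficient of $P_p^2$ as $(\alpha+\gamma)^2-\alpha^2$, which is exactly the paper's eigenvalue computation on $\ket{\Phi_d}$ versus its orthogonal complement.
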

\begin{proof}
We compute the Hilbert-Schmidt adjoint
\begin{equation}
(\Gamma_t^c)^\dagger:
\mathcal{L}(\mathbb{C}^d\otimes \mathbb{C}^d)
\to
\mathcal{L}(\mathbb{C}^d).
\end{equation}
The Hilbert-Schmidt adjoint is defined by
\begin{equation}
\Tr\,\left[
Y^\dagger \Gamma_t^c(X)
\right]
=
\Tr\,\left[
\bigl((\Gamma_t^c)^\dagger(Y)\bigr)^\dagger X
\right]
\end{equation}
for all $X\in\mathcal{L}(\mathbb{C}^d)$ and
$Y\in\mathcal{L}(\mathbb{C}^d\otimes \mathbb{C}^d)$.

By the definition of $\Gamma_t^c$, we obtain
\begin{align}
\Tr\,\left[
Y^\dagger \Gamma_t^c(X)
\right]
&=
\Tr\,\left[
Y^\dagger S(X\otimes I_d)S^\dagger
\right] \nonumber\\
&=
\Tr\,\left[
S^\dagger Y^\dagger S(X\otimes I_d)
\right] \nonumber\\
&=
\Tr\,\left[
(S^\dagger YS)^\dagger (X\otimes I_d)
\right].
\end{align}
Now we use the defining property of the partial trace:
for every $Z\in\mathcal{L}(\mathbb{C}^d\otimes \mathbb{C}^d)$,
\begin{equation}
\Tr\,\left[
Z^\dagger (X\otimes I_d)
\right]
=
\Tr\,\left[
(\Tr_2 Z)^\dagger X
\right].
\end{equation}
Applying this with $Z=S^\dagger YS$ gives
\begin{align}
\Tr\,\left[
Y^\dagger \Gamma_t^c(X)
\right]
&=
\Tr\,\left[
\bigl(\Tr_2[S^\dagger YS]\bigr)^\dagger X
\right].
\end{align}
Comparing with the definition of the Hilbert-Schmidt adjoint, we obtain
\begin{equation}
(\Gamma_t^c)^\dagger(Y)
=
\Tr_2[S^\dagger YS].
\end{equation}
In the Datta-Fukuda-Holevo representation $S:=S^\dagger$, so
\begin{equation}
(\Gamma_t^c)^\dagger(Y)
=
\Tr_2[SYS].
\end{equation}
Therefore
\begin{align}\label{eq:combin-Gammas}
\bigl((\Gamma_t^c)^\dagger\circ \Gamma_t^c\bigr)(X)
&=
(\Gamma_t^c)^\dagger\,\left(S(X\otimes I_d)S\right) \nonumber\\
&=
\Tr_2\,\left[
S^2(X\otimes I_d)S^2
\right].
\end{align}
Now we set
\begin{equation}
\alpha:=a^++a^-,
\qquad
\beta:=a^+-a^-,
\end{equation}
and 
\begin{equation}
S=\alpha I_{12}+\beta \Pi.
\end{equation}
Since $\Pi^2=I_{12}$, we obtain
\begin{equation}
S^2=
(\alpha I_{12}+\beta \Pi)^2
=
(\alpha^2+\beta^2)I_{12}+2\alpha\beta \Pi.
\end{equation}
By using
\begin{equation}
(a^+)^2=\frac{1+(d-1)t}{4d},
\qquad
(a^-)^2=\frac{1-(d+1)t}{4d},
\end{equation}
we also obtain
\begin{align}
&\alpha^2+\beta^2=2\bigl((a^+)^2+(a^-)^2\bigr) =\frac{1-t}{d},\\
&2\alpha\beta= 2\bigl((a^+)^2-(a^-)^2\bigr)=t.
\end{align}
Hence 
\begin{equation}
S^2=\frac{1-t}{d}I_d+t\,\Pi.
\end{equation}
Thus we can write Eq.~\eqref{eq:combin-Gammas} as
\begin{align}
\bigl((\Gamma_t^c)^\dagger\circ \Gamma_t^c\bigr)(X)
&=
\Tr_2\,\left[
\left(\frac{1-t}{d}I_d+t\,\Pi\right)
(X\otimes I_d)
\left(\frac{1-t}{d}I_d+t\,\Pi\right)
\right].
\end{align}
Expanding gives
\begin{align}\label{eq:comb-Gammas2}
\bigl((\Gamma_t^c)^\dagger\circ \Gamma_t^c\bigr)(X)
&=
\frac{(1-t)^2}{d^2}\Tr_2[X\otimes I_d] \nonumber\\
&\quad
+
\frac{t(1-t)}{d}
\Tr_2[(X\otimes I_d)\,\Pi] \nonumber\\
&\quad
+
\frac{t(1-t)}{d}
\Tr_2[\,\Pi(X\otimes I_d)] \nonumber\\
&\quad
+
t^2\Tr_2[\,\Pi(X\otimes I_d)\,\Pi].
\end{align}
Now we use the standard identities $\Tr_2[X\otimes I_d]=dX$, and 
\begin{align}
&\Tr_2[(X\otimes I_d)\,\Pi]=X,
\qquad
\Tr_2[\,\Pi(X\otimes I_d)]=X,\\
&\Pi(X\otimes I_d)\,\Pi=I_d\otimes X,
\qquad
\Tr_2[I_d\otimes X]=\Tr[X]I_d,
\end{align}
and substituting them into Eq.~\eqref{eq:comb-Gammas2} gives
\begin{align}
\bigl((\Gamma_t^c)^\dagger\circ \Gamma_t^c\bigr)(X)
&=
\frac{(1-t)^2}{d^2}dX
+
\frac{t(1-t)}{d}X
+
\frac{t(1-t)}{d}X
+
t^2\Tr[X]I_d \nonumber\\
&=
\left(
\frac{(1-t)^2}{d}
+
\frac{2t(1-t)}{d}
\right)X
+
t^2\Tr[X]I_d \nonumber \\
&=
\frac{1-t^2}{d}X+t^2\Tr[X]I_d.
\end{align}
Therefore the claim follows
\begin{equation}
(\Gamma_t^c)^\dagger\circ \Gamma_t^c
=
\frac{1-t^2}{d}\operatorname{id}
+
t^2\Tr[\cdot]I_d.
\end{equation}
Now, we prove the second statement. We  compute the Hilbert-Schmidt adjoint of $\Delta_p^c$.
For $X\in \cL(\mathbb C^d)$ and
$Y\in \cL(\mathbb C^d\otimes \mathbb C^d)$, we have
\begin{align}
\Tr\left[Y^\dagger \Delta_p^c(X)\right]
&=
\Tr\left[
Y^\dagger P_p(X\otimes I_d)P_p^\dagger
\right] \nonumber\\
&=
\Tr\left[
P_p^\dagger Y^\dagger P_p(X\otimes I_d)
\right] \nonumber\\
&=
\Tr\left[
(P_p^\dagger YP_p)^\dagger (X\otimes I_d)
\right].
\end{align}
Using the defining property of the partial trace,
\begin{equation}
\Tr\left[
Z^\dagger (X\otimes I_d)
\right]
=
\Tr\left[
(\Tr_2 Z)^\dagger X
\right],
\end{equation}
with $Z=P_p^\dagger YP_p$, we obtain
\begin{equation}
(\Delta_p^c)^\dagger(Y)
=
\Tr_2[P_p^\dagger YP_p].
\end{equation}
Since $P_p=P_p^\dagger$, this becomes
\begin{equation}
(\Delta_p^c)^\dagger(Y)
=
\Tr_2[P_pYP_p].
\end{equation}
Therefore,
\begin{align}
\bigl((\Delta_p^c)^\dagger\circ \Delta_p^c\bigr)(X)
&=
\Tr_2\left[
P_p^2(X\otimes I_d)P_p^2
\right].
\end{align}
It remains to compute $P_p^2$. Set
\begin{equation}
Q:=\proj{\Phi_d},
\qquad
\gamma
:=
\sqrt{1-p\left(\frac{d^2-1}{d^2}\right)}.
\end{equation}
Then
\begin{equation}
P_p
=
\sqrt{\frac pd}\,I_{12}
+
\sqrt d\left(
-\frac{\sqrt p}{d}+\gamma
\right)Q.
\end{equation}
Since $Q^2=Q$, the eigenvalue of $P_p$ on $Q^\perp$ is
\begin{equation}
\sqrt{\frac pd},
\end{equation}
while the eigenvalue on the span of $\ket{\Phi_d}$ is
\begin{equation}
\sqrt{\frac pd}
+
\sqrt d\left(
-\frac{\sqrt p}{d}+\gamma
\right)
=
\sqrt d\,\gamma.
\end{equation}
Hence
\begin{align}
P_p^2
&=
\frac pd I_{12}
+
\left(
d\gamma^2-\frac pd
\right)Q \nonumber\\
&=
\frac pd I_{12}
+
\left(
d\left(1-p\frac{d^2-1}{d^2}\right)-\frac pd
\right)Q \nonumber\\
&=
\frac pd I_{12}
+
d(1-p)Q.
\end{align}
Thus, if $A:=\frac pd$ and $B:=d\,(1-p)$, then
\begin{equation}
P_p^2=A I_{12}+BQ.
\end{equation}
Therefore,
\begin{align}
\bigl((\Delta_p^c)^\dagger\circ \Delta_p^c\bigr)(X)
&=
\Tr_2\left[
(AI_{12}+BQ)(X\otimes I_d)(AI_{12}+BQ)
\right] \nonumber\\
&=
A^2\Tr_2[X\otimes I_d]
+
AB\Tr_2[(X\otimes I_d)Q] \nonumber\\
&\quad
+
AB\Tr_2[Q(X\otimes I_d)]
+
B^2\Tr_2[Q(X\otimes I_d)Q].
\end{align}
We use the standard identities
\begin{align}
\Tr_2[X\otimes I_d]&=dX,\\
\Tr_2[(X\otimes I_d)Q]&=\frac{1}{d}X,\\
\Tr_2[Q(X\otimes I_d)]&=\frac{1}{d}X,\\
\Tr_2[Q(X\otimes I_d)Q]&=\frac{\Tr[X]}{d^2}I_d.
\end{align}
Substituting these identities gives
\begin{align}
\bigl((\Delta_p^c)^\dagger\circ \Delta_p^c\bigr)(X)
&=
A^2 dX
+
\frac{AB}{d}X
+
\frac{AB}{d}X
+
\frac{B^2}{d^2}\Tr[X]I_d \nonumber\\
&=
\left(
A^2d+\frac{2AB}{d}
\right)X
+
\frac{B^2}{d^2}\Tr[X]I_d.
\end{align}
Since $A=\frac{p}{d}$ and $B=d\,(1-p)$, we get
\begin{equation}
A^2d=\frac{p^2}{d}, \qquad
\frac{2AB}{d} = \frac{2p(1-p)}{d},
\qquad \frac{B^2}{d^2}=(1-p)^2.
\end{equation}
Hence
\begin{align}
\bigl((\Delta_p^c)^\dagger\circ \Delta_p^c\bigr)(X)
&=
\left(
\frac{p^2}{d}
+
\frac{2p(1-p)}{d}
\right)X
+
(1-p)^2\Tr[X]I_d \nonumber\\
&=
\frac{p(2-p)}{d}X
+
(1-p)^2\Tr[X]I_d.
\end{align}
Therefore,
\begin{equation}
(\Delta_p^c)^\dagger\circ \Delta_p^c
=
\frac{p(2-p)}{d}\operatorname{id}
+
(1-p)^2\Tr[\cdot]I_d.
\end{equation}
\end{proof}

\section{Transpose trick for channels and the Werner State Map} \label{sec:transpose-trick}

\begin{lemma}[Transpose trick for channels]\label{lemma: Transpose channels}
Let $A'\simeq A$ and $B'\simeq B$, and fix bases for all four systems. Let $\cN:\cL(A')\to \cL(B)$ be a channel, where
$\dim A'=d_A$ and $\dim B=d_B$. Then
\begin{align}
    (\id_A \otimes \cN) (\proj{\Phi_{d_A}}_{AA'})= \frac{d_B}{d_A}\; \left( (T_A\circ \cN^{\dagger} \circ T_{B'}) \otimes \id_{B}  \right)  (\proj{\Phi_{d_B}}_{B'B}),
\end{align}
where $T_A$ and $T_{B'}$ denote transposition with
respect to  the fixed bases of system $A$ and  $B'$, respectively.
The map  $\cN^\dagger:\cL(B')\to \cL(A)$ is the Hilbert-Schmidt adjoint
of $\cN$, and $\Phi_{d_A}^{AA'}$, and $\Phi_{d_B}^{B'B}$ are  normalized
maximally entangled states in the fix bases.
\end{lemma}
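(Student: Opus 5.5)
The plan is to compare the two sides entry by entry in the fixed product bases. Both sides are linear in the channel and are built from matrix units, so it suffices to check that their matrix elements $\langle i|_A\langle k|_B\,(\cdot)\,|j\rangle_A|l\rangle_B$ agree for all $i,j\in[d_A]$ and $k,l\in[d_B]$. Trace preservation of $\cN$ will not be used; only the fact that $\cN$ is Hermitian-preserving matters, which holds since $\cN$ is CP.

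\textbf{Left-hand side.} Write $\Phi_{d_A}=\frac1{d_A}\sum_{i,j}|i\rangle\langle j|_A\otimes|i\rangle\langle j|_{A'}$. Then
\[
(\id_A\otimes\cN)(\Phi_{d_A})=\frac1{d_A}\sum_{i,j}|i\rangle\langle j|\otimes\cN(|i\rangle\langle j|),
\]
and its $(ik),(jl)$ entry is $\frac1{d_A}\langle k|\cN(|i\rangle\langle j|)|l\rangle$.

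\textbf{Right-hand side.} Write $\Phi_{d_B}=\frac1{d_B}\sum_{k,l}|k\rangle\langle l|_{B'}\otimes|k\rangle\langle l|_B$ and use $T_{B'}(|k\rangle\langle l|)=|l\rangle\langle k|$. The prefactor $d_B/d_A$ cancels the $1/d_B$, so the right-hand side equals
\[
\frac1{d_A}\sum_{k,l}T_A\bigl(\cN^\dagger(|l\rangle\langle k|)\bigr)\otimes|k\rangle\langle l|.
\]
Its $(ik),(jl)$ entry is $\frac1{d_A}\langle i|T_A(\cN^\dagger(|l\rangle\langle k|))|j\rangle=\frac1{d_A}\langle j|\cN^\dagger(|l\rangle\langle k|)|i\rangle$. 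By cyclicity this equals $\frac1{d_A}\Tr\bigl[\cN^\dagger(|l\rangle\langle k|)\,|i\rangle\langle j|\bigr]$.

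\textbf{Matching the entries.} The remaining step converts this trace into the left-hand entry. The paper's adjoint convention is $\Tr[Y^\dagger\cN(X)]=\Tr[(\cN^\dagger(Y))^\dagger X]$. Since $\cN$ is CP, it is Hermitian-preserving, and hence so is $\cN^\dagger$, i.e. $(\cN^\dagger(Y))^\dagger=\cN^\dagger(Y^\dagger)$. Setting $Y=Z^\dagger$ then gives the bilinear form
\[
\Tr[\cN^\dagger(Z)X]=\Tr[Z\,\cN(X)].
\]
Taking $Z=|l\rangle\langle k|$ and $X=|i\rangle\langle j|$ yields $\Tr[|l\rangle\langle k|\,\cN(|i\rangle\langle j|)]=\langle k|\cN(|i\rangle\langle j|)|l\rangle$, which is exactly the left-hand entry. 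This proves the identity.

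The main obstacle is bookkeeping rather than any conceptual difficulty. One must track which index the transposes $T_{B'}$ and $T_A$ swap, and one must pass correctly from the sesquilinear adjoint convention to the bilinear form. The latter is the one point where complete positivity (through Hermiticity preservation) is genuinely needed.
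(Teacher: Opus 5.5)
Your proof is correct, but it takes a different route from the paper's.

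\textbf{The paper's route.} The paper works at the level of vectors. It first establishes the rectangular transpose trick $(I_A\otimes M)|\Phi_{d_A}\rangle=\sqrt{d_B/d_A}\,(M^T\otimes I_{B})|\Phi_{d_B}\rangle$. It applies this to each Kraus operator $M_k$ of $\cN$. It then identifies the CP map with Kraus operators $M_k^T$ as $T_A\circ\cN^\dagger\circ T_{B'}$, using the Kraus form $\cN^\dagger(\sigma)=\sum_k M_k^\dagger\sigma M_k$ of the adjoint. That last identification is left as ``straightforward to verify''.

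\textbf{Your route.} You compare matrix entries in the fixed product bases and use only the defining relation of the Hilbert--Schmidt adjoint.

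\textbf{What each approach buys.} Your argument needs no Kraus decomposition and holds verbatim for any Hermitian-preserving linear map. Trace preservation is irrelevant in both proofs. Your argument also isolates a point the paper hides inside the claim that $\cN^\dagger$ has Kraus operators $M_k^\dagger$. That point is that the sesquilinear adjoint agrees with the bilinear dual, $\Tr[Z\,\cN(X)]=\Tr[\cN^\dagger(Z)X]$, only because $\cN$ preserves Hermiticity. For $\cN(X)=iX$ the two sides of the lemma are $i\Phi_d$ and $-i\Phi_d$, so your remark that this hypothesis is genuinely needed is right.

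The paper's Kraus-level version gives a structural fact instead: the ``transposed'' map has Kraus operators $M_k^T$. This is exactly what is reused in Lemma~\ref{lemma: map for Werner} to check $T_A\circ(\Gamma_t^c)^\dagger\circ T_{B'}=(\Gamma_t^c)^\dagger$ from $S_t^T=S_t$.
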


\begin{proof}
Using the transpose trick for a rectangular operator, let 
$M:A'\to B$, where $\dim A=d_A$ and $\dim B=d_B$. Let
\begin{align}
    |\Phi_{d_A}\rangle_{AA'} =\frac{1}{\sqrt{d_A}}\sum_{i=1}^{d_A}|e_i\rangle_A|e_i\rangle_{A'} \quad \text{and} \quad
    |\Phi_{d_B}\rangle_{B'B}=
\frac{1}{\sqrt{d_B}}\sum_{j=1}^{d_B}
|f_j\rangle_{B'}|f_j\rangle_{B},
\end{align}
where the states are expressed with respect to the fixed bases of the corresponding systems.
Then
\begin{equation}
(I_{A} \otimes M)|\Phi_{d_A}\rangle_{AA'}
=
\sqrt{\frac{d_B}{d_A}}\,
(M^T\otimes I_{B'} )|\Phi_{d_B}\rangle_{B'B} .
\end{equation}
where $M^T:{B' \to A}$, and  the transpose is with respect to the fixed bases.

Let $\cN:\cL(A')\to \cL(B)$ be a completely positive map with Kraus representation
\begin{equation}
\cN(\rho)=\sum_k M_k \rho M_k^{\dagger}.
\end{equation}
Then the Hilbert-Schmidt adjoint map
$\cN^{\dagger}:\cL(B')\to \cL(A)$ admits the Kraus representation
\begin{equation}
\cN^{\dagger}(\sigma)
=\sum_k M_k^{\dagger}\sigma M_k .
\end{equation}
By applying  the transpose trick for the Kraus operators, we obtain
\begin{align}
    (\id \otimes \cN)\proj{\Phi_{d_A}}_{AA'}&=\sum_k (I_{A} \otimes M_k) \proj{\Phi_{d_A}}_{AA'} (I_{A} \otimes M_k^{\dagger})\\
    &=\frac{d_B}{d_A}\sum_k (M_k^T \otimes I_{B}) \proj{\Phi_{d_B}}_{B'B} (M_k^{*} \otimes I_{B}).
\end{align}
Let $\{N_k=M_k^T\}_k$ be the Kraus operators of a CP map $\cM$. 
Using the fact that $\{M_k^{\dagger}\}_k$ are Kraus operators of the adjoint maps, it is straightforward to verify that 
\begin{align}
    \cM= \; T_A\circ \cN^{\dagger} \circ T_{B'}. 
\end{align}
Substituting this into the previous display gives the claimed identity.
\end{proof}

We now recall the definition of Werner states. Let $A\simeq C\simeq \mathbb{C}^d$, and fix computational bases on both
systems. Let
\begin{equation}
|\Phi\rangle_{AA'}
=
\frac{1}{\sqrt d}\sum_{i=1}^d |i\rangle_A\otimes |i\rangle_{A'}
\end{equation}
be the normalized maximally entangled vector, and let
$\Phi_{AA'}=\proj{\Phi}_{AA'}$. For
\begin{equation}
-\frac{1}{d-1}\le t\le \frac{1}{d+1},
\end{equation}
define the transpose-depolarizing channel 
$\Gamma_t: {\mathcal{L}(A')\to \mathcal{L}(C)}$ by
\begin{equation}
\Gamma_t(X)
=
t X^T
+
(1-t)\operatorname{Tr}(X)\frac{I_C}{d},
\end{equation}
where the transpose is taken with respect to the fixed computational basis.
The corresponding Werner state on $A\otimes C$ is defined as
\begin{equation}
W_{AC}(t)
:=
(\operatorname{id}_A\otimes \Gamma_t)(\Phi_{AA'}).
\end{equation}
Using
\begin{equation}
(\operatorname{id}_A\otimes T)(\Phi_{AA'})
=
\frac{1}{d}\Pi_{AC},
\end{equation}
where $\Pi_{AC}$ is the swap operator on $A\otimes C$, we obtain
\begin{equation}
W_{AC}(t)
=
\frac{1-t}{d^2}I_{AC}
+
\frac{t}{d}\Pi_{AC}.
\end{equation}
\begin{lemma}\label{lemma: map for Werner}
Let $A\simeq C$ have dimension $d_A$, and let $B'_1\simeq B_1\simeq B'_2\simeq B_2$ have  dimension $d_B$. For a Werner state $W_{AC}(t)$ there is a purification
$\ket{W(t)}_{ACB_1B_2}$ such that the reduced state on $AB_1B_2$ is obtained as
\begin{align}
W^c_{AB_1B_2}(t):=
\Tr_C\,\left[\proj{W(t)}_{ACB_1B_2}\right]
&= d_A\; \left( (\Gamma_t^c)^{\dagger} \otimes \id_{B_1B_2}  \right)  (\Phi_{d_B}^{B'_1 B_1} \otimes \Phi_{d_B}^{B'_2 B_2})
\end{align}
where $\Phi_{d_B}^{B'_1 B_1}$ and $\Phi_{d_B}^{B'_2 B_2}$  are 
normalized
maximally entangled state of dimension $d_B= d_A=d$.
Furthermore, 
$(\Gamma_t^c)^{\dagger}:\cL(B'_1 \otimes B'_2)\to \cL(A)$ is the adjoint of the complementary channel of
$\Gamma_t$. This is a CP map obtained as $(\Gamma_t^c)^{\dagger}(Y) =\Tr_{B'_2}(S_t \; Y \;S_t)$, where 
\begin{equation}
S_t
=
(a^++a^-)I_{B'_1B'_2}
+
(a^+-a^-)\Pi_{B'_1B'_2}.
\end{equation}
Here $\Pi_{B'_1B'_2}$ is the flip operator on $B'_1\otimes B'_2$, and
\begin{equation}
a^+
=
\sqrt{\frac{1+(d-1)t}{4d}},
\qquad
a^- = \sqrt{\frac{1-(d+1)t}{4d}}.
\end{equation}
\end{lemma}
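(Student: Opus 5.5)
The plan is to build the purification from a Stinespring dilation of $\Gamma_t$ whose complementary channel is the Datta--Fukuda--Holevo map $\Gamma_t^c(X)=S_t(X\otimes I)S_t$. Then we move the channel from the $A'$ side to the $B'$ side of a maximally entangled state using Lemma~\ref{lemma: Transpose channels}.

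\textbf{Step 1 (the purification).} Following \cite{DattaFukudaHolevo2006,Holevo2007ComplementaryChannels}, there is an isometry $V:A'\to C\otimes B_1B_2$, with $B_1B_2\simeq\mathbb{C}^d\otimes\mathbb{C}^d$, such that
\begin{equation}
\Gamma_t(X)=\Tr_{B_1B_2}(VXV^\dagger),\qquad \Gamma_t^c(X)=\Tr_C(VXV^\dagger)=S_t(X\otimes I)S_t .
\end{equation}
Concretely, $V$ is built from $S_t$: one copy of the input is carried into the environment and acted on by $S_t$, while the output leg $C$ is produced by tracing the appropriate factor. I will take this representation from the cited works. As a sanity check, I would verify that $\Tr_C$ and $\Tr_{B_1B_2}$ of $V\cdot V^\dagger$ reproduce $\Gamma_t^c$ and $\Gamma_t$. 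This uses the same identities $\Pi(X\otimes I)\Pi=I\otimes X$ and $S_t^2=\frac{1-t}{d}I+t\Pi$ that appear in the proof of Lemma~\ref{lemma: Gamma_t^c and Delta_p^c property}. Define
\begin{equation}
\ket{W(t)}_{ACB_1B_2}:=(I_A\otimes V)\ket{\Phi}_{AA'} .
\end{equation}
Tracing out $B_1B_2$ gives $(\id_A\otimes\Gamma_t)(\Phi_{AA'})=W_{AC}(t)$, so this vector is a purification of the Werner state. Tracing out $C$ instead gives
\begin{equation}
W^c_{AB_1B_2}(t)=(\id_A\otimes\Gamma_t^c)(\Phi_{AA'}).
\end{equation}

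\textbf{Step 2 (transpose trick).} Apply Lemma~\ref{lemma: Transpose channels} with $\cN=\Gamma_t^c:\cL(A')\to\cL(B_1B_2)$. Here $d_A=d$ and the output dimension is $d^2$. In the product basis, the maximally entangled state of dimension $d^2$ on $B'_1B'_2\otimes B_1B_2$ equals $\Phi_{d}^{B'_1B_1}\otimes\Phi_{d}^{B'_2B_2}$. The prefactor $d_B/d_A$ of the lemma is therefore $d^2/d=d$, which is exactly the factor $d_A=d$ in the statement. We obtain
\begin{equation}
W^c_{AB_1B_2}(t)=d\,\bigl((T_A\circ(\Gamma_t^c)^\dagger\circ T_{B'})\otimes\id_{B_1B_2}\bigr)(\Phi_d^{B'_1B_1}\otimes\Phi_d^{B'_2B_2}).
\end{equation}

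\textbf{Step 3 (removing the transposes).} From the proof of Lemma~\ref{lemma: Gamma_t^c and Delta_p^c property}, $(\Gamma_t^c)^\dagger(Y)=\Tr_{B'_2}(S_tYS_t)$. The coefficients $a^\pm$ are real on the CP range, and $I$ and $\Pi$ are real symmetric matrices in the product basis, so $S_t^T=S_t$. The full transpose on $B'_1B'_2$ commutes with the partial trace in the sense $\Tr_2(Z^{T})=(\Tr_2 Z)^T$. Hence
\begin{equation}
T_A\bigl((\Gamma_t^c)^\dagger(Y^T)\bigr)=\Tr_2\bigl((S_tY^TS_t)^T\bigr)=\Tr_2(S_t^TYS_t^T)=\Tr_2(S_tYS_t)=(\Gamma_t^c)^\dagger(Y).
\end{equation}
Substituting this into Step 2 gives the claimed formula, together with the explicit Kraus-type form of $(\Gamma_t^c)^\dagger$. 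Complete positivity of $(\Gamma_t^c)^\dagger$ is then immediate from this form.

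The main obstacle is Step 1: pinning down an explicit isometry $V$ whose two marginals are exactly $\Gamma_t$ and the stated $S_t$-form of $\Gamma_t^c$, in the labeling where $C$ is the traced leg. I would first try writing $V\ket{\psi}=(I_C\otimes S_t)(\ket{\psi}\otimes\ket{\Phi}\cdots)$ with a suitably routed maximally entangled pair, then check both partial traces directly. If that routing fails, I would fall back on the fact that any two Stinespring dilations are related by a partial isometry on the environment, and absorb that partial isometry into $\Tr_C$. Steps 2 and 3 are routine bookkeeping of dimensions and transposes.
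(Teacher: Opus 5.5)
Your proposal is correct and follows essentially the same route as the paper. Both purify $W_{AC}(t)$ via a Stinespring isometry of $\Gamma_t$ whose complementary is the Datta--Fukuda--Holevo map $S_t(X\otimes I)S_t$. Both then apply Lemma~\ref{lemma: Transpose channels} with output dimension $d^2$ to obtain the prefactor $d$, and use $S_t^T=S_t$ to show $T_A\circ(\Gamma_t^c)^\dagger\circ T_{B'}=(\Gamma_t^c)^\dagger$. The paper does this last step via Kraus operators; you do it via $\Tr_2(Z^T)=(\Tr_2 Z)^T$, which is equivalent.

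Your Step~1 guess already works, so the fallback is unnecessary. Take $V|\psi\rangle=\sum_j |j\rangle_C\otimes S_t(|\psi\rangle\otimes|j\rangle)=\sqrt{d}\,(I_C\otimes S_t)(|\psi\rangle_{B_1}\otimes|\Phi\rangle_{B_2C})$.
\begin{itemize}
\item It is an isometry because $\Tr_2 S_t^2=I$.
\item Tracing out $C$ gives $S_t(X\otimes I)S_t$.
\item Tracing out $B_1B_2$ returns $\Gamma_t$, since $\Tr[S_t^2(X\otimes|j\rangle\langle k|)]=\tfrac{1-t}{d}\Tr(X)\delta_{jk}+t\,\langle k|X|j\rangle$.
\end{itemize}
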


\vspace{-0.7cm}

\begin{proof}
Let
\begin{equation}
W_{AC}(t)=
(\operatorname{id}_A\otimes \Gamma_t)(\Phi_d^{AA'})
\end{equation}
be the Werner state obtained from the transpose-depolarizing channel
$\Gamma_t:\cL(A')\to \cL(C)$. Let
\begin{equation}
V:A'\to  C\otimes B_1 \otimes B_2
\end{equation}
be a Stinespring isometry for $\Gamma_t$, so that
\begin{equation}
\Gamma_t(X)
=
\Tr_{B_1B_2}[VXV^\dagger].
\end{equation}
Define a purification of $W_{AC}(t)$ by
\begin{equation}
\ket{W(t)}_{ACB_1B_2}
:=
(\operatorname{id}_A\otimes V_{A'\to CB_1B_2})\ket{\Phi_{d_A}}_{AA'}.
\end{equation}
Indeed,
\begin{align}
\Tr_{B_1B_2}\,\left[\proj{W(t)}_{ACB_1B_2}\right]
&=(\operatorname{id}_A\otimes \Tr_{B_1B_2})
\left[
(\operatorname{id}_A\otimes V)
\Phi_{d_A}^{AA'}
(\operatorname{id}_A\otimes V^\dagger)
\right] \\
&=
(\operatorname{id}_A\otimes \Gamma_t)(\Phi_{d_A}^{AA'}) \\
&=
W_{AC}(t).
\end{align}
Hence $\ket{W(t)}_{ACB_1B_2}$ is a purification of $W_{AC}(t)$.
The reduced state on $AB_1B_2$ is obtained by tracing out $C$. Define
\begin{align}
W^c_{AB_1B_2}(t)
&:=\Tr_C\,\left[\proj{W(t)}_{ACB_1B_2}\right] \\
&=
(\operatorname{id}_A\otimes \Gamma_t^c)(\Phi_{d_A}^{AA'}),
\end{align}
where in the second line the complementary channel associated with the Stinespring isometry
$V$ is
\begin{equation}
\Gamma_t^c(X)
=
\Tr_C[VXV^\dagger].
\end{equation}
The complementary channel of the transpose-depolarizing channel $\Gamma_t^c: \cL(A')\to \cL(B_1 \otimes B_2)$ can be
chosen in the minimal representation of Datta-Fukuda-Holevo in \cite{DattaFukudaHolevo2006,Holevo2007ComplementaryChannels} 
\begin{equation}
\Gamma_t^c(\rho)
=
S_t(\rho\otimes I)S_t^\dagger,
\end{equation}
where
\begin{equation}
S_t
=
(a^++a^-)I_{B_1B_2}
+
(a^+-a^-)\Pi_{B_1B_2}.
\end{equation}
Here $\Pi_{B_1B_2}$ is the flip operator, 
and $d=d_A=d_B$ 
and
\begin{equation}
a^+
=
\sqrt{\frac{c^+}{2(d+1)}},
\qquad
a^-
=
\sqrt{\frac{c^-}{2(d-1)}},
\end{equation}
with
\begin{equation}
c^+
=
\frac{d^2-1}{2d}
\left(\frac{1}{d-1}+t\right),
\qquad
c^-
=
\frac{d^2-1}{2d}
\left(\frac{1}{d+1}-t\right).
\end{equation}
Equivalently,
\begin{equation}
a^+
=
\sqrt{\frac{1+(d-1)t}{4d}},
\qquad
a^-
=
\sqrt{\frac{1-(d+1)t}{4d}}.
\end{equation}
The Kraus operators of the map $\Gamma_t^c$ are $\{S_t(I_1 \otimes \ket{j})\}_{j=1}^d$ where $\{\ket{j}\}_{j=1}^d$ denotes the computational basis.
Hence, the Kraus operators of the adjoint map $(\Gamma_t^c)^{\dagger}$ are $\{(I_1 \otimes \bra{j})S_t\}_{j=1}^d$, which follows because $S_t^T=S_t^{\dagger}=S_t$.
Using the Kraus representation of the channel, it is straightforward to verify that
\begin{equation}
T_A\circ (\Gamma_t^c)^{\dagger}\circ T_{B'} = (\Gamma_t^c)^{\dagger}.
\end{equation}
Identifying
\begin{equation}
\Phi_{d_B^2}^{(B'_1B'_2)(B_1B_2)}=
\Phi_{d_B}^{B'_1B_1}\otimes \Phi_{d_B}^{B'_2B_2}
\end{equation}
with respect to the product computational basis, we may apply Lemma~\ref{lemma: Transpose channels} to the channel
\begin{equation}
\Gamma_t^c:\mathcal L(A')\to \mathcal L(B_1\otimes B_2),
\end{equation}
whose output dimension is $d_B^2=d_A^2$. Finally we obtain
\begin{align}
W^c_{AB_1B_2}(t)
&= d\; \left( (\Gamma_t^c)^{\dagger} \otimes \id_{B_1B_2}  \right)  ((\Phi_d^{B'_1 B_1} \otimes \Phi_d^{B'_2 B_2}))
\end{align}
where $(\Gamma_t^c)^{\dagger}(Y)= \sum_j (I_{B'_1} \otimes \bra{j})S_t \; Y \;S_t(I_{B'_1} \otimes \ket{j}) =\Tr_{B'_2}(S_t \; Y \;S_t)$.

\end{proof}

\bibliographystyle{unsrt}
\bibliography{max-multip=2}

\end{document}